\def\ps@pprintTitle{%
 \let\@oddhead\@empty
 \let\@evenhead\@empty
 \def\@oddfoot{}%
 \let\@evenfoot\@oddfoot}
\newcommand\blue[1]{{\color{black}{#1}}}
\newcommand{\cmmnt}[1]{\ignorespaces}
\def\wh{\widehat}
\def\rank{\hbox{rank}}
\def\boxit#1{\vbox{\hrule\hbox{\vrule\kern6pt
          \vbox{\kern6pt#1\kern6pt}\kern6pt\vrule}\hrule}}
\def\cov{\hbox{cov}}
\def\trace{\mbox{trace}}
\def\bse{\begin{eqnarray*}}
\def\ese{\end{eqnarray*}}
\def\be{\begin{eqnarray}}
\def\ee{\end{eqnarray}}
\def\bq{\begin{equation}}
\def\eq{\end{equation}}
\def\bse{\begin{eqnarray*}}
\def\ese{\end{eqnarray*}}
\def\wh{\widehat}
\def\Proof{\textbf{Proof: }}
\def\btheta{{\boldsymbol \theta}}
\newtheorem{theorem}{Theorem}
\newtheorem{lemma}{Lemma}[section]
\newtheorem{rem}{Remark}[section]
\newcommand{\bI}{\mathbf{I}}
\newcommand{\bZ}{\mathbf{Z}}
\newcommand{\bC}{\mathbf{C}}
\newcommand{\bM}{\mathbf{M}}
\newcommand{\bL}{\mathbf{L}}
\newcommand{\bD}{\mathbf{D}}
\newcommand{\bE}{\mathbf{E}}
\newcommand{\bW}{\mathbf{W}}
\newcommand{\bz}{\mathbf{z}}
\newcommand{\bs}{\mathbf{s}}
\newcommand{\bv}{\mathbf{v}}
\newcommand{\bw}{\mathbf{w}}
\newtheorem{defi}{Definition}[section]
\def\supp{\mathop{\rm supp}\nolimits}
\def\cov{\mathop{\rm cov}\nolimits}
\def\diam{\mbox{diam}}
\def\dist{\mbox{dist}}
\def\trace{\mbox{trace}}
\newcommand{\mydet}[1]{\vert #1 \vert}
\DeclareMathAlphabet{\mathitbf}{OML}{cmm}{b}{it}
\def\thetab{\bm{\theta}}
\def\LL{\mathcal{L}}
\def\bh{\mathbf{h}}
\def\H{\mathcal{H}}
\def\RR{\mathbb{R}}
\def\Proof{\textbf{Proof: }}
\def\jmath{j}
\begin{document}

\begin{frontmatter}

\title{Likelihood Approximation With Hierarchical Matrices For Large Spatial Datasets}
%\tnotetext[t1]{Can put here something}
%\tnotetext[t1]{Fully documented templates are available in the elsarticle package on \href{http://www.ctan.org/tex-archive/macros/latex/contrib/elsarticle}{CTAN}.}

%% Group authors per affiliation:
\author[Alex]{Alexander Litvinenko\corref{cor1}}
\ead{alexander.litvinenko@kaust.edu.sa}
\ead[url]{bayescomp.kaust.edu.sa, sri-uq.kaust.edu.sa}
%\ead[url]{https://ecrc.kaust.edu.sa/Pages/Litvinenko.aspx}
\author[Alex]{Ying Sun}
\ead{ying.sun@kaust.edu.sa}
\ead[url]{es.kaust.edu.sa}
\author[Alex]{Marc G. Genton}
\ead{marc.genton@kaust.edu.sa}
\ead[url]{stsda.kaust.edu.sa}
\author[Alex]{David E. Keyes}
\ead{david.keyes@kaust.edu.sa}
\ead[url]{ecrc.kaust.edu.sa}

\address[Alex]{King Abdullah University of Science and Technology (KAUST),
Thuwal 23955-6900, Saudi Arabia}
%\fntext[myfootnote]{Since 1880.}

%\address{B}
%% or include affiliations in footnotes:
%\author[mymainaddress,mysecondaryaddress]{Elsevier Inc}

%\author[mysecondaryaddress]{Global Customer Service\corref{mycorrespondingauthor}}
\cortext[cor1]{Corresponding author}
%\address[Alex]{SRI - Center for Uncertainty Quantification in Computational Science $\&$ Engineering, King Abdullah University of Science and Technology,
%Thuwal 23955-6900, Saudi Arabia}

\begin{abstract}
We consider measurements to estimate the unknown parameters (variance, smoothness, and covariance length) of a covariance function by maximizing the joint Gaussian log-likelihood function. To overcome cubic complexity in the linear algebra, we approximate the discretized covariance function in the hierarchical ($\mathcal{H}$-) matrix format. The $\mathcal{H}$-matrix format has a log-linear computational cost and storage $\mathcal{O}(kn \log n)$, where the rank $k$ is a small integer, and $n$ is the number of locations.
The $\mathcal{H}$-matrix technique allows us to work with general covariance matrices efficiently, since $\mathcal{H}$-matrices can approximate inhomogeneous covariance functions, with a fairly general mesh that is not necessarily axes-parallel, and neither the covariance matrix itself nor its inverse has to be sparse. We research how the $\mathcal{H}$-matrix approximation error influences on the estimated parameters.
We demonstrate our method with Monte Carlo simulations with known true values of parameters and an application to soil moisture data with unknown parameters.
The C, C++ codes and data are freely available.
\end{abstract}

\begin{keyword}
Computational statistics; Hierarchical matrix; Large dataset; Mat\'ern covariance; Random Field; Spatial statistics.
\end{keyword}

\end{frontmatter}

\section{Introduction}\label{sec:intro}
The number of measurements that must be processed for statistical modeling in environmental applications is usually very large, and these measurements may be located irregularly across a given geographical region. This makes the computing procedure expensive and the data difficult to manage.
These data are frequently modeled as a realization from a stationary Gaussian spatial random field. 
Specifically, we let $\bZ=\{Z(\bs_1),\ldots,Z(\bs_n)\}^\top$, where $Z(\bs)$ is a Gaussian random field indexed by a spatial
location $\bs \in \Bbb{R}^d$, $d\geq 1$. Then, we assume that $\bZ$ has mean zero and a stationary parametric covariance function 
$C(\bh;\btheta)=\cov\{Z(\bs),Z(\bs+\bh)\}$, where $\bh\in\Bbb{R}^d$ is a spatial lag vector and $\btheta\in\Bbb{R}^q$ is
the unknown parameter vector of interest. Statistical inferences about $\btheta$ are often based on the Gaussian
log-likelihood function:
\begin{equation}
\label{eq:likeli}
\LL(\thetab)=-\frac{n}{2}\log(2\pi) - \frac{1}{2}\log \mydet{\bC(\thetab)}-\frac{1}{2}\bZ^\top \bC(\thetab)^{-1}\bZ,
\end{equation}
where the covariance matrix $\bC(\thetab)$ has entries $C(\bs_i-\bs_j;\btheta)$, $i,j=1,\ldots,n$. The maximum likelihood
estimator of $\btheta$ is the value $\wh \btheta$ that maximizes (\ref{eq:likeli}). When the sample size $n$ is large,
the evaluation of (\ref{eq:likeli}) becomes challenging, due to the computation of the inverse and log-determinant of the $n$-by-$n$ covariance matrix
$\bC(\thetab)$. Indeed, this requires ${\cal O}(n^2)$ memory and ${\cal O}(n^3)$ computational steps. Hence, scalable methods that can process larger sample sizes are needed.

Stationary covariance functions, discretized on a rectangular grid, have block Toeplitz structure. This structure can be further extended to a block circulant form %\cmmnt{\citep}
and resolved with the Fast Fourier Transform (FFT) \cite{WHITTLE54, DAHLHAUS87, guinness2017circulant, stroud2017bayesian, Dietrich2}. The computing cost, in this case, is $\mathcal{O}(n\log n)$. However, this approach either does not work for data measured at irregularly spaced locations or requires expensive, non-trivial modifications.

%During the past two decades, a large amount of research has been devoted to tackling the aforementioned computational challenge of developing scalable methods: for example, low-rank approximations and domain decomposition approaches \cmmnt{\citep}\cite{abdulah2018exageostat, abdulah2018tile}, low-rank tensor methods \cite{litv17Tensor},
During the past two decades, a large amount of research has been devoted to tackling the aforementioned computational challenge of developing scalable methods: for example,  low-rank tensor methods \cite{litv17Tensor, nowak2013kriging},
covariance tapering \cmmnt{\citep}\cite{Furrer2006,Kaufman2008, sang2012full}, likelihood approximations in both the spatial \cmmnt{\citep}\cite{Stein:Chi:Wetly:2004, Stein2013} and spectral  \cmmnt{\citep}\cite{Fuentes:2007} domains, 
latent processes such as Gaussian predictive processes \cmmnt{\citep}\cite{Banerjee:Gelfand:Finley:Sang:2008} and fixed-rank kriging
\cmmnt{\citep}\cite{Cressie:Johannesson:2008}, and Gaussian Markov random-field approximations \cmmnt{\citep}\cite{rue2002fitting,rue2005gaussian, fuglstad2015does}; see \cmmnt{\citet}\cite{Sun2012} for a review. 
A generalization of the Vecchia approach and a general Vecchia framework was introduced in \cite{Katzfuss17, Vecchia88}.
Each of these methods has its strengths and drawbacks. For instance, covariance tapering sometimes performs even worse than assuming independent blocks in the covariance  \cmmnt{\citet}\cite{stein2013statistical}; low-rank approximations have their own limitations \cmmnt{\citet}\cite{stein2014limitations}; and Markov models  depend on the observation locations, requiring irregular locations to be realigned on a much finer grid with missing values~\cmmnt{\citep}\cite{sun2016statistically}. A matrix-free approach for solving the multi-parametric Gaussian maximum likelihood problem was developed in \cite{Anitescu2012}. To further improve on these issues,
other methods that have been recently developed include the nearest-neighbor Gaussian process models \cmmnt{\citep}\cite{Datta:Banerjee:Finley:Gelfand:2015}, low-rank update \cite{SaibabaKitanidis15UQ}, multiresolution Gaussian process models \cmmnt{\citep}\cite{nychka2015multiresolution}, equivalent kriging \cmmnt{\citep}\cite{kleiber2015equivalent}, 
multi-level restricted Gaussian maximum likelihood estimators \cmmnt{\citep}\cite{Castrillon16}, and hierarchical low-rank approximations \cmmnt{\citep}\cite{Huang2016}. Bayesian approaches to identify unknown or uncertain parameters could be also applied \cite{rosic2013parameter, rosic2012sampling, matthies2012parametric, litvinenko2013inverse, matthies2012parametric, pajonk2012deterministic}. 

In this paper, we propose using the so-called hierarchical ($\H$-) matrices for approximating dense matrices with numerical complexity and storage $\mathcal{O}(k^{\alpha}n \log^{\alpha} n)$, where $n$ is the number of measurements; $k\ll n$ is the rank of the hierarchical matrix, which defines the quality of the approximation; and $\alpha=1$ or 2. $\H$-matrices are suitable for general grids and are also working well for large condition numbers. Previous results \cite{HackHMEng} show that the $\H$-matrix technique is very stable when approximating the matrix itself \cite{Li14, saibaba2012application, BallaniKressner, harbrecht2015efficient, ambikasaran2013large, BoermGarcke2007, SaibabaKitanidis12}, its inverse \cite{Ambikasaran16, ambikasaran2013large, bebendorf2003existence}, its Cholesky decomposition \cite{BebendorfSpEq16, bebendorf2007Why}, and the Schur complement (i.e., the conditional covariance matrix) \cite{HackHMEng, MYPHD, litvinenko2017partial}. The $\H$-matrix technique for inference including parameter estimation and MLE in Gaussian process regression has also been
proposed in \cite{ambikasaran2013large, Ambikasaran16}, and uncertainty quantification and conditional realizations in \cite{SaibabaKitanidis12}.
%
%Hierarchical \\
%(Hierarchical matrix calculus)};
\begin{figure}
 \centering
 \includegraphics[width=3.5in]{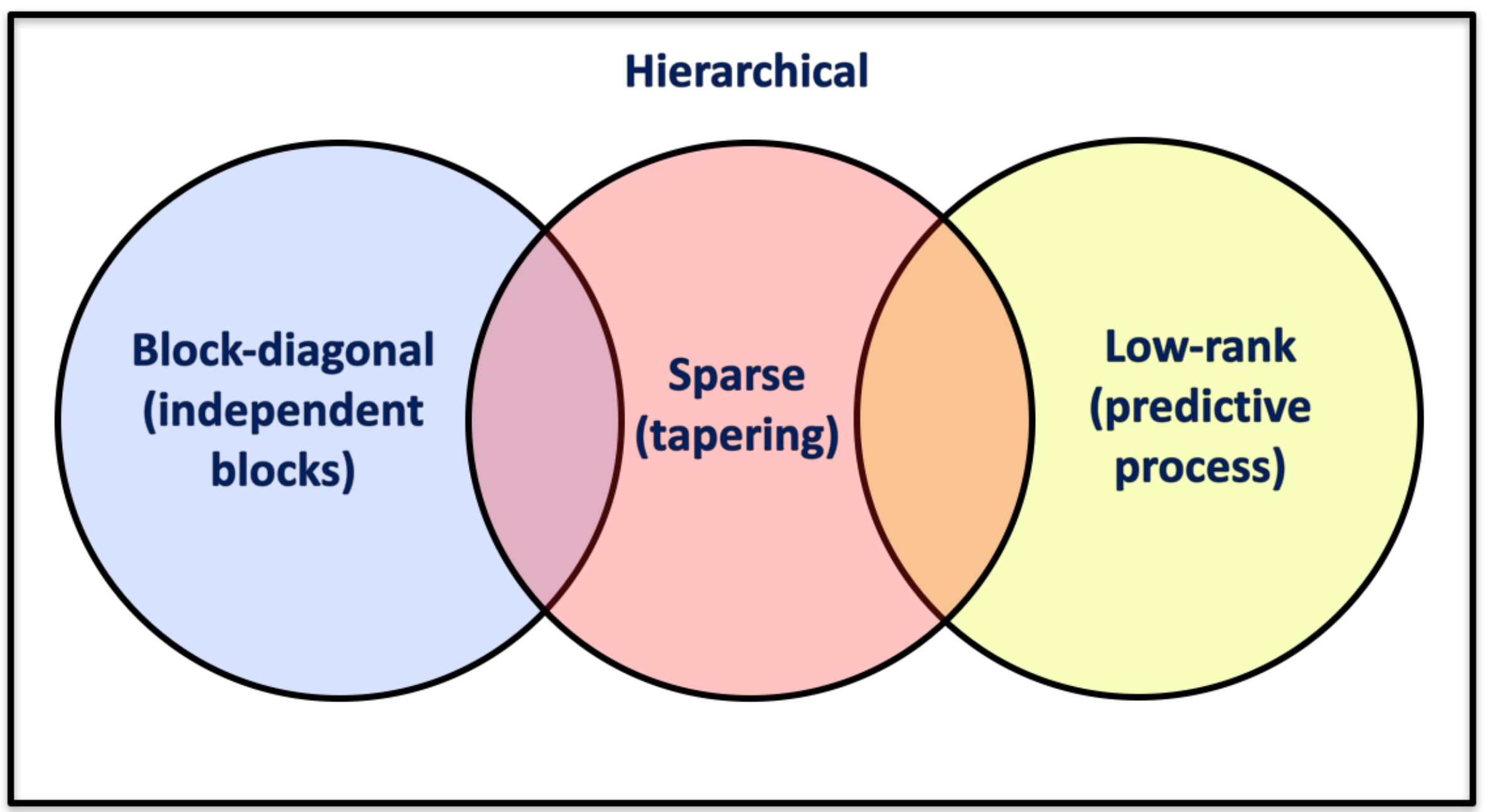}
 \caption{Scheme of approximations to the underlying dense covariance matrix (and corresponding methods). Hierarchically low-rank matrices generalize purely structure-based or global schemes with analysis-based locally adaptive schemes.}
\label{fig:Hscheme}
 \end{figure}

Other motivating factors for applying the $\H$-matrix technique include the following:
\begin{enumerate}
\item $\H$-matrices are more general than other compressed matrix representations (see scheme in Fig.~\ref{fig:Hscheme});
\item The $\H$-matrix technique allows us to compute not only the matrix-vector products (e.g., like fast multipole methods), but also the more general classes of functions, such as $\bC(\thetab)^{-1}$, $\bC(\thetab)^{1/2}$, $\mydet{\bC(\thetab)}$, $\exp\{\bC(\thetab)\}$, resolvents, Cholesky decomposition and, many others \cmmnt{\citep}\cite{HackHMEng};
\item The $\H$-matrix technique is well-studied and has a solid theory, many examples, and multiple sequential and parallel implementations;
\item The $\H$-matrix accuracy is controllable by the rank, $k$, or by the accuracy $\varepsilon$. The full rank gives an exact representation;
\item The $\H$-matrix technique preserves the structure of the matrix after 
the Cholesky decomposition and the inverse have been computed (see Fig.~\ref{fig:Hexample3});
\item There are efficient rank-truncation techniques to keep the rank small after the matrix operations; for instance, the Schur complement and matrix inverse can be approximated again in the $\H$-matrix format.
\end{enumerate}

Figure~\ref{fig:C} shows an $\mathcal{H}$-matrix approximation of the covariance matrix from a discretized ($n=16,641$) exponential covariance function on the unit square, its Cholesky approximation~\ref{fig:L}, and its $\H$-matrix inverse~\ref{fig:iC}.
The dark (or red) blocks indicate the dense matrices and the grey (green) blocks indicate the rank-$k$ matrices; the number inside each block is its rank. The steps inside the blocks show the decay of the singular values in $\log$ scale. The white blocks are empty.
\begin{figure}[htbp!]
 \centering
    \begin{subfigure}[b]{0.32\textwidth}
     \centering
      \caption{}
       \includegraphics[width=5cm]{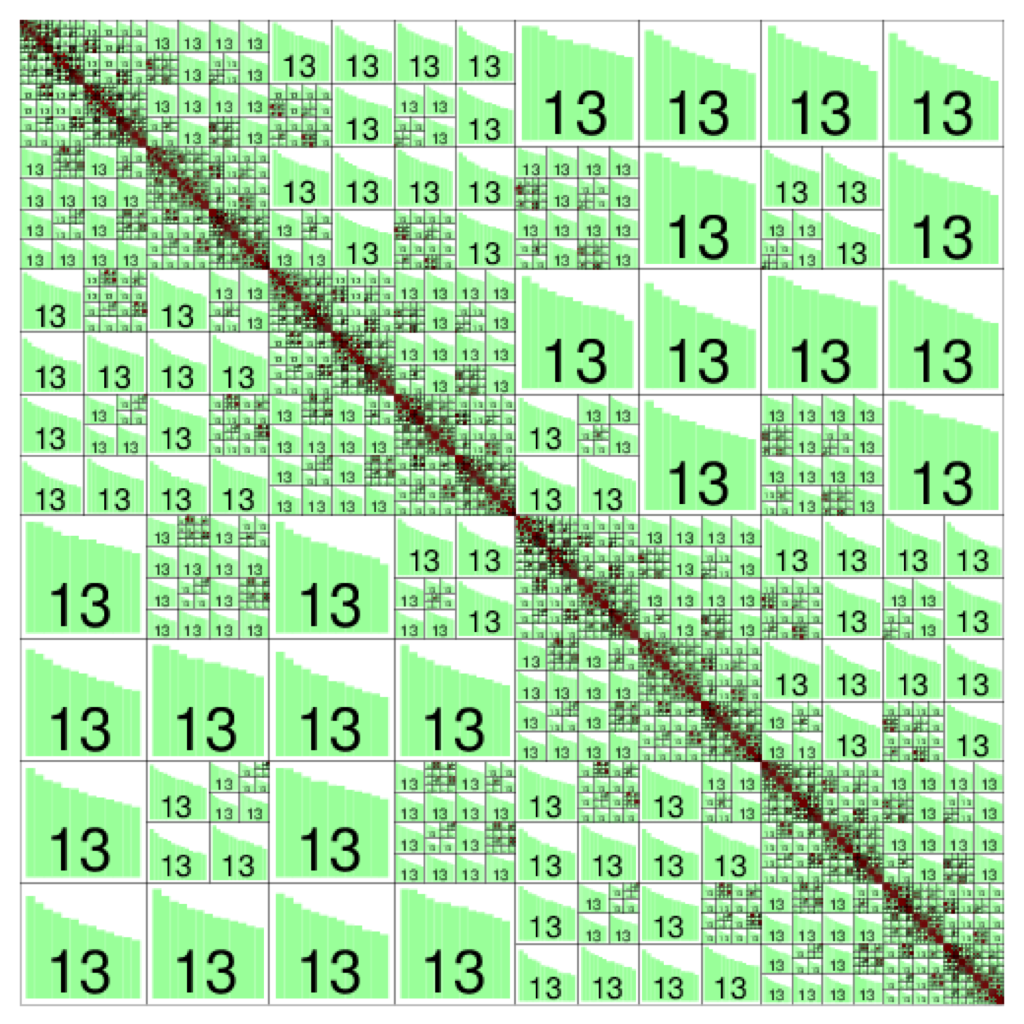}\vspace{-.1cm}
        \label{fig:C}
    \end{subfigure}
   \begin{subfigure}[b]{0.32\textwidth}
     \centering
        \caption{}
 \includegraphics[width=5cm]{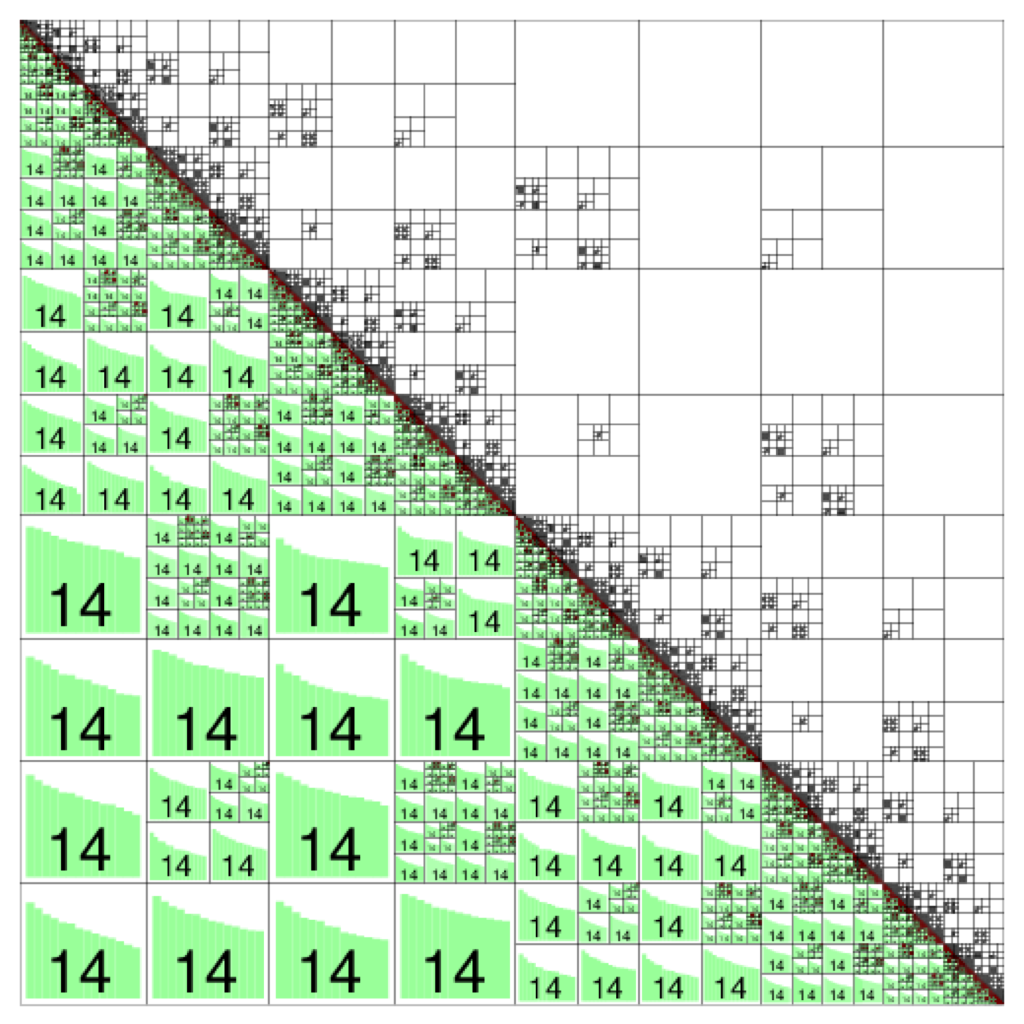}\vspace{-.1cm}
        \label{fig:L}
    \end{subfigure}
   \begin{subfigure}[b]{0.32\textwidth}
     \centering
       \caption{}
 \includegraphics[width=5cm]{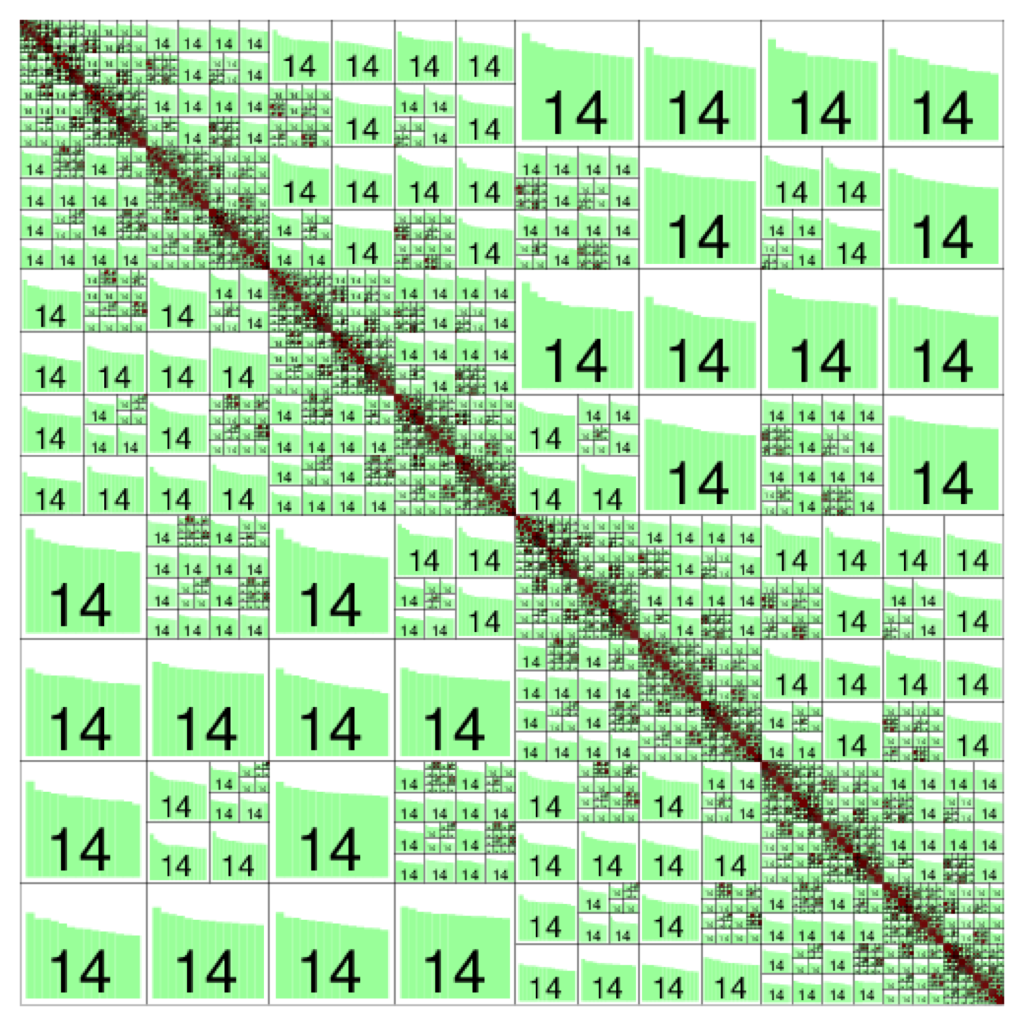}\vspace{-.1cm}
        \label{fig:iC}
    \end{subfigure}
 \caption{(a) The $\mathcal{H}$-matrix approximation of an $n\times n$ covariance matrix from a discretized exponential covariance function on the unit square, with $n=16,641$, unit variance, and length scale of 0.1. The dimensions of the densest (dark) blocks are $32\times 32$ and the maximal rank is $k=13$. (b) An example of the corresponding Cholesky factor with maximal rank $k=14$. (c) The inverse of the exponential covariance matrix (precision matrix) with maximal rank $k=14$.}
 \label{fig:Hexample3}
 \end{figure}
 %
%
%
%On each iteration step the covariance matrix $\bC$, its determinant, and its Cholesky decomposition for a new parameter set are %recomputed in the $\mathcal{H}$-matrix format with a log-linear computational cost and storage $\mathcal{O}(kn \log^{\alpha} n)$, where %$k\ll n$. 
%
%We note also, that $\H$-matrix technique does not require existence of a global low-rank matrix approximation $\bC\approx \bU\bSigma {\bV}^\top %$, where $\bSigma \in \mathbb{R}^{r\times r}$, $\bU \in \mathbb{R}^{n\times r}$, ${\bV}^\top \in \mathbb{R}^{r\times n}$, $r\ll n$. 
%
In the last few years, there has been great interest in numerical methods for representing and approximating large covariance matrices in the applied mathematics community \cmmnt{\citep}\cite{Rasmussen05, BoermGarcke2007,saibaba2012application,nowak2013kriging,ambikasaran2013large,ambikasaran2014fast,si2014memoryef,BallaniKressner}. 

Recently, the maximum likelihood estimator for parameter-fitting Gaussian observations with a Mat\'ern covariance matrix was computed via a framework for unstructured observations in two spatial dimensions, which allowed the evaluation of the log-likelihood and its gradient with computational complexity $\mathcal{O}(n^{3/2})$; the method relied on the recursive skeletonization factorization procedure \cite{ho2015hierarchical, martinsson2005fast}. However, the consequences of the approximation on the maximum likelihood estimator were not studied.

In \cite{BoermGarcke2007}, the authors computed the exact solution of a Gaussian process regression by replacing the kernel matrix
with a data-sparse approximation, which they called the $\H^2$-matrix technique, cf. \cite{Li14} . 
It is more complicated than $\H$-matrix technique, but has computational complexity and storage cost of  $\mathcal{O}(kn)$.

\blue{The same $\H^2$-matrix technique for solving large-scale 
stochastic linear inverse problems with applications in subsurface modeling was demonstrated in \cite{ambikasaran2013large}. 
The authors explored the sparsity
of the underlying measurement operator, demonstrated the effectiveness by solving a realistic crosswell
tomography problem, quantified the uncertainty in the solution and provided an optimal capture
geometry for this problem. Their algorithm
was implemented in C++ and is available online.}

The authors of \cite{si2014memoryef} observed that the structure of shift-invariant kernels changed from low-rank to block-diagonal (without any low-rank structure) when they varied the scale parameter. Based on this observation, they proposed a new kernel approximation algorithm, which they called the Memory-Efficient Kernel Approximation. That approximation considered both the low-rank and the clustering structure of the kernel matrix. They also showed that the resulting algorithm outperformed state-of-the-art low-rank kernel approximation methods regarding speed, approximation error, and memory usage.
The BigQUIC method for a sparse inverse covariance estimation of a million variables was introduced in \cite{QUIC13}.
This method could solve $\ell_1$-regularized Gaussian Maximum Likelihood Estimation (MLE-) problems with dimensions of one-million.
In \cite{BallaniKressner}, the authors estimated the covariance matrix of a set of normally distributed random vectors. To overcome large numerical issues in the high-dimensional regime, they computed the (dense) inverses of sparse covariance matrices using $\H$-matrices. This explicit representation enables them to ensure positive definiteness of each Newton-like iterate in the resulting convex optimization problem. In conclusion, the authors compare their new $\H$-QUIC method with the existing BIG-QUIC method~\cite{QUIC13}.
In \cite{harbrecht2015efficient}, the authors offered to use $\H$-matrices for the approximation of random fields. In particular, they approximated Mat\'ern covariance matrices in the $\H$-matrix format, suggested the pivoted $\H$-Cholesky method and provided an a posteriori error estimate in the trace norm.
In \cite{saibaba2012application, SaibabaKitanidis12}, the authors applied $\H$-matrices to linear inverse problems in large-scale geostatistics.
\blue{They started with a detailed explanation of the $\H$-matrix technique, then reduced the cost of dense matrix-vector products, combined $\H$-matrices with a matrix-free Krylov subspace and solved the system of equations that arise from
the geostatistical approach. They illustrated the performance of their algorithm on an application, for
monitoring CO$_2$ concentrations using crosswell seismic tomography. The largest problem size was $n=10^6$. The code is available online.}

\blue{In \cite{ambikasaran2014fast}, the authors considered the new matrix factorization for Hierarchical Off-Diagonal Low-Rank (HODLR) matrices. A HODLR matrix is an $\H$-matrix with the weak admissibility condition (see Appendix~\ref{sec:adm}). All off-diagonal sub-blocks of a HODLR matrix are low-rank matrices, and this fact significantly simplifies many algorithms. 
The authors showed that typical covariance functions can be hierarchically factored into a product of block low-rank updates of the identity matrix, yielding an $\mathcal{O}(n \log^2 n)$ algorithm for inversion, and used this product for evaluation of the determinant with the cost $\mathcal{O}(n \log n)$, with further direct calculation of probabilities in high dimensions. Additionally, they used this HODLR factorization to speed up prediction and to infer unknown hyper-parameters. The provided formulas for the factorization and the determinant hold only for HODLR matrices. How to extend this case to general $\H$-matrices, where off-diagonal block are allowed to be also $\H$-matrices, is not so clear. For instance, conditional covariance matrices may easily lose the HODLR structure. The largest demonstrated problem size was $n=10^6$. The authors made their code freely available on GitHub.}
%In this paper, we demonstrate that with the computing cost $\mathcal{O}(kn\log n)$ and where $k\ll n$ (usually $k$ is a small integer), we can work with much more general (i.e., not limited to sparse/circulant/structured) covariance matrices.
%The storage requirement for $\H$-matrix approximation in Fig.~\ref{fig:Hexample3} is 10 MB, whereas the storage of the original matrix without compression is 143 MB. 
%The relative errors are 4.4e-4 and 4.1e-4 in spectral and in Frobenius norms, respectively.
% Spec. norm 2 _ = 2387.87, norm 2 _ = 2387.87 
% Spec. norm C_2 - S_2 = |2386.83 - 2387.87|/|2386.83| = 0.000437056 
% Frob. norm C_F - S_F = |2551.98 - 2553.02|/|2551.98| = 0.000405757 

\textcolor{black}{Summarizing the literature review, we conclude that $\H$-matrices are already known to the statistical community. It is well-known that Mat\'ern covariance functions could be approximated in the $\H$-matrix format ($\H$, $\H^2$, HODLR). In \cite{BallaniKressner}, the authors used $\H$-matrices to compute derivatives of the likelihood function. Approximation of the inverse and Cholesky factorization is working in practice, but requires more theoretical analysis for error estimations (the existing theory is mostly developed for elliptic PDEs and integral equations). The influence of the $\H$-matrix approximation error on the quality of the estimated unknown parameters is not well studied yet. To research this influence with general $\H$-matrices (and not only with the simple HODLR format) is the main contribution of this work.}

The rest of this paper is structured as follows. In Section~\ref{sec:Hcov}, we remind the $\H$-matrix technique and review the $\H$-matrix approximation of Mat\'ern covariance functions. 
Section~\ref{sec:HAGL} contains the hierarchical approximation of the Gaussian likelihood function, the algorithm for parameter estimation, and a study of the theoretical properties and complexity of the approximate log-likelihood \cite{Ipsen15, Ipsen05}. Results from Monte Carlo simulations, where the true parameters are known, are reported in Section~\ref{sec:MC}.
An application of the hierarchical log-likelihood methodology to soil moisture data, where parameters are unknown, is considered in
Section~\ref{sec:moisture}. We end the paper with a discussion in Section~\ref{sec:Conclusion}. The derivations of the theoretical results are provided in the Appendix~\ref{appendix:A}, more details about $\H$-matrices in Appendix~\ref{sec:adm}.

%\cite{Castrillon2016}  \cite{Castrillon15}

%\section{Hierarchical Gaussian Likelihood Methodology} \label{sec:HLMethodology}
\section{Hierarchical Approximation of Covariance Matrices}
\label{sec:Hcov}
Hierarchical matrices have been described in detail \cite{HackHMEng, Part1, GH03, weak, MYPHD}. Applications of the $\H$-matrix technique to the covariance matrices can be found in \cite{Ambikasaran16, SaibabaKitanidis12, saibaba2012application, BoermGarcke2007, harbrecht2015efficient, ambikasaran2013large, ambikasaran2014fast, khoromskij2009application}. There are many implementations exist. To our best knowledge, the HLIB library\footnote{http://www.hlib.org/} is not supported anymore, but it could be used very well for academic purposes, the $\H^2$-library\footnote{https://github.com/H2Lib, developed by Steffen Boerm and his group, Kiel, Germany} operates with $\H^2$-matrices, is actively supported and contains some new idea like parallel implementation on GPU. The HLIBPro library\footnote{https://www.hlibpro.com/, developed by R. Kriemann, Leipzig, Germany} is actively supported commercial, robust, parallel, very tuned, well tested, but not open source library. It contains about 10 open source examples, which can be modified for the user's personal needs. There are some other implementations (e.g., from M. Bebendorf or Matlab implementations), but we do not have experience with them.
To new users, we recommend to start with free open source HLIB or H2Lib libraries and then to move to HLIBPro, which is free for academic purposes.\\ 

\subsection{Hierarchical matrices}
\label{sec:Happrox}

%In general, covariance matrices are dense and, therefore, require $\mathcal{O}(n^2)$ units of memory for the storage and $\mathcal{O}(n^2)$ FLOPS for the matrix-vector multiplication. 
In this section, we review the definition of $\H$-matrices and show how to approximate covariance matrices using the $\H$-matrix format. The $\H$-matrix technique is defined as a hierarchical partitioning of a given matrix
into sub-blocks followed by the further approximation of the majority of these sub-blocks by low-rank matrices (see Fig.~\ref{fig:Hexample3}).
To define which sub-blocks can be approximated well by low-rank matrices and which cannot, a so-called admissibility condition is used (see more details in Appendix~\ref{sec:adm}). There are different admissibility conditions possible: weak, strong, domain decomposition  based, and some others (Fig.~\ref{fig:Hexample_adm}). The admissibility condition may depend on parameters (e.g., the covariance length).

To define an $\H$-matrix some other notions are needed: index set $I$, clusters (denoted by $\tau$ and $\sigma$), cluster tree $T_{I}$, block cluster tree $T_{I\times I}$, admissibility condition (see Appendix~\ref{sec:adm}). 
We start with the index set $I=\{0,\ldots,n-1\}$, which corresponds to the available measurements in $n$ locations.
After the hierarchical decomposition of the index set $I$ into sub-index sets has been completed (or in other words, a cluster tree $T_I$ is constructed), 
the block cluster tree (denoted by $T_{I\times I}$, see more details in Appendix~\ref{sec:adm}) together with the admissibility condition decides which sub-blocks can be approximated by low-rank matrices. For definitions and example of cluster trees and corresponding block cluster trees (block partitioning) see Appendix~\ref{sec:adm} and Fig.~\ref{fig:ct_bct}.

On the first step, the matrix is divided into four sub-blocks. The hierarchical tree $T_{I \times I}$ tells how to divide. Then each (or some) sub-block(s) is (are) divided again and again until sub-blocks are sufficiently small. The resulting division is hierarchical. The procedure stops when either one of the sub-block sizes is $n_{\mbox{min}}$ or smaller ($n_{\mbox{min}}\leq 128$), or when this sub-block can be approximated by a low-rank matrix.

Another important question is how 
to compute these low-rank approximations. \textcolor{black}{The HLIB library uses a well-known method, called Adaptive Cross Approximation (ACA)} algorithm \cite{TyrtyshACA,ACA,Winter}, which performs the approximations with a linear complexity $\mathcal{O}(kn)$ in contrast to $\mathcal{O}(n^3)$ by SVD.

%
%The approximation on the left has a more complex block structure, but has a smaller maximal rank $k$ (=6). The approximation on the right has a less %complex block structure, but the maximal rank $k$ is larger (=20). 
%
%
%\textcolor{red}{In Fig.~\ref{fig:Hexample} one can see convergence of $\H$-matrix approximation relative error in $\log$ scale in spectral norm for $\ell=0.1$ and various $\nu=\{1.5,1,0.5\}$ (left) and $\ell=0.5$, $\nu=\{1.5,1,0.5\}$ (right). The convergence is exponential w.r.t. the rank $k$.}
%%%
%\begin{figure}[htbp!]
%\centering
%\includegraphics[width=3in,height=3in]{convergence_nu_L01_weak.eps}
%\includegraphics[width=2.9in,height=2.9in]{convergence_nu_L05_weak.eps}
%\caption{Convergence of $\H$-matrix approximation error ($\log$ of relative error) in spectral norm for $\ell=0.1$ and different $\nu=\{1.5,1,0.5\}$ (left) and $\ell=0.5$, $\nu=\{1.5,1,0.5\}$ (right).}
%\label{fig:Hexample}
%\end{figure}
%
%
%
\begin{rem} 
Errors in the $\H$-matrix approximation may destroy the symmetry of the symmetric positive definite covariance matrix, causing the symmetric blocks to have different ranks. As a result, the standard algorithms used to compute the Cholesky decomposition may fail. A remedy to this is defining $\bC:=\frac{1}{2}(\bC+\bC^\top)$.%; see also \cite{Bebendorf_bounderror}.
\end{rem}
\begin{rem} 
\textcolor{black}{Errors in the $\H$-matrix approximation may destroy the positive definiteness of the covariance matrix. Especially for matrices which have very small (very close to zero) eigenvalues. 
A remedy is: 1) to use a more robust, e.g., block $\H$-Cholesky, algorithm; 2) use $\bL\bD\bL^\top$ factorization instead of $\bL\bL^\top$ or 3) add a positive diagonal $\tau^2 \cdot\bI$ to $\bC$.}
\end{rem}
\begin{rem}
\textcolor{black}{We use both notations $\bC\in \mathbb{R}^{n \times n}$ and $\bC\in \mathbb{R}^{I \times I}$. In this work it is the same, since $\vert I \vert = n$.
The notation $\bC\in \mathbb{R}^{I \times I}$ is useful when it is important to differentiate between $\bC\in \mathbb{R}^{I \times I}$ and $\bC\in \mathbb{R}^{J \times J}$, where $I$ and $J$ are two different index sets of the same size.}
\end{rem}
To define the class of $\H$-matrices, we assume that the cluster tree $T_{I}$ and block cluster tree $T_{I\times I}$ are already constructed.
\begin{defi}
\label{def:Hmatrix}
We let $I$ be an index set and $T_{I\times I}$ a hierarchical division of the index set product, $I\times I$, into sub-blocks. The set of $\H$-matrices with the maximal sub-block rank $k$ is
\begin{equation*}
\mathcal{H}(T_{I\times I},k):=\{\bC \in \mathbb{R}^{I \times I}\, \vert \, \mbox{rank}(\bC\vert_b) \leq k \, \text{  for all admissible blocks  } b \text{  of } T_{I\times I}\},
\end{equation*}
where $k$ is the maximum rank.
Here, $\bC\vert_b = (c_{ij})_{(i,j)\in b}$ denotes the matrix block of $\bC = (c_{ij})_{i,j\in I}$ corresponding to the sub-block $b \in T_{I\times I}$.
\end{defi} 
Blocks that satisfy the admissibility condition can be approximated by low-rank matrices; see  \cite{Part1}.
An $\H$-matrix approximation of $\bC$ is denoted by $\widetilde{\bC}$.
%In \cite{HACA, Grased_Interpol} the authors use techniques based on the interpolation of kernel functions to derive the %error estimates for the ACA algorithm and for the $\H$-matrix approximation.\\
%
%Finally, we list computational complexities of basic algebraic operations with $\H$-matrices. 
%\begin{theorem}
%\label{thm:HmComplex}
%Let $I$ be an index set, $n:=|I|$, $T_{I\times I}$ a tree which defines the block structure, $\mbox{depth}(T_{I\times I})=\mathcal{O}(\log n)$, $W \in \mathcal{H}(T_{I\times I}, k)$. 
The storage requirement of $\widetilde \bC$ and the matrix vector multiplication cost $\mathcal{O}(kn\log n)$, the matrix-matrix addition costs $\mathcal{O}(k^2n\log n)$, and the matrix-matrix product and the matrix inverse cost $\mathcal{O}(k^2n\log^2 n)$; see \cite{Part1}. 
%\end{theorem}
%\Proof See \cite{Part1, GH03, Winter}.
%\endproof
%Note that the result of addition of two hierarchical matrices $\M_1$ and $\M_2 \in \mathcal{H}(T_{I\times I}, k)$ is a matrix from $\mathcal{H}(T_{I\times I}, 2k)$. 
%To have the sum $\M_1+\M_2$ in the class $\mathcal{H}(T_{I\times I}, k)$ also, one should truncate the rank $2k$ to $k$.
%
%\begin{figure}[htbp!]
%\centering
%%\includegraphics[width=4in]{clustertree.pdf}
%\includegraphics[width=4in]{block_partition}
%\caption{Block cluster tree and matrix decomposition into sub-blocks.}. 
%\label{fig:block_ct}
%\end{figure}
%
%
\begin{rem} 
\label{rem:adap_fixed}
There are two different  $\H$-matrix approximation strategies, \cite{Winter,HackHMEng}. 1) \textit{The fixed rank strategy} (Fig.~\ref{fig:Hexample3}), i.e., each sub-block has a maximal rank $k$. It could be sufficient or not, but it is simplify theoretical estimates for the computing time and storage. In this case we write $\widetilde{\bC} \in \mathcal{H}(T_{I\times I};k)$  2) \textit{The adaptive rank strategy} (Fig.~\ref{fig:Hexample_adm}, left sub-block), i.e., where absolute accuracy (in the spectral norm) for each sub-block $\bM$ is $\varepsilon$ or better (smaller):
\begin{equation*}
k:=\min\{\tilde{k}\in \mathbb{N}_0 \vert \,\exists\, \tilde{\bM}\in\mathcal{R}(\tilde{k},n,m):\Vert \bM-\tilde{\bM} \Vert \leq \varepsilon \Vert \bM\Vert\},
\end{equation*}
where $\mathcal{R}(\tilde{k},n,m):=\{\bM\in \mathbb{R}^{n\times m} \vert \rank(\bM)\leq \tilde{k}\}$ is a set of low-rank matrices of size $n\times m$ of the maximal rank $\tilde{k}$.
The second strategy is better, since it allows to have an adaptive rank for each sub-block, but it makes it difficult to estimate the total computing cost and storage.
In this case, we write $\widetilde{\bC} \in \mathcal{H}(T_{I\times I};\varepsilon)$.\\ 
\end{rem}
The fixed rank strategy is useful for a priori evaluations of the computational resources and storage memory. The adaptive rank strategy is preferable for practical approximations and is useful when the accuracy in each sub-block is crucial. 
\textcolor{black}{Similarly, we introduce $\widetilde \LL(\thetab; k)$ and $\widetilde \LL(\thetab; \varepsilon)$ as $\H$-matrix approximations of the log-likelihood $\LL$ from (\ref{eq:likeli}).}
Sometimes we skip $k$ (or $\varepsilon$) and write $\widetilde \LL(\thetab)$.

\textcolor{black}{Thus, we conclude that different $\H$-matrix formats exist. For example, the weak admissible matrices (HODLR) result in a simpler block structure (Fig.~\ref{fig:Hexample_adm}, right), but the $\H$-matrix ranks could be larger than the ranks by standard admissible matrices. In \cite{weak}, the authors observed a factor of $\approx 3$ between the weak and standard admissible sub-blocks. In \cite{MYPHD}, the author observed that the HODLR matrix format might fail or result in very large ranks for 2D/3D computational domains or when computing the Schur complement. The $\H^2$ matrix format allows to get rid of the $\log$ factor in the complexity and storage but it is more complicated for understanding and implementation. We note that there are conversions from $\H$ to $\H^2$ matrix formats implemented (see HLIB and H2LIB libraries).}

%\includegraphics[width=0.8in,height=0.8in]{Fig/matrix2.eps}
%\begin{figure}[htbp!]
%\centering
%\includegraphics[width=2in,height=2in]{matern1.eps}\\
%\caption{$\mathcal{H}$-matrix approximations of exponential covariance $C \approx C^{\mathcal{H}} \in \mathbb{R}^{n\times n}$, $n=32^2$, 
%cov. lengths in x and in y directions are $0.15$ and $0.2$, domain $[0,1]^2$.
%Dense (red/dark) blocks $\in \RR^{32\times 32}$, max. rank $k=14$}. 
%\label{fig:hmatrix}
%\end{figure}

\subsection{Mat\'{e}rn covariance functions}
\label{sec:Matern}

Among the many covariance models available, the Mat\'{e}rn family \cmmnt{\citep}\cite{Matern1986a} has gained widespread interest
in recent years. The Mat\'{e}rn form of spatial correlations was introduced into statistics as a flexible parametric class, with one parameter determining the smoothness of the underlying spatial random field \cite{Handcock1993a}. The varied history of this family of models can be found in \cite{Guttorp2006a}.
%The Mat\'{e}rn form also naturally arises as the spatial correlation for temperature fields described by %simple energy balance climate models \cmmnt{\citep}\cite{North2011a}. 

The Mat\'{e}rn covariance depends only on the distance $h:=\Vert \bs-\bs'\Vert $, where $\bs$ and $\bs'$ are any two spatial locations.
The Mat\'{e}rn class of covariance functions is defined as
\begin{equation}
\label{eq:MaternCov}
C(h;\btheta)=\frac{\sigma^2}{2^{\nu-1}\Gamma(\nu)}\left(\frac{h}{\ell}\right)^\nu K_\nu\left(\frac{h}{\ell}\right),
\end{equation}
%
%WHICH ONE?
%\begin{equation}
%\label{eq:matern}
%C_{\nu,\ell}(r)=\frac{2^{1-\nu}}{\Gamma(\nu)}\left(\frac{\sqrt{2\nu}r}{\ell}\right)^\nu K_\nu\left(\frac{\sqrt{2\nu}r}{\ell}\right),
%\end{equation}
with $\btheta=(\sigma^2,\ell,\nu)^\top$; where $\sigma^2$ is the variance; $\nu>0$ controls the smoothness of the random field, with larger values of $\nu$ corresponding to smoother fields; and $\ell>0$ is a spatial range parameter that measures how quickly the correlation of the random 
field decays with distance, with larger $\ell$ corresponding to a faster decay (keeping $\nu$ fixed) \cite{harbrecht2015efficient}. Here ${\cal K}_\nu$ denotes the modified Bessel function of the second kind of order $\nu$. When $\nu=1/2$, the Mat\'{e}rn covariance function reduces to the exponential covariance model and describes a rough field.
The value $\nu=\infty$ corresponds to a Gaussian covariance model that describes a very smooth, infinitely differentiable field.
Random fields with a Mat\'{e}rn covariance function are $\lfloor \nu-1 \rfloor$ times mean square differentiable. 

%The spatial covariance function of the spectral model in (\ref{eq:Matern}) is a Mat\'{e}rn form,
%with $\eta=\nu$, ${\gamma_0=2\sigma^2/\Gamma(\nu)}$ and $\beta_0=\ell$.  Thus, (\ref{eq:spectral_diffeq}) in the %spatial domain is equivalent to the Mat\'{e}rn class of covariance models defined in (\ref{eq:Matern}).

\subsection{Mat\'ern covariance matrix approximation}

By definition, covariance matrices are symmetric and positive semi-definite. The decay of the eigenvalues (as well as  the $\H$-matrix approximation) depends on the type of the covariance matrix and its smoothness, covariance length, 
computational geometry, and dimensionality. In this section, we perform numerical experiments with $\H$-matrix approximations.

To underline the fact that $\H$-matrix approximations can be applied to irregular sets of locations, we use irregularly spaced locations in the unit square (only for Tables~\ref{table:eps_det}, \ref{table:approx_compare_rank}):
\begin{equation}
\label{eq:mesh_pert}
\frac{1}{\sqrt{n}}\left( i-0.5+X_{i j}, \,\, j-0.5+Y_{i j}\right), \quad \text{for}\,\, i,j \in \{1,2,\ldots,\sqrt{n}\},
\end{equation}
where $X_{i j}$, $Y_{i j}$ are i.i.d. uniform on $(-0.4,0.4)$ for a total of $n$ observations;
see \cite{sun2016statistically}. The observations are ordered lexicographically by the ordered pairs $(i,j)$.

All of the numerical experiments herein are performed on a Dell workstation with 20 processors (40 cores) and total 128 GB RAM.
The parallel $\H$-matrix library, HLIBPro \cmmnt{\citep}\cite{HLIBPRO}, was used to build the Mat\'{e}rn covariance matrix, compute the Cholesky factorization, solve a linear system, calculate the determinant and the quadratic form. HLIBPro is fast and efficient; see the theoretical parallel scalability in Table \ref{tab:1}. Here $V(T)$ denotes the set of vertices, $L(T)$ the set of leaves in the block-cluster tree $T=T_{I\times I}$, and $n_{min}$  the size of a block when we stop further division into sub-blocks (see Section \ref{sec:Happrox}). Usually $n_{min}=\{32, 64, 128\}$, since a very deep hierarchy slows down computations.
\begin{table}[h!]
\begin{center}
\caption{Sequential and parallel complexity of the main linear algebra operations on $p$ processors.}\label{tab:1}
\begin{tabular}{|l|l|l|} 
\hline 
 Operation & Sequential Complexity & Parallel Complexity \cmmnt{\citep}\cite{HLIBPRO} (Shared Memory)\\ 
 \hline\hline
  build $\widetilde{\bC}$ & $\mathcal{O}(n\log n)$ & $\frac{\mathcal{O}(n\log n)}{p}+\mathcal{O}(\vert V(T) \setminus{L}(T) \vert)$\\ \hline
  store $\widetilde{\bC}$ & $\mathcal{O}(kn\log n)$ & $\mathcal{O}(kn\log n)$\\ \hline
  $\widetilde{\bC}\cdot \bz$ & $\mathcal{O}(kn\log n)$ & $\frac{\mathcal{O}(kn\log n)}{p} + \frac{n}{\sqrt{p}}$\\ \hline
%  $\alpha \widetilde{\bA}\oplus \beta \widetilde{\bB}$ & $\mathcal{O}(k^2n\log n)$ & $\frac{\mathcal{O}(n\log n)}{p}$\\ \hline
%  $\alpha \widetilde{\bA}\odot \widetilde{\bB}\oplus\beta \widetilde{\bC}$ & $\mathcal{O}(k^2n\log^2 n)$ & $\frac{\mathcal{O}(n\log n)}{p}+\mathcal{O}(C_{sp}(T)\vert V(T)\vert)$\\ \hline
  $\widetilde{\bC}^{-1}$ & $\mathcal{O}(k^2n\log^2 n)$ & $\frac{\mathcal{O}(n\log n)}{p}+\mathcal{O}(nn_{min}^2)$, $1\leq n_{min} \leq128$\\ \hline
  $\H$-Cholesky $\widetilde \bL$& $\mathcal{O}(k^2n\log^2 n)$ & $\frac{\mathcal{O}(n\log n)}{p}+\mathcal{O}(\frac{k^2n\log^2 n}{n^{1/d}})$\\ \hline
  $\mydet{\widetilde{\bC}}$ & $\mathcal{O}(k^2n\log^2 n)$ & $\frac{\mathcal{O}(n\log n)}{p}+\mathcal{O}(\frac{k^2n\log^2 n}{n^{1/d}})$, $d=1,2,3$\\ \hline
\end{tabular}
\end{center}
\end{table} 
%
%
%
%The adaptive rank arithmetics with $\varepsilon=1e-4$ for each sub-block of $\widetilde{\bC}$ and $\varepsilon=1e-8$ for each sub-block of $\widetilde{\bL}$ was used.
%The number of processing cores is 40.

Table \ref{table:eps_det}  shows the $\H$-matrix approximation errors for $\log \mydet{\widetilde{\bC}}$, $\widetilde{\bC}$ and the inverse $(\widetilde{\bL}\widetilde{\bL}^\top)^{-1}$, where $\widetilde{\bL}$ is the $\H$-Cholesky factor. The errors are computed in the Frobenius and spectral norms, where $\bC$ is the exact Mat\'{e}rn covariance matrix with $\nu=0.5$ and $\sigma^2=1$. The local accuracy in each sub-block is 
$\varepsilon$. The number of locations is $n=16{,}641$. The last column demonstrates the total compression ratio, c.r., which is equal to $1-$size($\widetilde{\bC}$)/size($\bC$). The exact values are $\log \mydet{\bC}=2.63$ for $\ell=0.0334$ and $\log \mydet{\bC}=6.36$ for $\ell=0.2337$. The uniformly distributed mesh points are taken in the unit square and perturbed as in (\ref{eq:mesh_pert}).
\begin{table}[h!]
\centering
\caption{The $\H$-matrix accuracy and compression rates (c.r.). Accuracy in each sub-block is 
$\varepsilon$; $n=16{,}641$, $\widetilde{\bC}$ is a Mat\'{e}rn covariance with $\nu=0.5$ and $\sigma^2=1$. The spatial domain is the unit square with locations irregularly spaced as in (\ref{eq:mesh_pert}). %!!!
%$\log \mydet{\bC}=2.63$ for $\ell=0.0334$ and $\log \mydet{\bC}=6.36$ for $\ell=0.2337$. 
}
\begin{tabular}{|c|c|c|c|c|c|c|}
\hline
 $\varepsilon$ &{\scriptsize $\big|\log \mydet{\bC} -\log \mydet{\widetilde{\bC}}\big|$}  &$\big| \frac{\log \mydet{\bC} -\log \mydet{\widetilde{\bC}}}{\log \mydet{\widetilde{\bC}}}\big|$  & {\scriptsize $\Vert \bC-\widetilde{\bC} \Vert_F$} & $\frac{\Vert \bC-\widetilde{\bC} \Vert_2}{\Vert \widetilde{\bC}\Vert_2}$ & {\scriptsize $\Vert \bI-(\widetilde{\bL}\widetilde{\bL}^\top)^{-1}{\bC} \Vert_2$} & c.r. ($\%$)  \\
 \hline
 $\ell=0.0334$ &&&&&&\\
 \hline
 $10^{-1}$ & $3.2\cdot 10^{-4}$    & $1.2\cdot 10^{-4}$   & $7.0\cdot 10^{-3}$  & $7.6\cdot 10^{-3}$    & 2.9 & 91.8\\ 
 $10^{-2}$ & $1.6\cdot 10^{-6}$   & $6.0\cdot 10^{-7}$   & $1.0\cdot 10^{-3}$  & $6.7\cdot 10^{-4}$    & $9.9\cdot 10^{-2}$  & 91.6\\ 
 $10^{-4} $ & $1.8\cdot 10^{-9}$   & $7.0\cdot 10^{-10}$   & $1.0\cdot 10^{-5}$  & $7.3\cdot 10^{-6}$    & $2.0\cdot 10^{-3}$  & 89.8\\ 
 $10^{-8} $ & $4.7\cdot 10^{-13}$  & $1.8\cdot 10^{-13}$ & $1.3\cdot 10^{-9}$  & $6\cdot 10^{-10}$  & $2.1\cdot 10^{-7}$  & 87.3\\ \hline
 $\ell=0.2337$ &&&&&&\\ \hline
 $ 10^{-4}$ & $9.8\cdot 10^{-5}$    & $1.5\cdot 10^{-5}$   & $8.1\cdot 10^{-5}$  & $1.4\cdot 10^{-5}$    &$ 2.5\cdot 10^{-1}$  & 91.5\\ 
 $10^{-8}$  & $1.5\cdot 10^{-9}$  & $2.3\cdot 10^{-10}$ & $1.1\cdot 10^{-8}$  & $1.5\cdot 10^{-9}$    & $4\cdot 10^{-5}$     & 88.7\\ 
 \hline
\end{tabular}
\label{table:eps_det} 
\end{table}

\subsection{Convergence of the $\H$-matrix error vs. the rank $k$}
In Table~\ref{table:approx_compare_rank} we show the
dependence of the Kullback-Leibler divergence (KLD) and two matrix errors on the $\H$-matrix rank $k$ for the Mat\'{e}rn covariance function with parameters $\ell=\{0.25, 0.75\}$ and $\nu=1.5$, computed on the domain $\mathcal{G}=[0,1]^2$. \textcolor{black}{The KLD was computed as follow:
\begin{equation*}
D_{KL}(\bC,\widetilde{\bC}) = \frac{1}{2}\left \{\trace(\widetilde{\bC}^{-1}\bC) - n + \log\left (\frac{\det \widetilde{\bC}}{\det \bC}\right) \right \}.
\end{equation*}}
We can bound the relative error $\Vert \bC^{-1} - \widetilde{\bC}^{-1}\Vert/\Vert \bC^{-1} \Vert$ for the approximation of the inverse as
\begin{equation*}
\frac{\Vert \bC^{-1} - \widetilde{\bC}^{-1}\Vert}{\Vert \bC^{-1} \Vert}=\frac{\Vert (\bI-  \widetilde{\bC}^{-1}\bC)\bC^{-1}\Vert}{\Vert \bC^{-1} \Vert} \leq \Vert (\bI-  \widetilde{\bC}^{-1}\bC)\Vert.
\end{equation*} 
The spectral norm of $\Vert (\bI-  \widetilde{\bC}^{-1}\bC)\Vert$ can be estimated by few steps of the power iteration method (HLIB library uses 10 steps).
 The rank $k\leq 20$ is not sufficient to approximate
the inverse, and the resulting error $\Vert {\bC} (\widetilde{\bC})^{-1} -\bI \Vert_2 $ is large. One remedy could be to increase the rank, but it may not help for ill-conditioned matrices. The spectral norms of $\tilde{\bC}$ are $\Vert \tilde{\bC}_{(\ell=0.25)}\Vert_2=720$ and $\Vert \tilde{\bC}_{(\ell=0.75)}\Vert_2=1068$.

%Relatively often, the $\H$-matrix procedure, which computes the $\H$-Cholesky factor $\widetilde{\L}$ or the $\H$-inverse,  produces ``NaN'' (not a number) and terminates.
%One possible cause is that some of the diagonal elements can be very close to zero, and their inverse is not defined.
%This may happen when two locations are very close to each other and, as a result, two columns (rows) are linear dependent.
%To avoid such cases, the available data should be preprocessed to remove duplicate locations.
\textcolor{black}{
\begin{rem}
Very often, the nugget $\tau^2\bI$ is a part of the model or is just added to the main diagonal of $\bC$ to stabilize numerical calculations, i.e., $\widetilde{\bC}:=\widetilde{\bC}+\tau^2 \bI$,  \cite{LitvGentonSunKeyes17}. By adding a nugget, we ``push'' all the singular values away from zero. Adding a nugget effect to the list of unknown parameters $\btheta=(\ell,\nu, \sigma^2,\tau)^\top$ is straightforward. See the modified procedure in the GitHub repository \cite{LitvGitHubHcov}.
\end{rem}}
\begin{table}[h!]
\centering
\caption{Convergence of the $\H$-matrix approximation error vs. the $\H$-matrix rank $k$ of a Mat\'{e}rn covariance function with parameters $\ell=\{0.25, 0.75\}$, $\nu=1.5$, domain $\mathcal{G}=[0,1]^2$, $n=16{,}641$,}
\begin{tabular}{|c|cc|cc|cc|}
\hline
 $k$ & \multicolumn{2}{c|}{KLD}& \multicolumn{2}{c|}{$\Vert \bC - \widetilde{\bC} \Vert_2$} & \multicolumn{2}{c|}{$\Vert {\bC} (\widetilde{\bC})^{-1} -\bI \Vert_2  $} \\
     &  $\ell=0.25$   &  $\ell=0.75$ &  $\ell=0.25$   &  $\ell=0.75$ &  $\ell=0.25$   &  $\ell=0.75$ \\ 
\hline
 %5   &  nan   &  nan  & 0.05 & 6e-2    & 2.1e+13 & 1e+28\\ 
 %10  &  10    & 10e+17& 4e-4 & 5.5e-4  & 276& 1e+19\\ 
% 15  &  3.7   & 1.8   &1.1e-5& 3e-6    & 112& 4e+3\\ 
 %18  &  1.2   & 2.7   &1.2e-6& 7.4e-7  & 31 & 5e+2\\ 
 20  &  0.12  &  2.7  & 5.3e-7& 2e-7   & 4.5& 72\\ 
 30  &  3.2e-5&  0.4  & 1.3e-9& 5e-10  & 4.8e-3& 20\\ 
 40  &  6.5e-8& 1e-2  &1.5e-11& 8e-12  & 7.4e-6& 0.5\\ 
 50  &  8.3e-10& 3e-3 &2.0e-13& 1.5e-13&  1.5e-7& 0.1\\ 
\hline
\end{tabular}
\label{table:approx_compare_rank}  
\end{table}
\section{Hierarchical Approximation of Gaussian Likelihood}
\label{sec:HAGL}

\subsection{Parameter estimation}

We use the $\H$-matrix technique to approximate the Gaussian likelihood function.  The $\H$-matrix approximation of the exact log-likelihood $\LL(\thetab)$ defined in (\ref{eq:likeli}) is denoted by $\widetilde \LL(\thetab;k)$:
\begin{equation}
\label{eq:likeH}
\widetilde{\LL}(\thetab;k)=-\frac{n}{2}\log(2\pi) - \sum_{i=1}^n\log \{\widetilde{L}_{ii}(\thetab;k)\}-\frac{1}{2}\bv(\thetab)^\top \bv(\thetab),
 \end{equation}
 where  $\widetilde{\bL}(\thetab;k)$ is the rank-$k$ $\H$-matrix approximation of the 
Cholesky factor $\bL(\btheta)$ in $\bC(\btheta)=\bL(\btheta)\bL(\btheta)^\top$,
and $\bv(\thetab) $ is the solution of the linear system $\widetilde{\bL}(\thetab;k)\bv(\thetab)=\bZ$. Analogously, we can define  $\widetilde{\LL}(\thetab;\varepsilon)$.

%The $\H$-matrix approximation $\widetilde{\LL}(\thetab;k)$ in (\ref{eq:likeH}) is the fixed rank strategy, i.e., each sub-block in the covariance matrix has a rank equal to $k$ or smaller (and the accuracy in each sub-block is unknown). 
%We denote by  $\widetilde \LL(\thetab; \varepsilon)$ the
%adaptive rank strategy version of (\ref{eq:likeH}) where absolute accuracy (in spectral norm) for each %sub-block is $\varepsilon$ or better. The fixed rank strategy is useful for a priori evaluation of %computational resources and storage memory. The adaptive rank strategy is preferable for practical %approximations and used when accuracy in each sub-block is crucial (but in this case we cannot say %anything about total computing cost and storage).

To maximize $\widetilde{\LL}(\thetab;k)$ in (\ref{eq:likeH}), we use Brent's method \cmmnt{\citep}\cite{Brent73, BrentMethod07}, also known as Brent-Dekker\footnote{Implemented in GNU Scientific library, \url{https://www.gnu.org/software/gsl/} }. The  Brent-Dekker algorithm first uses the fast-converging secant method or inverse quadratic interpolation to maximize $\widetilde{\LL}(\thetab;\cdot)$. If those do not work, then it returns to the more robust bisection method.
We note that the maximization of the log-likelihood function is an ill-posed problem, since even very small perturbations in the covariance matrix $\bC(\btheta)$ may result in huge perturbations in the log-determinant and the log-likelihood. An alternative to $\bC(\btheta)=\bL(\btheta)\bL(\btheta)^\top$ is the $\bL(\btheta) \bD(\btheta) \bL^\top(\btheta)$ decomposition, which is more stable since it avoids extracting square roots of diagonal elements, i.e., $\bL \bD \bL^\top=(\bL\bD^{1/2})(\bL\bD^{1/2})^\top$. Very small negative diagonal elements can appear due to, e.g., the rounding off error.

%In our numerical experiments below we use both factorizations.

\subsection{Computational complexity and accuracy}

%This part is repeating page 1, L23
%In straightforward discretizations with $n$ spatial locations, the covariance matrix is dense. Therefore, the computational cost to setup the exact matrix is $\mathcal{O}(n^2)$ and $\mathcal{O}(n^3)$ to compute the inverse. 
%For special cases, when the covariance function is stationary, and the computational domain is an axis-parallel rectangle with uniform and axis parallel triangulation, the Fast Fourier technique
%\cmmnt{\citep}\cite{FFTW} can be applied with the computational cost $\mathcal{O}(n\log n)$.
%For more general cases of the covariance matrix, for a non-rectangular domain or for a non-axis-parallel triangulation, the FFT is not applicable (or not so efficient) and a data sparse technique should be applied, e.g. such as
%the hierarchical ($\H$-) matrix technique.

We let $\bC(\btheta)\in \mathbb{R}^{n \times n}$ be approximated by an $\H$-matrix $\widetilde{\bC}(\btheta;k)$ with a maximal rank $k$. 
The $\H$-Cholesky decomposition of $\widetilde{\bC}(\btheta;k)$ costs $\mathcal{O}(k^2 n\log^2 n)$. The solution of
the linear system $\widetilde{\bL}(\thetab;k)\bv(\btheta)=\bZ$ costs $\mathcal{O}(k^2 n\log^2 n)$.
The log-determinant $\log \mydet{\widetilde \bC(\btheta;k)}=2\sum_{i=1}^n \log \{\widetilde L_{ii}(\btheta;k)\}$ is available for free. 
The cost of computing the log-likelihood function $\widetilde{\LL}(\thetab;k)$ is $\mathcal{O}(k^2 n\log^2 n)$ and the cost of computing the MLE  in $m$ iterations is ${\mathcal{O}(m k^2 n\log^2 n)}$.
%Therefore, one iteration in the minimization algorithm costs  $\mathcal{O}(k^2 n\log^2 n)$ and $\# \iters$ costs $\mathcal{O}(\# \iters \cdot k^2 n\log^2 n)$.

%\subsection{Approximation accuracy}

Once we observe a realization $\bz$ of the random vector $\bZ$, we can quantify the accuracy of
the $\cal H$-matrix approximation of the log-likelihood function. 
Our main theoretical result is formulated below in Theorem~\ref{thm:MainLL}. 
%For the proofs see Appendix \ref{appendix:A}.
%We list here three theorems about $\H$-matrix approximation accuracy of $\bC$, $\bC^{-1}$ and $\log %\mydet{\bC}$.
%\textcolor{red}{Let $\bC^{\mathcal{H}}$ be an $\H$-matrix approximation of matrix $\bC$}. 
%
\begin{theorem}(Accuracy of the hierarchical log-likelihood)\\
\label{thm:MainLL}
We let $\widetilde{\bC}(\btheta)$ be an $\H$-matrix approximation of the matrix $\bC(\btheta)\in \mathbb{R}^{n\times n}$, and $\bZ=\bz$ a data vector.
%We assume $\Vert \widetilde{\bC}^{-1}\Vert_2 \leq c_1$, where $c_1>0$ is a constant, and 
We let also the spectral radius $\rho(\widetilde{\bC}(\btheta)^{-1}{\bC}(\btheta)-\bI) <\varepsilon<1$. 
Then the following statements hold:
\begin{eqnarray*}
\big|  \log \mydet{\widetilde{\bC}(\btheta;k)} - \log \mydet{\bC(\btheta)} \big| &\leq& -n \log(1-\varepsilon)\approx n\varepsilon \quad \text{for small} \quad \varepsilon,\\
\vert \widetilde{\LL}(\btheta;k) - \LL(\btheta) \vert &\leq&  
\frac{1}{2}\cdot n \varepsilon + \frac{1}{2}\Vert \bz\Vert^2_2\cdot \Vert \widetilde{\bC}^{-1}\Vert_2  \cdot \varepsilon.
\end{eqnarray*}
\end{theorem}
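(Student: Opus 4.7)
My strategy is to split the likelihood difference into its log-determinant and quadratic-form pieces and to convert the spectral-radius hypothesis $\rho(\tilde{\bC}^{-1}\bC-\bI)<\varepsilon$ into usable scalar bounds by exploiting the similarity of $\tilde{\bC}^{-1}\bC$ to a symmetric matrix. First I set $\bE:=\tilde{\bC}^{-1}\bC-\bI$; since both $\bC$ and $\tilde{\bC}$ may be taken SPD (the latter after the symmetrization of the earlier remark), $\tilde{\bC}^{-1}\bC$ is similar to the SPD matrix $\tilde{\bC}^{-1/2}\bC\tilde{\bC}^{-1/2}$, so $\bE$ has real eigenvalues $\lambda_i(\bE)\in(-\varepsilon,\varepsilon)$.

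For the log-determinant estimate I would write
\[
\log\det\bC - \log\det\tilde{\bC} \;=\; \log\det(\bI+\bE) \;=\; \sum_{i=1}^n \log\bigl(1+\lambda_i(\bE)\bigr)
\]
and apply the elementary bound $|\log(1+\lambda_i)|\leq -\log(1-\varepsilon)$ (which dominates $\log(1+\varepsilon)$ since $(1+\varepsilon)(1-\varepsilon)<1$). Summing the $n$ terms and expanding $-\log(1-\varepsilon)=\varepsilon+O(\varepsilon^2)$ delivers the first claim.

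For the quadratic-form piece I would introduce the symmetric matrix $\mathbf{F}:=\bC^{1/2}\tilde{\bC}^{-1}\bC^{1/2}-\bI$, which is similar to $\bE$ and hence satisfies $\|\mathbf{F}\|_2=\rho(\mathbf{F})=\rho(\bE)<\varepsilon$. Setting $\bu:=\bC^{-1/2}\bz$, a direct rearrangement gives $\bz^\top(\tilde{\bC}^{-1}-\bC^{-1})\bz=\bu^\top\mathbf{F}\bu$, whence
\[
\bigl|\bz^\top(\tilde{\bC}^{-1}-\bC^{-1})\bz\bigr|\leq \|\mathbf{F}\|_2\,\|\bu\|^2\leq \varepsilon\,\bz^\top\bC^{-1}\bz\leq \varepsilon\,\|\bC^{-1}\|\,\|\bz\|^2\leq c_0^2 c_1\,\varepsilon.
\]
Combining the two estimates through $\tilde{\LL}-\LL=-\tfrac12(\log\det\tilde{\bC}-\log\det\bC)-\tfrac12\bz^\top(\tilde{\bC}^{-1}-\bC^{-1})\bz$ then yields the second claim.

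The hard part will be precisely the non-symmetry of $\bE$: the hypothesis controls only the spectral radius, so one cannot bound a quadratic form by $\|\bE\|_2$ directly. Passing to the symmetrized matrix $\mathbf{F}$ resolves this cleanly, since similar matrices share the same spectrum and spectral radius coincides with spectral norm for symmetric matrices. A secondary bookkeeping point is the identification of the computed $\tilde{\LL}(\btheta;k)$ in (\ref{eq:likeH}) with the surrogate Gaussian log-likelihood based on $\tilde{\bC}$; this requires $\tilde{\bL}\tilde{\bL}^\top=\tilde{\bC}$ in exact arithmetic, so that $\sum_i\log\tilde{\bL}_{ii}=\tfrac12\log\det\tilde{\bC}$ and $\bv^\top\bv=\bz^\top\tilde{\bC}^{-1}\bz$, after which the two halves of the argument above close the theorem.
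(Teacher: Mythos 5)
Your proposal is correct and rests on the same overall decomposition as the paper's appendix: split $\tilde{\LL}(\btheta;k)-\LL(\btheta)$ into the log-determinant difference and the quadratic-form difference, bound the first by $-\tfrac12 n\log(1-\varepsilon)$ and the second by $\tfrac12 c_0^2 c_1 \varepsilon$. The two treatments differ in how each piece is justified, and yours is the more self-contained and the more rigorous of the two. For the determinant, the paper simply invokes the Ballani--Kressner bound (Theorem \ref{thm:1} in the appendix), whereas you rederive it from $\log\mydet{\bC}-\log\mydet{\tilde{\bC}}=\sum_{i}\log\{1+\lambda_i(\bE)\}$ with $\bE=\tilde{\bC}^{-1}\bC-\bI$, using similarity to $\tilde{\bC}^{-1/2}\bC\tilde{\bC}^{-1/2}$ to guarantee real eigenvalues in $(-\varepsilon,\varepsilon)$ and the elementary estimate $|\log(1+\lambda)|\le-\log(1-\varepsilon)$. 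For the quadratic form, the paper factors $\bC^{-1}-\tilde{\bC}^{-1}=(\bI-\tilde{\bC}^{-1}\bC)\bC^{-1}$ and then bounds $|\bZ^\top(\bI-\tilde{\bC}^{-1}\bC)\bC^{-1}\bZ|$ by $c_0^2c_1\varepsilon$, a step that implicitly requires $\Vert\bI-\tilde{\bC}^{-1}\bC\Vert_2\le\varepsilon$ even though the hypothesis only controls the spectral radius of this non-symmetric matrix; your passage to the congruent symmetric matrix $\bC^{1/2}\tilde{\bC}^{-1}\bC^{1/2}-\bI$, which is similar to $\bE$ and for which spectral radius and spectral norm coincide, closes exactly that gap and yields the constant $c_0^2c_1\varepsilon$ legitimately (it even gives the slightly sharper $\varepsilon\,\bz^\top\bC^{-1}\bz$). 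Your closing bookkeeping point, that $\tilde{\LL}(\btheta;k)$ in (\ref{eq:likeH}) must be identified with the surrogate Gaussian log-likelihood built from $\tilde{\bC}=\tilde{\bL}\tilde{\bL}^\top$, is the paper's implicit convention as well; and the triangle inequality you use is what the paper's first displayed line (written there as an equality with a spurious minus sign) actually means.
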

%\begin{theorem}
%\label{thm:MainLL3}
%Let $\widetilde{\bC}(\btheta)$ be an $\H$-matrix approximation of matrix $\bC(\btheta)\in \mathbb{R}^{n\times n}$ and $\bz$ be a data vector.
%Assume $\Vert \bz\Vert \leq c_0$, $\Vert \bC^{-1}\Vert \leq c_1$, and the spectral radius $\rho(\widetilde{\bC}^{-1}\bC-\bI) <\varepsilon<1$. 
%Then
%\begin{equation*}
%\vert \widetilde{\LL}(\btheta;k) - \LL(\btheta) \vert \leq  \frac{1}{2}\cdot n \varepsilon + \frac{1}{2}c_0^2\cdot c_1\cdot \varepsilon.
%\end{equation*}
%\end{theorem}
%
\Proof See in the Appendix A. \\
The estimate in Remark~\ref{rem:NormZ} states how fast the probability decays while increasing the norm of $\bZ$. For simplicity, further on, we will assume $\Vert \bz \Vert_2 \leq c_0$.

In \cite{BallaniKressner}, the authors observed that the bound $n$ in the inequalities above is pessimistic and is hardly observed in numerical simulations. Though Theorem~\ref{thm:MainLL} is shown for the fixed rank arithmetics, it also holds for the adaptive rank arithmetics with $\widetilde{\bC}(\btheta;\varepsilon)$ and $\widetilde{\LL}(\btheta;\varepsilon)$.
\section{Monte Carlo Simulations}
\label{sec:MC}
We performed numerical experiments with simulated data to recover the true values of the parameters of the Mat\'ern covariance matrix, known to be $\thetab^*:=(\ell^*, \nu^*,\sigma^*)=(0.7, 0.9, 1.0)$.  In the first step, we constructed 50 independent data sets (replicates) of size $n\in \{ 128,64,\ldots,4,2\}\times 1{,}000 $, by multiplying the Cholesky factor  $\widetilde{\bL}(\thetab,10^{-10})$ on a Gaussian random vector $\bW \sim \mathcal{N}(\mathbf{0}, \bI)$, $\widetilde{\bC}(\thetab^*)=\widetilde{\bL}(\thetab^*)\widetilde{\bL}(\thetab^*)^\top$.
We took the locations (not the data) from the daily moisture example, which is described below in Sect.~\ref{sec:moisture}.  
After that we ran the optimization algorithm and tried to identify (recover) the ``unknown" parameters $(\ell^*, \nu^*,\sigma^*)$. The boxplots for each parameter over 50  replicates are plotted in Figure~\ref{fig:synthetic_boxes}

%
%To understand the uncertainty of the parameter estimation, we randomly choose 50 sets of $n$ locations for parameter estimation. 
%Then, we repeat this procedure for $n$, ranging from 2{,}000 to 128{,}000. The distributions of the 50 parameter estimates for each $n$ are summarized in Figure~\ref{fig:synthetic_boxes}. For the largest $n$ we have considered, we can see that the variability of the estimates is the smallest for all three parameters. When we reduce the sample size, all estimates become more variable.
%
\begin{figure}[h!]
    \centering
    %\begin{subfigure}[b]{0.48\textwidth}
     %\centering
       %\includegraphics[width=8cm]{11May_synthetic2K.pdf}\vspace{-.1cm}
        %\caption{}
        %\label{fig:synt2K}
    %\end{subfigure}
     %add desired spacing between images, e. g. ~, \quad, \qquad, \hfill etc. 
      %(or a blank line to force the subfigure onto a new line)
    \begin{subfigure}[b]{0.48\textwidth}
     \centering
        \caption{}
       \includegraphics[width=8cm]{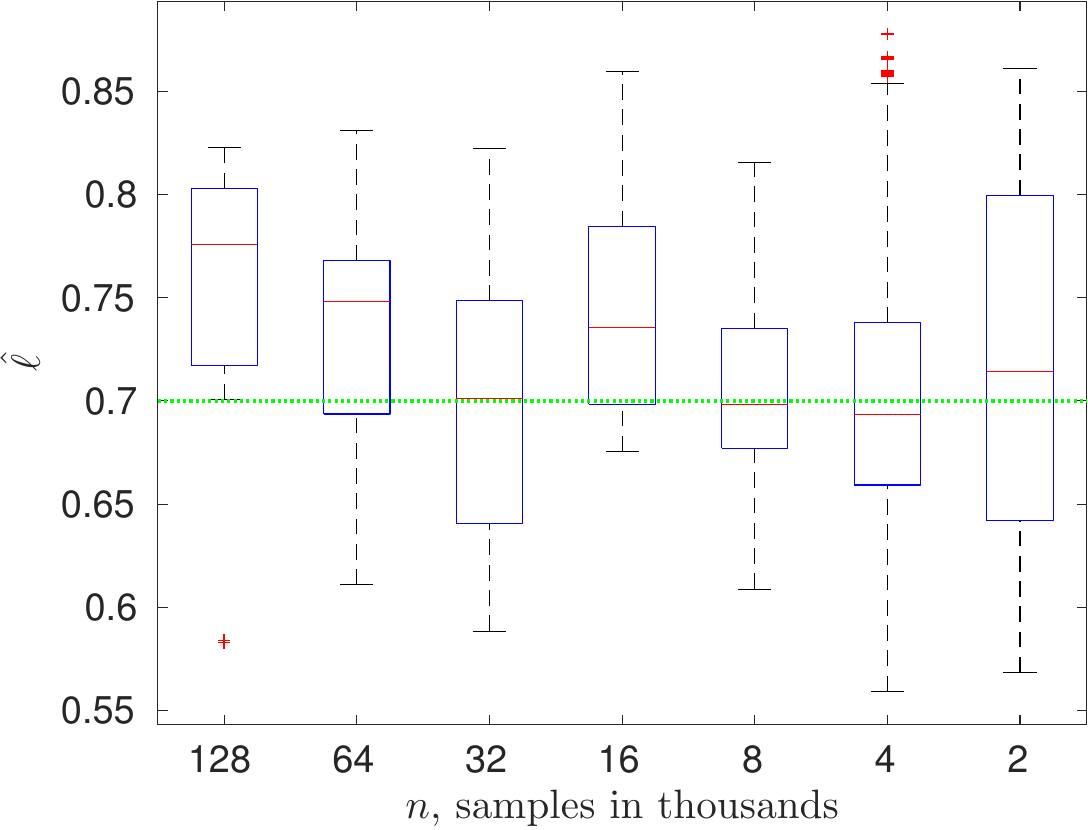}\vspace{-.1cm}
        \label{fig:ell_synt}
    \end{subfigure}
    ~ %add desired spacing between images, e. g. ~, \quad, \qquad, \hfill etc. 
    %(or a blank line to force the subfigure onto a new line)
    \begin{subfigure}[b]{0.48\textwidth}
     \centering
        \caption{}
       \includegraphics[width=8cm]{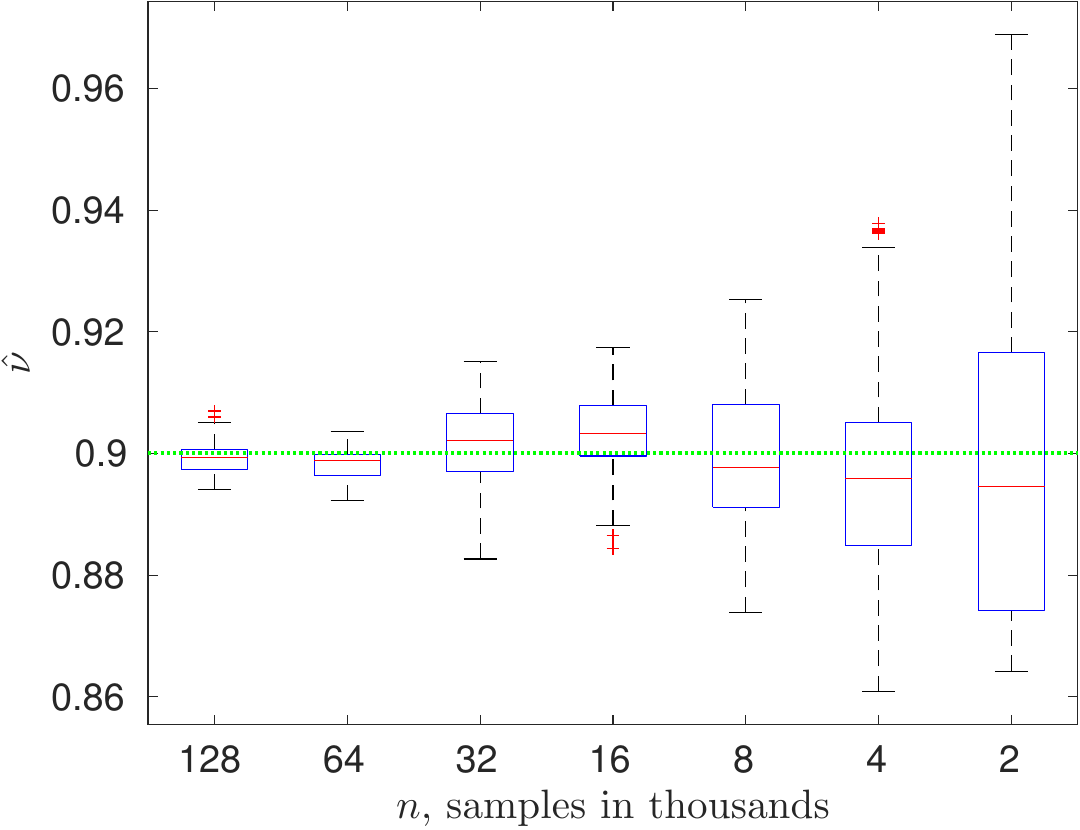}\vspace{-.1cm}
        \label{fig:nu_synt}
    \end{subfigure}\\
    \begin{subfigure}[b]{0.48\textwidth}
     \centering
        \caption{}
       \includegraphics[width=8cm]{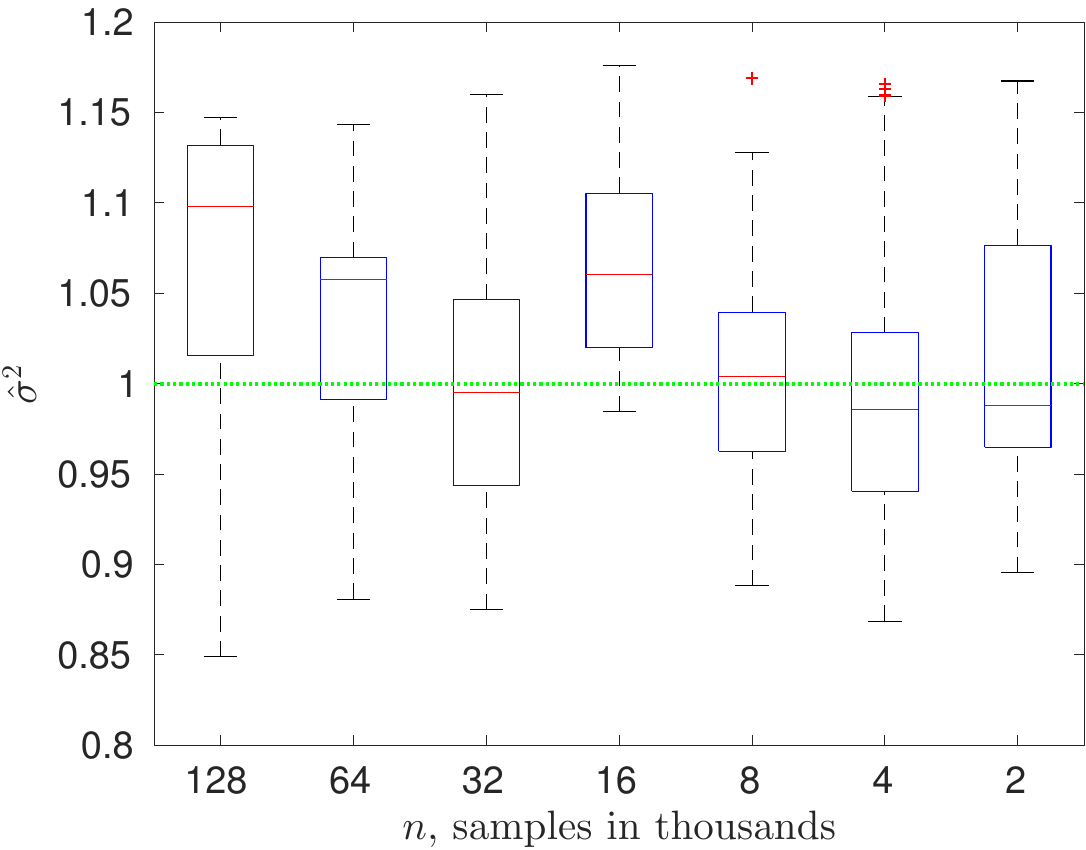}\vspace{-.1cm}
        \label{fig:sigma_synt}
    \end{subfigure}
    \begin{subfigure}[b]{0.48\textwidth}
     \centering
        \caption{}
       \includegraphics[width=8cm]{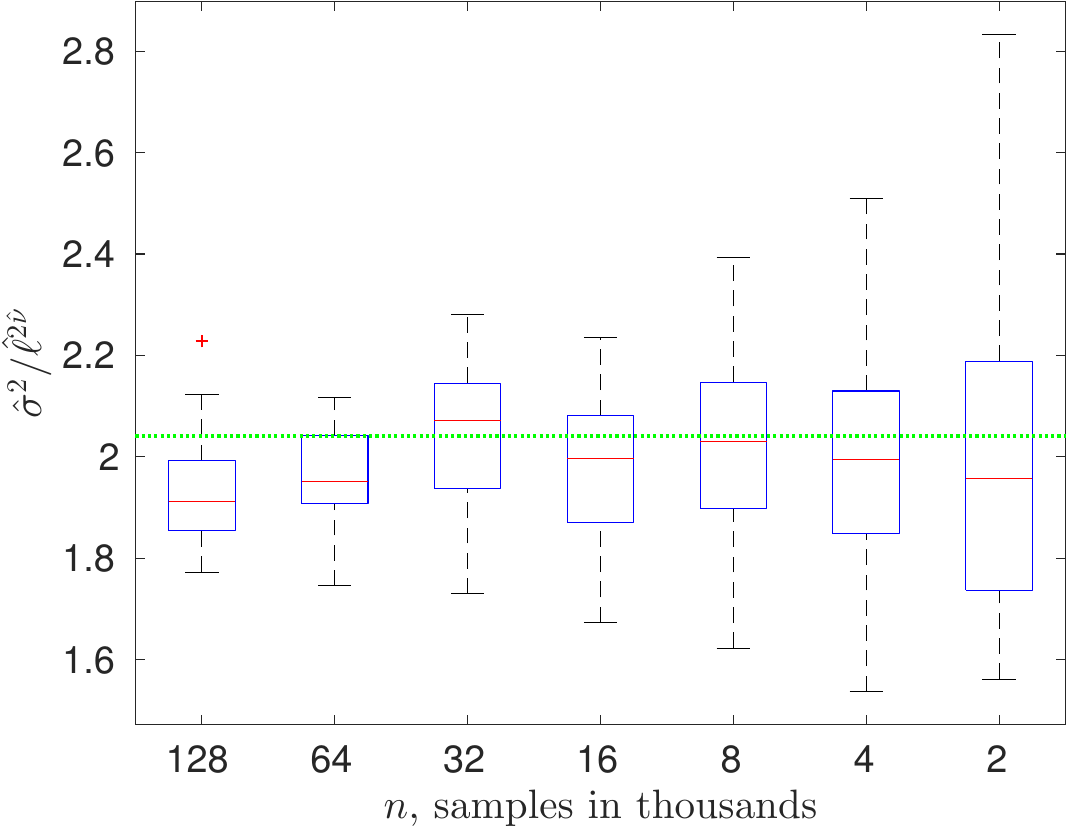}\vspace{-.1cm}
        \label{fig:coeff_synt}
    \end{subfigure}
%\caption{Simulated data with known parameters $(\ell^*, \nu^*,\sigma^*)=(0.0864, 0.5, 1.0)$.  
\caption{  
Boxplots for (a) $\ell$; (b) $\nu$; (c) $\sigma^2$; and (d) $\sigma^2/\ell^{2\nu}$ for $n \in \{128, 64, 32,16, 8, 4, 2 \}\times  1{,}000$. Simulated data with known parameters $(\ell^*, \nu^*,\sigma^*)=(0.7, 0.9, 1.0)$. Boxplots are obtained from 50 replicates; the $\H$-matrix sub-block accuracy is $\varepsilon=10^{-7}$. The green, dotted horizontal line represents the true value of the parameters.}
\label{fig:synthetic_boxes}
%matlab viz_boxplots_&May18C.m
%matlab viz_boxplots_20AugB.m
\end{figure}
\textcolor{black}{This simulation study (Fig.~\ref{fig:synthetic_boxes}) shows boxplots vs $n$. We are able to estimate the unknown parameters of a Mat\'ern covariance function on an irregular grid with a certain accuracy. The parameter $\nu$ (Fig.~\ref{fig:nu_synt}) was identified more accurately than the parameters $\ell$ (Fig.~\ref{fig:ell_synt}) and $\sigma^2$ (Fig.~\ref{fig:sigma_synt}). 
%Figure~\ref{fig:coeff_synt} shows an auxiliary scaling constant $\sigma^2/\ell^{2\nu}$. 
It is difficult to say why we do not see a clear patterns for $\ell$ and $\sigma^2$.
We believe there are a few reasons. First, the iterative optimization method had difficulties to converge (we need to significantly decrease the threshold and increase the maximal number of iterations) for large $n$. This is very expensive regarding the computing time. Second, the log-likelihood is often very flat, and there are multiple solutions which fulfill the threshold requirements. We think that the reason is not in the $\H$-matrix accuracy since we took a very fine accuracy $\varepsilon=10^{-10}$ already. Third, we have fixed the locations for a given size $n$. Finally, under infill asymptotics, the parameters $\ell$, $\nu$, $\sigma^2$ cannot be estimated consistently but the quantity $\sigma^2/\ell^{2\nu}$ can, as seen in Fig.~\ref{fig:coeff_synt}.}

\textcolor{black}{
In Fig.~\ref{fig:50eps_fbplots} we present functional boxplots \cite{SunGenton11} of the estimated parameters as a function of the accuracy $\varepsilon$, based on 50 replicates with $n=\{4000, 8000, 32000\}$ observations. The true values to be identified are $\btheta^*=({\ell}, {\nu}, {\sigma^2},{\sigma^2/\ell^{2\nu}})=(0.7,0.9,1.0, 1.9)$, denoted by the doted green line. Functional boxplots on (a), (b), (c) identify ${\ell}$; on (d), (e), (f)  identify ${\nu}$; on (g), (h), (i) identify ${\sigma}^2$, and on (j), (k), (l) identify ${\sigma^2/\ell^{2\nu}}$.
One can see that we are able to identify all parameters with a good accuracy: the median, denoted by the solid black curve is very close to the dotted green line.
The parameter $\nu$ and the auxiliary coefficient ${\sigma^2/\ell^{2\nu}}$ are identified better than $\ell$ and $\sigma^2$, similarly to Fig.~\ref{fig:synthetic_boxes}. The size of the middle box in the boxplot indicates the variability of the estimates. It decreases when $n$ increases from $4{,}000$ to $32{,}000$.}

\textcolor{black}{The functional boxplots in Fig.~\ref{fig:50eps_fbplots} demonstrate that the estimates are fairly insensitive to the choice of the accuracy $\varepsilon$. Plots of the estimated parameters' curves for 30 replicated with $n=64{,}000$ are presented in Fig.~\ref{fig:50eps} in Appendix~\ref{app:B}.
All these plots suggest when to stop decreasing $\varepsilon$ further. The MLE estimates are already good enough for relatively large $\varepsilon$. In other words, the $\H$-matrix approximations are accurate enough, and to improve the estimates of the parameters, one should improve the optimization procedure (the initial guess, the threshold, the maximal number of iterations).}

%Possible explanations why we do not see a clear pattern for $\ell$ and $\sigma^2$ is that: 1) the set of locations for all $32{,}000$ replicates is the same, %only the values were random and independent, 2) settings of the iterative optimization solver.  We think that if we would sample replicates in different %locations, the confidence intervals would decrease with increasing $n$.}
%
\begin{figure}[htbp!]
\centering
    \centering
    \begin{subfigure}[b]{0.32\textwidth}
     \centering
        \caption{}
       \includegraphics[width=0.99\textwidth]{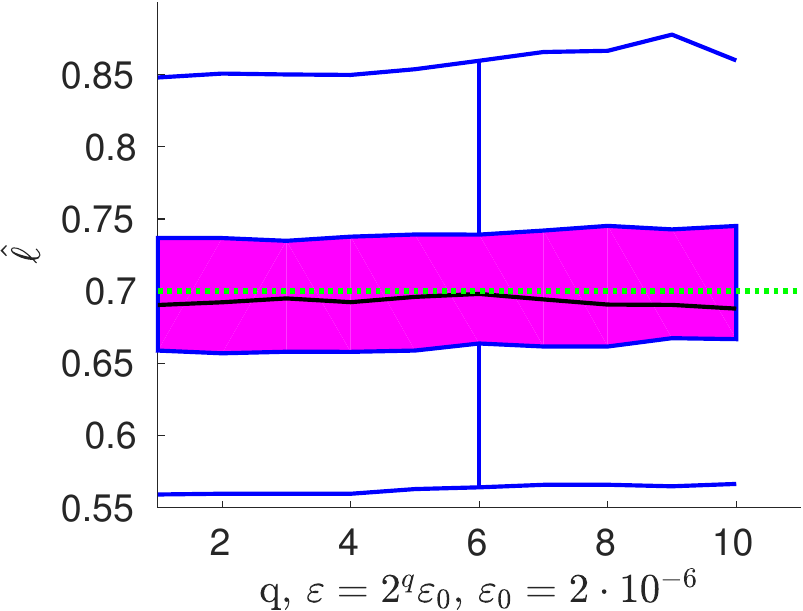}\vspace{-.1cm}
        \label{fig:ell4}
    \end{subfigure}
        \begin{subfigure}[b]{0.32\textwidth}
     \centering
        \caption{}
       \includegraphics[width=0.99\textwidth]{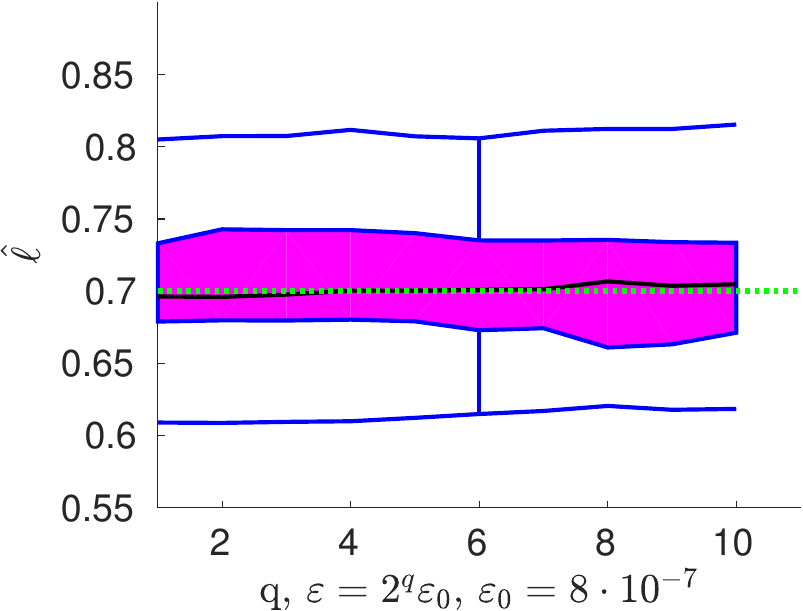}\vspace{-.1cm}
        \label{fig:ell8}
    \end{subfigure}
    \begin{subfigure}[b]{0.32\textwidth}
     \centering
        \caption{}
       \includegraphics[width=0.99\textwidth]{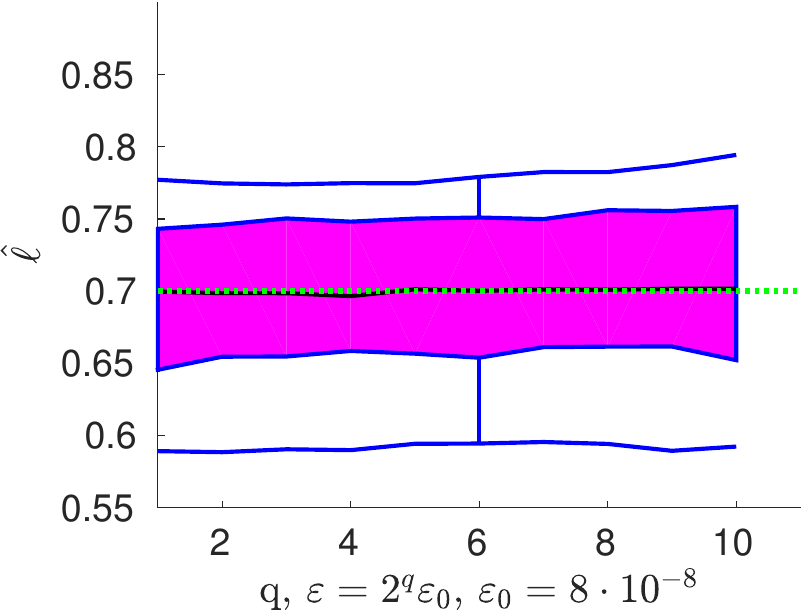}\vspace{-.1cm}
        \label{fig:ell64}
    \end{subfigure}
   \begin{subfigure}[b]{0.32\textwidth}
     \centering
        \caption{}
       \includegraphics[width=0.99\textwidth]{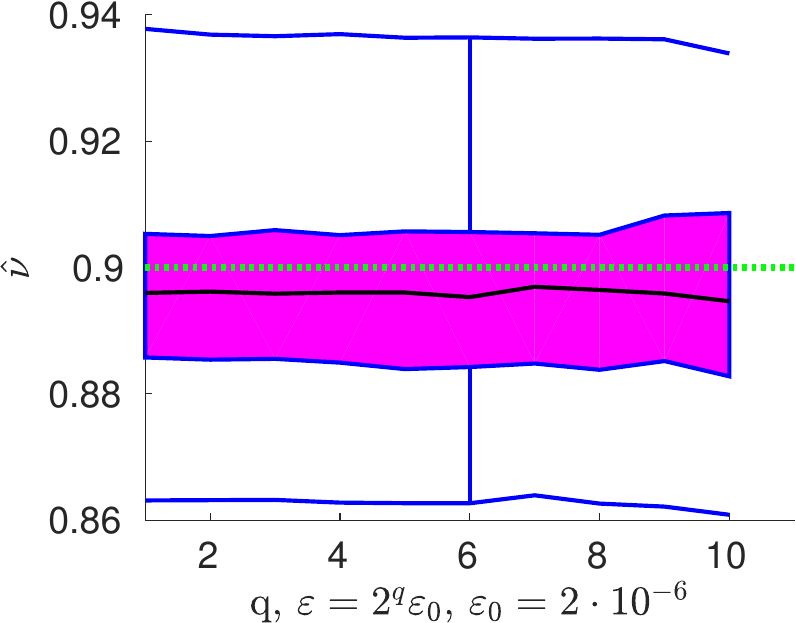}\vspace{-.1cm}
        \label{fig:nu4}
    \end{subfigure}
   \begin{subfigure}[b]{0.32\textwidth}
     \centering
        \caption{}
       \includegraphics[width=0.99\textwidth]{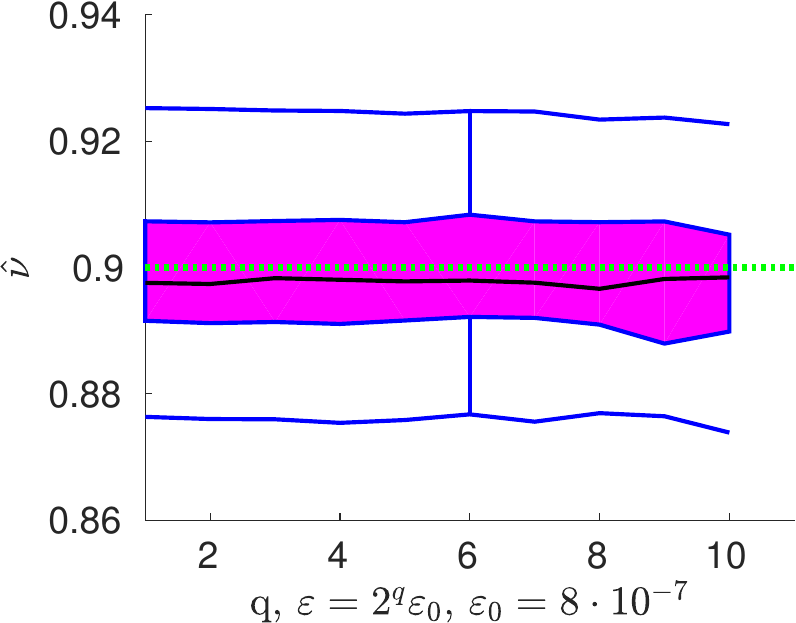}\vspace{-.1cm}
        \label{fig:nu8}
    \end{subfigure}
       \begin{subfigure}[b]{0.32\textwidth}
     \centering
        \caption{}
       \includegraphics[width=0.99\textwidth]{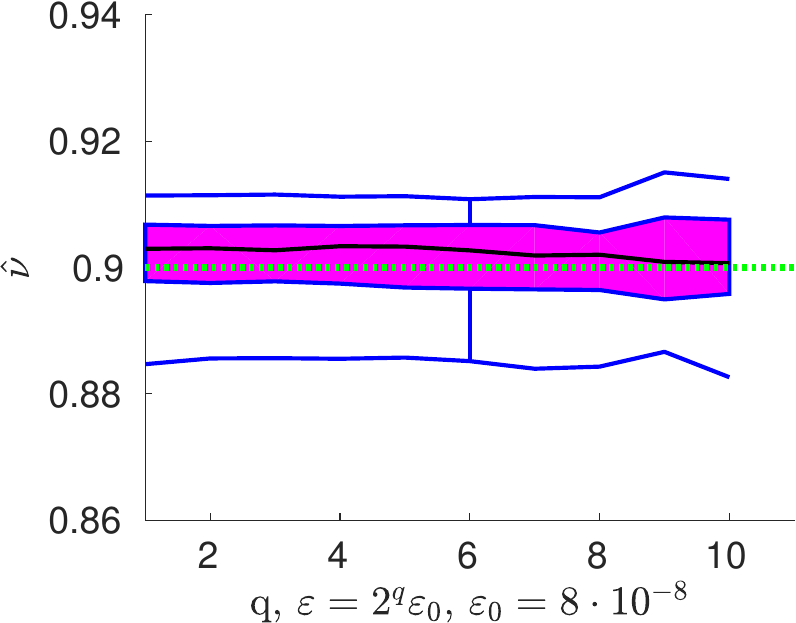}\vspace{-.1cm}
        \label{fig:nu64}
    \end{subfigure}
   \begin{subfigure}[b]{0.32\textwidth}
     \centering
        \caption{}
      \includegraphics[width=0.99\textwidth]{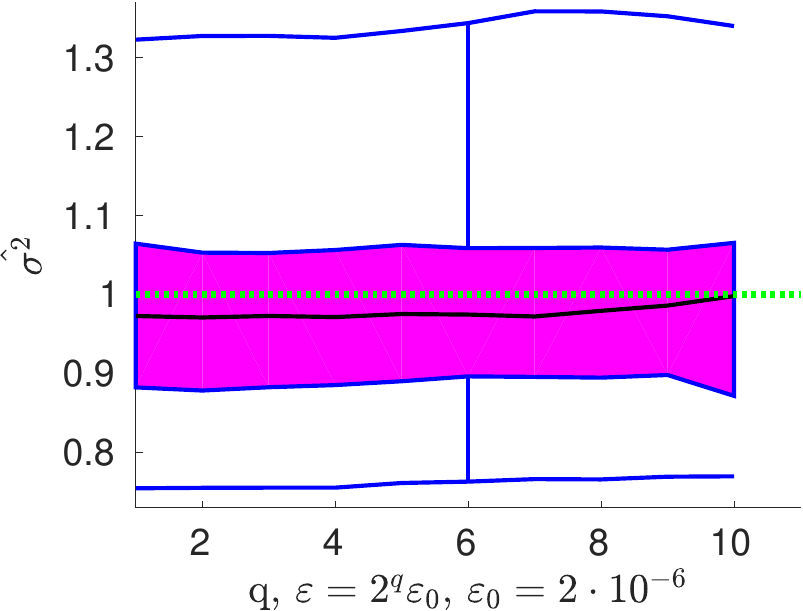}\vspace{-.1cm}
        \label{fig:sigma4}
    \end{subfigure}
   \begin{subfigure}[b]{0.32\textwidth}
     \centering
        \caption{}
      \includegraphics[width=0.99\textwidth]{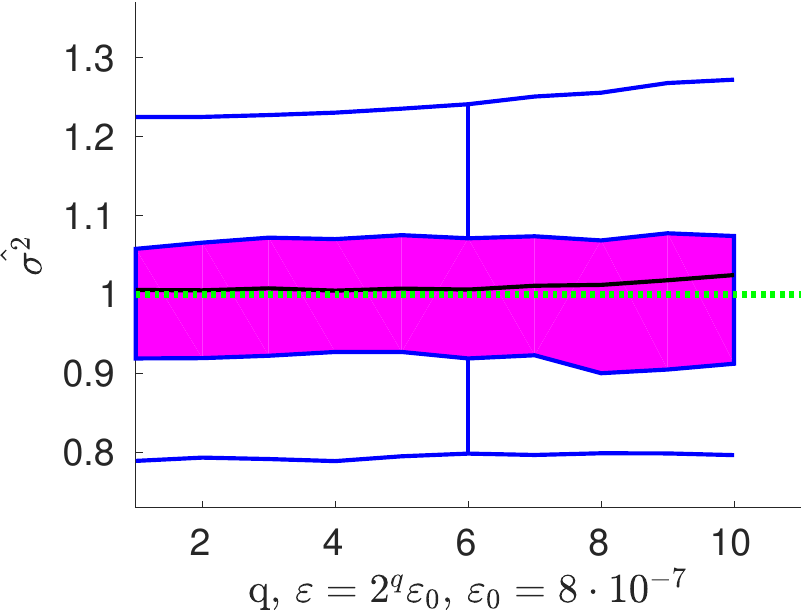}\vspace{-.1cm}
        \label{fig:sigma8}
    \end{subfigure}
   \begin{subfigure}[b]{0.32\textwidth}
     \centering
         \caption{}
     \includegraphics[width=0.99\textwidth]{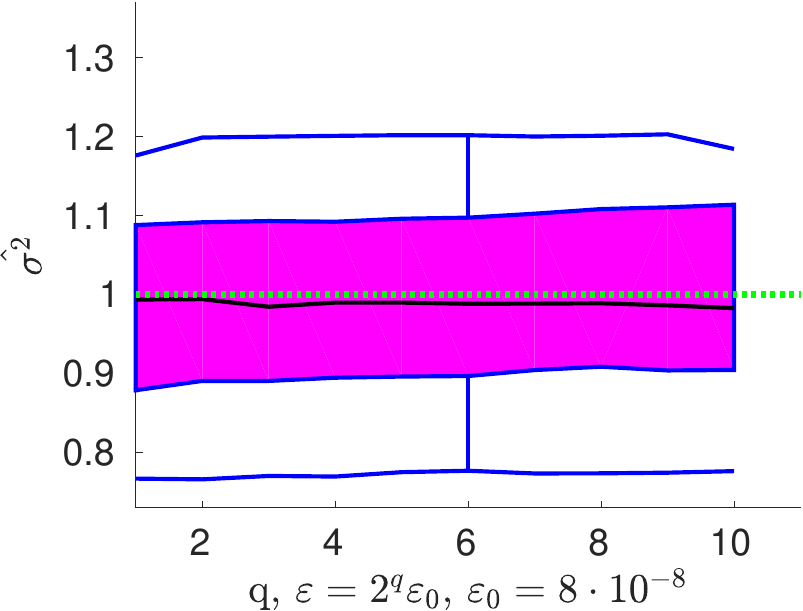}\vspace{-.1cm}
        \label{fig:sigma64}
    \end{subfigure}
   \begin{subfigure}[b]{0.32\textwidth}
     \centering
        \caption{}
      \includegraphics[width=0.99\textwidth]{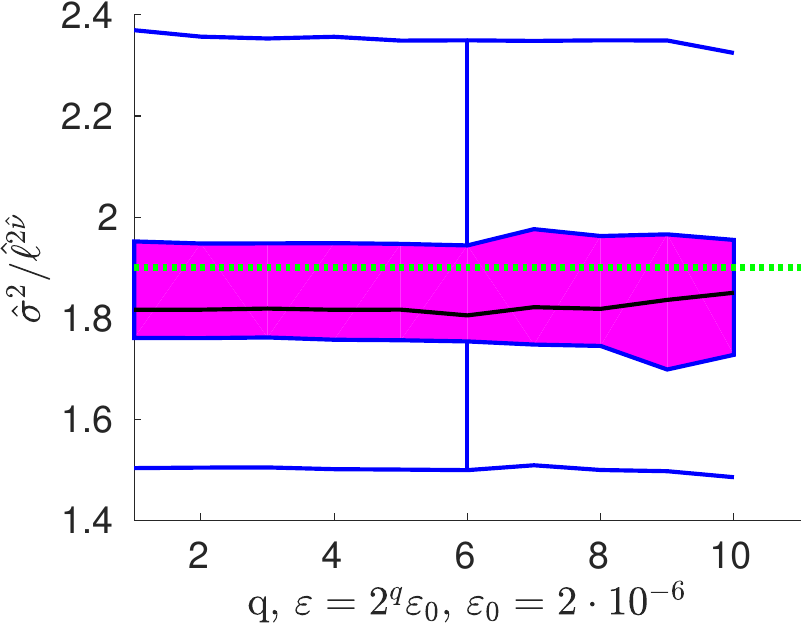}\vspace{-.1cm}
        \label{fig:sigma4}
    \end{subfigure}
   \begin{subfigure}[b]{0.32\textwidth}
     \centering
        \caption{}
      \includegraphics[width=0.99\textwidth]{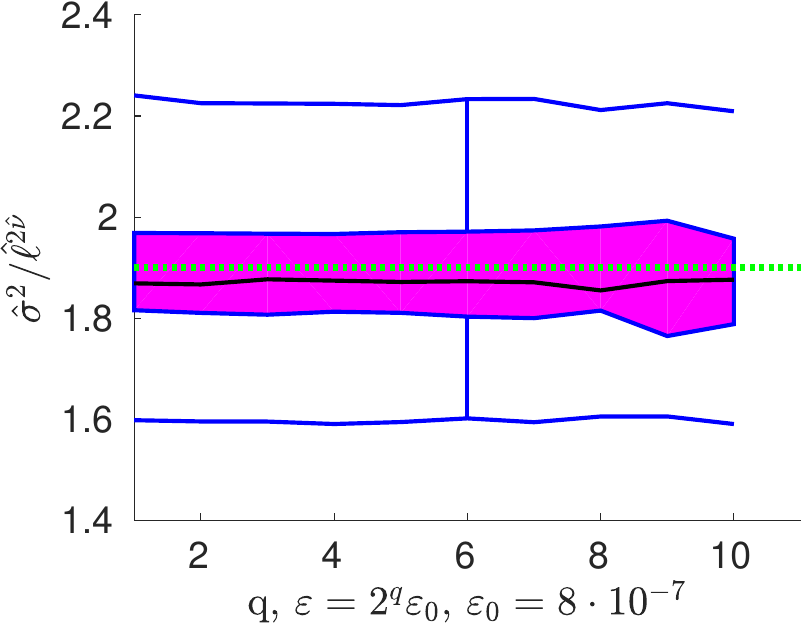}\vspace{-.1cm}
        \label{fig:sigma8}
    \end{subfigure}
   \begin{subfigure}[b]{0.32\textwidth}
     \centering 
         \caption{}
     \includegraphics[width=0.99\textwidth]{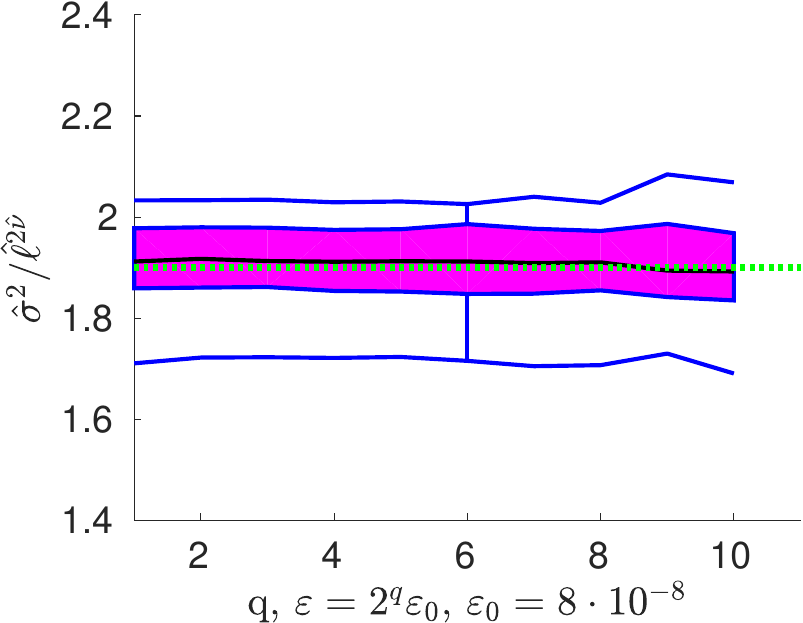}\vspace{-.1cm}
        \label{fig:sigma64}
    \end{subfigure}
\caption{Functional boxplots of the estimated parameters as a function of the accuracy $\varepsilon$, based on 50 replicates with $n=\{4000, 8000, 32000\}$ observations (left to right columns). True parameters $\btheta^*=({\ell}, {\nu}, {\sigma^2},{\sigma^2/\ell^{2\nu}})=(0.7,0.9,1.0, 1.9)$ represented by the green dotted lines. Replicates on (a), (b), (c) identify ${\ell}$; on (d), (e), (f)  identify ${\nu}$; on (g), (h), (i) identify ${\sigma}^2$, and on (j), (k), (l) identify ${\sigma^2/\ell^{2\nu}}$.}
\label{fig:50eps_fbplots}
%matlab script my_errorbox_6Mai18_vs_eps_50.m
% viz_boxplots_18Aug2018B.m calls my_errorbox_18Aug18_vs_eps_50.m, my_errorbox_18Aug18_vs_eps_50_final.m
\end{figure}
%
%
%In Fig.~\ref{fig:BoxplotsCase1} we present three boxplots for the $\H$-matrix ranks $k=\{3,7,9\}$. The true value, denoted by a solid line, is $\ell=0.0334$. When increasing the $\H$-matrix accuracy (rank $k$), the boxplot for $\ell$ contracts. Since the rank $k$ appears quadratically in the complexity estimates and linearly in the storage estimate, one replicate for rank $k=9$ requires $3^2=9$ times more FLOPs than a replicate for rank $k=3$, and three times more storage memory.
%The reason for the shrinking boxplots is that the $\H$-matrix approximation error also decreases with increasing $k$. 
%From these boxplots, we can find the optimal rank $k$, i.e., the computational resources and memory storage needed to reach a certain level of precision.
%
%HERE
\section{Application to Soil Moisture Data}
\label{sec:moisture}
First, we introduce the daily soil moisture data set from a numerical model. Then
we demonstrate how to apply $\H$-matrices to fit a Mat\'ern covariance model. We use just one replicate sampled at $n$ locations. Then we perform numerical tests to see which $n$ is sufficient and explain how to chose the appropriate $\H$-matrix accuracy. We provide the corresponding computational times and storage costs.
\subsection{Problem and data description}
In climate and weather studies, numerical models play an important role in
improving our knowledge of the characteristics of the climate system and the causes behind climate variations. These numerical models describe the evolution of many variables, such as temperature, wind speed, precipitation, humidity, and pressure, by solving a set of equations. The process is usually very complicated, involving physical parameterization, initial condition configurations, numerical integration, and data output schemes. In this section, we use the proposed $\H$-matrix methodology to investigate the spatial variability of soil moisture data, as generated by numerical models.
Soil moisture is a key factor in evaluating the state of the hydrological process and has a broad range of applications in weather forecasting, crop yield prediction, and early warnings of flood and drought. It has been shown that a better characterization of soil moisture can significantly improve weather forecasting. However, numerical models often generate very large datasets due to the high spatial resolutions of the measured data, which makes the computation of the widely used Gaussian process models infeasible. Consequently, the whole region of interest must be divided into blocks of smaller size in order to fit the Gaussian process models independently to each block; alternatively, the size of the dataset must be reduced by averaging to a lower spatial resolution.  Compared to fitting a consistent Gaussian process model to the entire region, it is unclear how much statistical efficiency is lost by such an approximation. Since our proposed $\H$-matrix technique can handle large covariance matrix computations, and the parallel implementation of the algorithm significantly reduces the computational time, we are able to fit Gaussian process models to a set of large subsamples in a given spatial region. Therefore, next we explore the effect of sample size on the statistical efficiency.

We consider high-resolution soil moisture data from January 1, 2014, measured in the topsoil layer of the Mississippi River basin, U.S.A. The spatial resolution is 0.0083 degrees, and the distance of one-degree difference in this region is approximately 87.5 km. The grid consists of $1830 \times 1329 = 2{,} 432{,} 070$ locations with $2{,}000{,}000$ observations and $432{,} 070$ missing values. %with 2,153,888 observations and 278,182 missing values. 
Therefore, the available spatial data are not on a regular grid. We use the same model for the mean process as in Huang and Sun (2018), and fit a zero-mean Gaussian process model with a Mat\'ern covariance function to the residuals; see Huang and Sun (2018) for more details on the data description and exploratory data analysis. 

\subsection{Estimation, computing times and required storage costs}
\label{sec:TimeMoist}
In the next experiment, we estimated the unknown parameters ${\ell}$, ${\nu}$, and ${\sigma^2}$ for different sample sizes $n$. Each time $n$ samples were chosen from the whole set randomly.
Table~\ref{table:approx_compare15} shows the computational time and storage for the $\H$-matrix approximations for $n=\{2000,\ldots, 2000000\}$. All computations are done with the parallel $\H$-matrix toolbox, HLIBpro. The number of computing cores is 40, the RAM memory 128GB.  
It is important to note that the computing time (columns 2 and 5) and the storage cost (columns 3 and 6) are growing nearly linearly with $n$. These numerical computations illustrate the theoretical formulas from Table~\ref{tab:1}. Additionally, we provide the accuracy of the $\H$-Cholesky inverse and estimated parameters. 
The choice of the starting value is important.
First, we run the optimization procedure with $n=2{,}000$ randomly sampled locations, and with the starting value $(\ell_0, \nu_0,\sigma_0^2)=(3, 0.5, 1)$, $\varepsilon=10^{-7}$, the  residual threshold $10^{-7}$ and the maximal number of iterations $1{,}000$. After $78$ iterations the threshold was achieved and the solution is $(\hat{\ell}, \hat{\nu}, \hat{\sigma}^2)=(2.71, 0.257, 1.07)$. This is the starting value for the experiment with $n=4{,}000$. The estimated parameters, calculated with $n=4{,}000$ and 80 iterations, are $(\hat{\ell}, \hat{\nu}, \hat{\sigma}^2)=(3, 0.228, 1.02)$. Second, we take these values as the new starting values for another randomly chosen $n=8{,}000$ locations. After 60 iterations, the residual threshold is again achieved, the new estimated values are $(\hat{\ell}, \hat{\nu}, \hat{\sigma}^2)=(3.6, 0.223, 1.03)$. We repeat this procedure for each new $n$ till $2{,}000{,}000$. At the end, for $n=2{,}000{,}000$ we obtain $(\hat{\ell}, \hat{\nu}, \hat{\sigma}^2)=(1.38, 0.34, 1.1)$.
\begin{table}[htbp!]
\centering
\caption{Estimation of unknown parameters for various $n$ (see columns 1,8,9,10). Computing times and storage costs are given for 1 iteration. The $\H$-matrix accuracy $\varepsilon=10^{-7}$.}
%$\H$-matrix accuracy in each sub-block for both $\widetilde{\bC}$ and $\widetilde{\bL}$ is various.}
\begin{small}
\begin{tabular}{|c|ccc|ccc|ccc|}
\hline
%\multicolumn{2}{c|}{$\Vert \bC (\bC^{\mathcal{H}})^{-1} -I \Vert_2  $}
$ n$ & \multicolumn{3}{c|}{$\widetilde{\bC}$} &  \multicolumn{3}{c|}{$\widetilde{\bL}\widetilde{ \bL}^\top$}&  \multicolumn{3}{c|}{\textbf{Parameters}} \\
             & time & size & kB/$n$  & time & size &  $\Vert \bI-(\widetilde{\bL}\widetilde{\bL}^\top)^{-1}{\bC} \Vert_2$  & $\hat{\ell} $&$\hat{\nu}$& $\hat{\sigma}^2$    \\ 
        &  sec. & MB &     & sec. &  MB &  & & &    \\ 
\hline
 2{,}000      &  0.04 & 5.2    &2.7 & 0.07   & 5.2    &$6.7\cdot 10^{-4}$   & 2.71 & 0.257 & 1.07 \\ % L=-1.8235e+4,  *32{,}000      &  0.6 & 124    &4.0 & 1.6   & 145    &$1.0\cdot 10^{-2}$   & 3.100 & 0.253 & 1.126 \\ % L=-1.8235e+4, total=1309sec. 78 iters , 10e-7, nugget1e-4
 4{,}000      & 0.07  & 12.7  & 3.2& 0.17   & 13 & $1.5\cdot 10^{-4}$   & 3.03 & 0.228 & 1.02  \\  % eps=1e-6,  
 8{,}000      & 0.12  & 29.8  & 3.8& 0.36   & 32 & $3.1\cdot 10^{-4}$   & 3.62 & 0.223 & 1.03  \\  % eps=1e-6  
 16{,}000    & 0.52  & 72.0  & 4.6& 1.11   & 75 &   $2.6\cdot 10^{-4}$   & 1.73 & 0.252 & 0.95  \\  % eps=1e-6  
% *64{,}000      &  1.0 & 274    &4.4 & 3.6   & 333    &$1.7\cdot 10^{-2}$   & 3.108 & 0.270 & 1.208 \\ % L=-1.8235e+4, total=1309sec. 37 iters , 1e-5
% *128{,}000      &  1.9 & 518    &4.1 & 14.5   & 709    &$1.0\cdot 10^{-2}$   & 0.988 & 0.517 & 3.0 \\ % L=-1.8235e+4, total=1309sec. 37 iters , 1e-5
 32{,}000      & 0.93  & 193    & 6.2& 3.04   & 210 & $7.4\cdot 10^{-5}$   & 1.93 & 0.257 & 1.03  \\  % L=-1.8235e+4, total=1309sec. 78 iters 
%64{,}000          & 5.3 & 353   & 5.6 & 5.1 & 392.5 & $5.4\cdot 10^{-3}$ &1.5&0.272&1  \\ 
64{,}000         & 3.3 & 447   & 7.1 & 11.0 & 486 & $1.0\cdot 10^{-4}$ &1.48&0.272&1.01\\ 
%128{,}000      & 13.2 & 770   & 6.2 & 13.7 & 881.2 & $1.0\cdot 10^{-2}$  &1.57& 1.12& 1.52\\ 
128{,}000      & 7.7 & 1160   & 9.5 & 36.7 & 1310 & $3.8\cdot 10^{-5}$  &0.84& 0.302& 0.94\\ 
%256{,}000      & 32.5 & 1600   & 6.6 & 35.5 & 1890 & $1.7\cdot 10^{-2}$   &1.41&0.329& 1.22\\ 
256{,}000      & 13 & 2550   & 10.5 & 64.0 & 2960 & $7.1\cdot 10^{-5}$   &1.41&0.327& 1.24\\ %\eps=1e-8
512{,}000      & 23 & 4740   & 9.7 & 128 & 5800 & $7.1\cdot 10^{-4}$  &1.41& 0.331 & 1.25\\ %\eps=5e-8
%512{,}000     & 52.0 & 3410   & 7.0 & 77.4 & 4150 & $3.4\cdot 10^{-2}$  &1.0& 0.28 & 3.14\\ 
%1{,}000{,}000 & 53 & 11260   & 11 & 361 & 13910 & $3.0\cdot 10^{-4}$   &1.18&0.421&1.32\\ %\eps=1e-8
1{,}000{,}000 & 53 & 11260   & 11 & 361 & 13910 & $3.0\cdot 10^{-4}$   &1.29&0.308&1.00\\ %\eps=1e-8
%1{,}000{,}000 & 103 & 7070   & 7.4 & 187 & 8830 & $6.6\cdot 10^{-2}$   &&&\\ 
2{,}000{,}000 & 124 & 23650   & 12.4 & 1001 & 29610 & $5.2\cdot 10^{-4}$   &1.38& 0.340& 1.10\\ 
\hline
\end{tabular}
\end{small}
\label{table:approx_compare15}  
\end{table}

In Table~\ref{table:costs_vs_eps} we list the computing time and the storage cost (total and divided by $n$) vs. $\H$-matrix accuracy $\varepsilon$ in each sub-block. This table connects the $\H$-matrix sub-block accuracy, the required computing resources, and the   
inversion error $\Vert \bI-(\widetilde{\bL}\widetilde{\bL}^\top)^{-1}{\bC} \Vert_2$. It provides some intuition about how much computational resources are required for each $\varepsilon$ for fixed $n$.
We started with $\varepsilon=10^{-7}$ (the first row) and then multiplied it each time by a factor of $4$.
The last row corresponds to $\varepsilon=6.4\cdot 10^{-3}$, which is sufficient to compute $\widetilde{\bC}$, but is not sufficient to compute the $\H$-Cholesky factor $\widetilde{\bL}$ (the procedure for computing $\H$-Cholesky crashes). We see that the storage and time increase only moderate with decreasing $\varepsilon$.

\begin{table}[h!]
\centering
\caption{Computing time and storage cost vs. accuracy $\varepsilon$ for parallel $\H$-matrix approximation; number of cores is 40, $\hat{\nu}=0.302$, $\hat{\ell}=0.84$, $\hat{\sigma}^2=0.94$, $n=128{,}000$. $\H$-matrix accuracy in each sub-block for both $\widetilde{\bC}$ and $\widetilde{\bL}$ is $\varepsilon$.}
\begin{small}
\begin{tabular}{|c|ccc|ccc|}
\hline
%\multicolumn{2}{c|}{$\Vert \bC (\bC^{\mathcal{H}})^{-1} -I \Vert_2  $}
$ \varepsilon$ & \multicolumn{3}{c|}{$\widetilde{\bC}$} &  \multicolumn{3}{c|}{$\widetilde{\bL}\widetilde{ \bL}^\top$}  \\
             & time & size, MB & kB/$n$  & time & size &  $\Vert \bI-(\widetilde{\bL}\widetilde{\bL}^\top)^{-1}\bC \Vert_2$     \\ 
        &  sec. & GB &     & sec. & kB/dof  &      \\ 
\hline
$1.0\cdot 10^{-7}$   & 5.8 & 1110     & 9.16 &  21.5 & 10.0 & $8.4\cdot 10^{-5}$  \\ 
$4.0\cdot 10^{-7}$   & 5.4 & 1021      & 8.16 &  16.2 & 9.0 & $3.5\cdot 10^{-4}$  \\ 
$1.6\cdot 10^{-6}$   & 4.5 & 913      & 7.3 &  13.3 & 8.1 & $1.3\cdot 10^{-3}$  \\ 
$6.4\cdot 10^{-6}$   & 4.0 & 800      & 6.4 &  10.7 & 7.3 & $4.4\cdot 10^{-3}$  \\ 
$2.6\cdot 10^{-5}$   & 3.5 & 698      & 5.6 &   8.7  & 6.3 & $2.0\cdot 10^{-2}$  \\ 
$1.0\cdot 10^{-4}$   & 3.1 & 625      & 5.0 &   7.0  & 5.5 & $6.4\cdot 10^{-2}$  \\ 
$4.1\cdot 10^{-4}$   & 2.7 & 550      & 4.4 &   6.2  & 5.1 & $2.8\cdot 10^{-1}$  \\ 
$1.6\cdot 10^{-3}$   & 2.4 & 467      & 3.7 &   5.0  & 4.5 & $2.5$  \\ 
$6.4\cdot 10^{-3}$   & 2.3 & 403      & 3.2 &   -  & - & - \\ 
\hline
\end{tabular}
\end{small}
\label{table:costs_vs_eps}  
\end{table}

In Table~\ref{table:costs_2MI} we use the whole daily moisture data set with $2\cdot 10^{6}$ locations to estimate the unknown parameters. We provide computing times to set up the matrix $\widetilde{\bC}$ and to compute its $\H$-Cholesky factorization $\widetilde{\bL}$. Additionally, we list the total storage requirement and the storage divided by $n$ for both $\widetilde{\bC}$ and $\widetilde{\bL}$. 
%The true values of parameters are unknown. Multiple solutions are also possible. 
The initial point for the optimization algorithm is $(\ell_0, \nu_0, \sigma_0^2)=(1.18, 0.42, 1.3)$. \textcolor{black}{We note that the choice of the initial point is very important and may have a strong influence on the final solution (due to the fact that the log-likelihood could be very flat around the optimum).  We started our computations with a rough $\H$-matrix accuracy $\varepsilon=10^{-4}$ and the $\H$-Cholesky procedure crashes. We tried $\varepsilon=10^{-5}$ and received  $(\hat{\ell}, \hat{\nu}, \hat{\sigma})=(1.39, 0.325, 1.01)$. We took them as the initial condition for $\varepsilon=10^{-5}$ and received very similar values, and so on.
To find out which $\H$-matrix accuracy is sufficient, we provide tests for different values of  $\varepsilon \in \{10^{-4}, 10^{-5}, 10^{-6}, 10^{-7}, 10^{-8}\}$. One can see that $\varepsilon=10^{-4}$ is not sufficient (the $\H$-Cholesky procedure crashes), and $\varepsilon=10^{-5}$ provides similar results as $\varepsilon \in \{10^{-6}, 10^{-7}, 10^{-8}\}$. Therefore, our recommendation would be to start with some rather low $\H$-matrix accuracy, to use the obtained $\hat{\btheta}$ as the new initial guess, and then decrease $\varepsilon$ and repeat this a few times.}
\begin{table}[h!]
\centering
\caption{Estimating unknown parameters for various $\H$-matrix accuracies $\varepsilon$; number of cores is 40, $n=2{,}000{,}000$. $\H$-matrix accuracy in each sub-block for both $\widetilde{\bC}$ and $\widetilde{\bL}$ is $\varepsilon$, max. number of iterations is 60, threshold in the iterative solver is $10^{-4}$.}
\begin{small}
\begin{tabular}{|c|ccc|ccc|cccc|}
\hline
%\multicolumn{2}{c|}{$\Vert \bC (\bC^{\mathcal{H}})^{-1} -I \Vert_2  $}
$ \varepsilon$ &  \multicolumn{3}{c|}{Estimated parameters} &  \multicolumn{3}{c|}{Costs, $\widetilde{\bC}$}&  \multicolumn{4}{c|}{Costs, $\widetilde{\bL}$}  \\ \hline
             & $\hat{\ell}$ & $\hat{\nu}$ & $\hat{\sigma}^2$  & time, &size, &size/$n$ & time, &size,&size/$n$, &  $\Vert \bI-(\widetilde{\bL}\widetilde{\bL}^\top)^{-1}\bC \Vert_2$     \\ 
            & & &   & set up & GB & kB & sec.& GB &kB &      \\ \hline
%$10^{-8}$   & 0.618& 0.383      & 0.977& 138  & 29 & 15 & 1057  & 34.4&  18  & $1.5\cdot 10^{-4}$\\ 
$10^{-4}$   & -& -      & - & 58  & 11 & 6 &  - & -& -   & -\\ 
$10^{-5}$  &1.39 & 0.325        & 1.01 & 61  & 12 & 6 &  241   & 15.4& 8.11   & $8.1\cdot 10^{-2}$\\ 
$10^{-6}$   & 1.41& 0.323      & 1.04& 66  & 15 & 7.7 & 379  & 20.1&  10.3  & $5.7\cdot 10^{-2}$\\ %iters18, L=-1.33e+5
$10^{-7}$   & 1.39& 0.330      & 1.05& 114  & 22 &  11&  601 & 27.4&  14.0 & $2.7\cdot 10^{-3}$\\ 
$10^{-8}$   & 1.38& 0.329      & 1.06& 138  & 29 & 15 & 1057  & 34.4&  18.0  & $1.5\cdot 10^{-4}$\\ 
\hline
\end{tabular}
\end{small}
\label{table:costs_2MI}  
\end{table}
\subsection{Reproducibility of the numerical results}
To reproduce the presented numerical simulations, the first step is downloading the HLIB (www.hlib.org) or HLIBPro (www.hlibpro.com) libraries. Both are free for academic purposes. HLIB is a sequential, open-source C-code library, which is relatively easy to use and understand. HLIBPro is the highly tuned parallel library, where only the header and object files are accessible.

After signing the $\H$-matrix (HLIB or HLIBPro) license, downloading, and installing the libraries, our modules for approximating covariance matrices and computing log-likelihoods, and the soil moisture data can be downloaded from $\mbox{https://github.com/litvinen/HLIBCov.git}$.

HLIB requires not only the set of locations, but also the corresponding triangulation of the computing domains, \cite{MYPHD, khoromskij2008domain}. The vertices of the triangles should be used as the location points. To construct a triangulation, we recommend employing the routines in MATLAB, R, or any other third-party software. HLIBPro does not require triangulation; only the coordinates of the locations are needed \cite{litvHLIBPro17}.
%   
%
%
%%%%%%%%%%%%%%%%%%%%%%%%%%%%%%%%%%%%%%%%%%%%%%%%%%%%%%%%%%%%%%%%%%%%%%%%
\section{Conclusion and Discussion}
\label{sec:Conclusion}
We have applied a well-known tool from linear algebra, the $\H$-matrix technique, to 
spatial statistics. This technique makes it possible to work with large datasets of measurements observed on unstructured grids, in particular, to estimate the unknown parameters of a covariance model. The statistical model considered yields Mat\'{e}rn covariance matrices with three unknown parameters,  $\ell$, $\nu$, and $\sigma^2$. 
We applied the $\H$-matrix technique in approximating multivariate Gaussian log-likelihood functions and Mat\'{e}rn covariance matrices. 
The $\H$-matrix technique allowed us to drastically reduce the required memory 
%(Table~\ref{table:eps_det}) 
and computing time. 
%(Table~\ref{table:approx_compare15}).
Within the $\H$-matrix approximation,
we can increase the spatial resolution, take more measurements into account, and consider larger regions with larger data sets. $\H$-matrices require neither axes-parallel grids nor homogeneity of the covariance function.

From the $\H$-matrix approximation of the log-likelihood, we computed the $\H$-Cholesky factorization, the KLD, the log-determinant, and a quadratic form, $\bZ^{\top}\bC^{-1}\bZ$ (Tables~\ref{table:eps_det}, \ref{table:approx_compare_rank}). We demonstrated the computing time, storage requirements, relative errors, and convergence of the $\H$-matrix technique (Tables~\ref{table:approx_compare15}, \ref{table:costs_vs_eps}, \ref{table:costs_2MI}).

We reduced the cost of computing the log-likelihood from cubic to log-linear, 
%(see Table~\ref{table:approx_compare15})
by employing the $\H$-matrix approximations, without significantly changing the log-likelihood.
We considered both simulated examples, where we identified the known parameters, and a large, real data (soil moisture) example, where the parameters were unknown.
We were able to calculate the maximum likelihood estimate ($\hat{\ell}$, $\hat{\nu}$, $\hat{\sigma}^2$) for all examples. We researched the impact of the $\H$-matrix approximation error and the statistical error (see Fig.~\ref{fig:50eps}) to the total error. For both examples we repeated the calculations for 50 replicates and computed box plots. We analyzed the dependence of ($\hat{\ell}$, $\hat{\nu}$, $\hat{\sigma}^2$) on $\varepsilon$ and on the sample size $n$.
%\textcolor{red}{REWRITE: We found a balance between the $\H$-matrix approximation error and the statistical error; Fig.~\ref{fig:50eps}}.
%Boxplots for the estimates of $\ell$, $\nu$, and $\sigma^2$ for $n=\{128{,}000,...,2{,}000\}$ are shown in Fig.~\ref{fig:moist_boxes_whole}.

%We found how the log-likelihood function and MLE depend on the number of locations (Fig.~\ref{fig:logli_vis}), which we varied from 2{,}000 to 128{,}000.
%We investigated how the $\H$-matrix approximation errors are changing concerning the Mat\'{e}rn parameters $\ell$ and $\nu$.
With the parallel $\H$-matrix library HLIBPro, we were able to compute the log-likelihood function for $2{,}000{,}000$ locations in a few minutes (Fig.~\ref{table:approx_compare15}) on a desktop machine, which is $\approx 5$ years old and cost nowadays $\approx 5.000$ USD. At the same time, computation of the maximum likelihood estimates is much more expensive and depends on the number of iterations in the optimization and can take from few hours to few days. In total, the algorithm may need 100-200 iterations (or more, depending on the initial guess and the threshold). If each iteration takes 10 minutes, then we may need 24 hours to get ($\hat{\ell}$, $\hat{\nu}$, $\hat{\sigma}^2$). Possible extension of this work are 1) to reduce the number of required iterations by implementing the first and second derivatives of the likelihood; 2) add nugget to the set of unknown parameters; 3) combine the current framework with the domain decomposition method.\\
%Future plans include a) additionally to three parameters, to infer the nugget; b) implement gradient-based approach; c) demonstrating %the advantages of hierarchical formulations on much larger domains, decomposed into subdomains and estimating unknown %parameters in each sub-domain.\\

\textbf{Acknowledgment}
The research reported in this publication was supported by funding from King Abdullah University of Science and Technology (KAUST). Additionally, we would like to express our special thanks of gratitude to Ronald Kriemann (for the HLIBPro software library) as well as to Lars Grasedyck and Steffen Boerm for the HLIB software library. Alexander Litvinenko was supported by the Bayesian Computational Statistics $\&$ Modeling group (KAUST), and by the SRI-UQ group (KAUST).\\
%\bibliographystyle{plain}
%\bibliographystyle{asa}
%\baselineskip 12pt
%\bibliographystyle{abbrv}
%\bibliography{HmleRef}
%\bibliographystyle{../siamdoifullname}

\textbf{References}
\bibliography{Hcovariance}

\begin{thebibliography}{79}
\providecommand{\natexlab}[1]{#1}
\providecommand{\url}[1]{\texttt{#1}}
\expandafter\ifx\csname urlstyle\endcsname\relax
  \providecommand{\doi}[1]{doi: #1}\else
  \providecommand{\doi}{doi: \begingroup \urlstyle{rm}\Url}\fi

\bibitem[Ambikasaran et~al.(2013)Ambikasaran, Li, Kitanidis, and
  Darve]{ambikasaran2013large}
S.~Ambikasaran, J.~Y. Li, P.~K. Kitanidis, and E.~Darve.
\newblock Large-scale stochastic linear inversion using hierarchical matrices.
\newblock \emph{Computational Geosciences}, 17\penalty0 (6):\penalty0 913--927,
  2013.

\bibitem[Ambikasaran et~al.(2014)Ambikasaran, Foreman-Mackey, Greengard, Hogg,
  and O\'Neil]{ambikasaran2014fast}
S.~Ambikasaran, D.~Foreman-Mackey, L.~Greengard, D.~W. Hogg, and M.~O\'Neil.
\newblock Fast direct methods for {G}aussian processes and the analysis of
  {NASA} {K}epler mission data.
\newblock \emph{ArXiv preprint, https://arxiv.org/abs/1403.6015v1}, 2014.

\bibitem[Ambikasaran et~al.(2016)Ambikasaran, Foreman-Mackey, Greengard, Hogg,
  and O\'Neil]{Ambikasaran16}
S.~Ambikasaran, D.~Foreman-Mackey, L.~Greengard, D.~W. Hogg, and M.~O\'Neil.
\newblock Fast direct methods for gaussian processes.
\newblock \emph{IEEE Transactions on Pattern Analysis and Machine
  Intelligence}, 38\penalty0 (2):\penalty0 252--265, Feb 2016.
\newblock ISSN 0162-8828.
\newblock \doi{10.1109/TPAMI.2015.2448083}.

\bibitem[Anitescu et~al.(2012)Anitescu, Chen, and Wang]{Anitescu2012}
M.~Anitescu, J.~Chen, and L.~Wang.
\newblock A matrix-free approach for solving the parametric {G}aussian process
  maximum likelihood problem.
\newblock \emph{SIAM Journal on Scientific Computing}, 34\penalty0
  (1):\penalty0 240--262, Feb. 2012.
\newblock ISSN 1064-8275.

\bibitem[Ballani and Kressner(2015)]{BallaniKressner}
J.~Ballani and D.~Kressner.
\newblock Sparse inverse covariance estimation with hierarchical matrices.
\newblock
  \emph{\text{http://sma.epfl.ch/$\sim$anchpcommon/publications/quic$\_$ballani$\_$kressner$\_$2014.pdf}},
  2015.

\bibitem[Banerjee et~al.(2008)Banerjee, Gelfand, Finley, and
  Sang]{Banerjee:Gelfand:Finley:Sang:2008}
S.~Banerjee, A.~E. Gelfand, A.~O. Finley, and H.~Sang.
\newblock Gaussian predictive process models for large spatial data sets.
\newblock \emph{Journal of the Royal Statistical Society: Series B (Statistical
  Methodology)}, 70\penalty0 (4):\penalty0 825--848, 2008.
\newblock ISSN 1467-9868.
\newblock \doi{10.1111/j.1467-9868.2008.00663.x}.
\newblock URL \url{http://dx.doi.org/10.1111/j.1467-9868.2008.00663.x}.

\bibitem[Bebendorf(2007)]{bebendorf2007Why}
M.~Bebendorf.
\newblock Why approximate {LU} decompositions of finite element discretizations
  of elliptic operators can be computed with almost linear complexity.
\newblock \emph{SIAM J. Numerical Analysis}, 45:\penalty0 1472--1494, 2007.

\bibitem[Bebendorf and Hackbusch(2003)]{bebendorf2003existence}
M.~Bebendorf and W.~Hackbusch.
\newblock Existence of {H}-matrix approximants to the inverse {FE}-matrix of
  elliptic operators with {L}$^{\infty}$-coefficients.
\newblock \emph{Numerische Mathematik}, 95\penalty0 (1):\penalty0 1--28, 2003.

\bibitem[Bebendorf and Rjasanow(2003)]{ACA}
M.~Bebendorf and S.~Rjasanow.
\newblock Adaptive low-rank approximation of collocation matrices.
\newblock \emph{Computing}, 70\penalty0 (1):\penalty0 1--24, 2003.
\newblock ISSN 0010-485X.

\bibitem[Bebendorf et~al.(2016)Bebendorf, Bollh{\"o}fer, and
  Bratsch]{BebendorfSpEq16}
M.~Bebendorf, M.~Bollh{\"o}fer, and M.~Bratsch.
\newblock On the spectral equivalence of hierarchical matrix preconditioners
  for elliptic problems.
\newblock \emph{Mathematics of Computation}, 85\penalty0 (302):\penalty0
  2839--2861, 2016.

\bibitem[B\"{o}rm and Garcke(2007)]{BoermGarcke2007}
S.~B\"{o}rm and J.~Garcke.
\newblock Approximating {G}aussian processes with ${H^2}$-matrices.
\newblock In J.~N. Kok, J.~Koronacki, R.~L. de~Mantaras, S.~Matwin, D.~Mladen,
  and A.~Skowron, editors, \emph{Proceedings of 18th European Conference on
  Machine Learning, Warsaw, Poland, September 17-21, 2007. ECML 2007}, volume
  4701, pages 42--53, 2007.

\bibitem[B\"orm et~al.(2003)B\"orm, Grasedyck, and Hackbusch]{Winter}
S.~B\"orm, L.~Grasedyck, and W.~Hackbusch.
\newblock \emph{Hierarchical Matrices}, volume~21 of \emph{Lecture Note}.
\newblock Max-Planck Institute for Mathematics, Leipzig, 2003.
\newblock www.mis.mpg.de.

\bibitem[Brent(1973)]{Brent73}
R.~P. Brent.
\newblock Chapter 4: An algorithm with guaranteed convergence for finding a
  zero of a function, algorithms for minimization without derivatives.
\newblock \emph{Englewood Cliffs, NJ: Prentice-Hall}, 1973.

\bibitem[{Castrillon} et~al.(2016){Castrillon}, {Genton}, and
  {Yokota}]{Castrillon16}
J.~E. {Castrillon}, M.~G. {Genton}, and R.~{Yokota}.
\newblock {Multi-Level Restricted Maximum Likelihood Covariance Estimation and
  Kriging for Large Non-Gridded Spatial Datasets}.
\newblock \emph{Spatial Statistics}, 18:\penalty0 105--124, 2016.

\bibitem[Cressie and Johannesson(2008)]{Cressie:Johannesson:2008}
N.~Cressie and G.~Johannesson.
\newblock Fixed rank kriging for very large spatial data sets.
\newblock \emph{Journal of the Royal Statistical Society: Series B (Statistical
  Methodology)}, 70\penalty0 (1):\penalty0 209--226, 2008.
\newblock ISSN 1467-9868.
\newblock \doi{10.1111/j.1467-9868.2007.00633.x}.
\newblock URL \url{http://dx.doi.org/10.1111/j.1467-9868.2007.00633.x}.

\bibitem[Dahlhaus and K\"unsch(1987)]{DAHLHAUS87}
R.~Dahlhaus and H.~K\"unsch.
\newblock Edge effects and efficient parameter estimation for stationary random
  fields.
\newblock \emph{Biometrika}, 74\penalty0 (4):\penalty0 877--882, 1987.
\newblock \doi{10.1093/biomet/74.4.877}.

\bibitem[Datta et~al.(2016)Datta, Banerjee, Finley, and
  Gelfand]{Datta:Banerjee:Finley:Gelfand:2015}
A.~Datta, S.~Banerjee, A.~O. Finley, and A.~E. Gelfand.
\newblock Hierarchical nearest-neighbor {G}aussian process models for large
  geostatistical datasets.
\newblock \emph{Journal of the American Statistical Association}, 111\penalty0
  (514):\penalty0 800--812, 2016.
\newblock \doi{10.1080/01621459.2015.1044091}.
\newblock URL \url{https://doi.org/10.1080/01621459.2015.1044091}.

\bibitem[Dietrich and Newsam(1997)]{Dietrich2}
C.~Dietrich and G.~N. Newsam.
\newblock Fast and exact simulation of stationary {G}aussian processes through
  circulant embedding of the covariance matrix.
\newblock 18(4):\penalty0 1088--107, 1997.

\bibitem[Fuentes(2007)]{Fuentes:2007}
M.~Fuentes.
\newblock Approximate likelihood for large irregularly spaced spatial data.
\newblock \emph{Journal of the American Statistical Association}, 102:\penalty0
  321--331, 2007.

\bibitem[Fuglstad et~al.(2015)Fuglstad, Simpson, Lindgren, and
  Rue]{fuglstad2015does}
G.-A. Fuglstad, D.~Simpson, F.~Lindgren, and H.~Rue.
\newblock Does non-stationary spatial data always require non-stationary random
  fields?
\newblock \emph{Spatial Statistics}, 14:\penalty0 505--531, 2015.

\bibitem[Furrer et~al.(2006)Furrer, Genton, and Nychka]{Furrer2006}
R.~Furrer, M.~G. Genton, and D.~Nychka.
\newblock Covariance tapering for interpolation of large spatial datasets.
\newblock \emph{Journal of Computational and Graphical Statistics}, 15\penalty0
  (3):\penalty0 502--523, 2006.

\bibitem[Giné and Nickl(2015)]{Nickl_2015}
E.~Giné and R.~Nickl.
\newblock \emph{Gaussian Processes}, pages 15--108.
\newblock Cambridge Series in Statistical and Probabilistic Mathematics.
  Cambridge University Press, 2015.
\newblock \doi{10.1017/CBO9781107337862.003}.

\bibitem[Goreinov et~al.(1997)Goreinov, Tyrtyshnikov, and
  Zamarashkin]{TyrtyshACA}
S.~A. Goreinov, E.~E. Tyrtyshnikov, and N.~L. Zamarashkin.
\newblock A theory of pseudoskeleton approximations.
\newblock \emph{Linear Algebra Appl.}, 261:\penalty0 1--21, 1997.
\newblock ISSN 0024-3795.

\bibitem[Grasedyck and Hackbusch(2003)]{GH03}
L.~Grasedyck and W.~Hackbusch.
\newblock Construction and arithmetics of {$\mathcal{H}$}-matrices.
\newblock \emph{Computing}, 70\penalty0 (4):\penalty0 295--334, 2003.
\newblock ISSN 0010-485X.

\bibitem[Guinness and Fuentes(2017)]{guinness2017circulant}
J.~Guinness and M.~Fuentes.
\newblock Circulant embedding of approximate covariances for inference from
  gaussian data on large lattices.
\newblock \emph{Journal of Computational and Graphical Statistics}, 26\penalty0
  (1):\penalty0 88--97, 2017.

\bibitem[Guinness and Ipsen(2015)]{Ipsen15}
J.~Guinness and I.~Ipsen.
\newblock Efficient computation of {G}aussian likelihoods for stationary
  {M}arkov random field models.
\newblock \emph{\text{http://www4.ncsu.edu/~ipsen/ps/MRFlikelihoods.pdf}},
  2015.

\bibitem[Guttorp and Gneiting(2006)]{Guttorp2006a}
P.~Guttorp and T.~Gneiting.
\newblock Studies in the history of probability and statistics {XLIX}: {O}n the
  {M}at\'ern correlation family.
\newblock \emph{Biometrika}, 93:\penalty0 989--995, 2006.
\newblock \doi{10.1093/biomet/93.4.989}.

\bibitem[Hackbusch(1999)]{Part1}
W.~Hackbusch.
\newblock A sparse matrix arithmetic based on {$\mathcal{H}$}-matrices. {I}.
  {I}ntroduction to {$\mathcal{H}$}-matrices.
\newblock \emph{Computing}, 62\penalty0 (2):\penalty0 89--108, 1999.
\newblock ISSN 0010-485X.

\bibitem[Hackbusch(2015)]{HackHMEng}
W.~Hackbusch.
\newblock \emph{Hierarchical matrices: Algorithms and Analysis}, volume~49 of
  \emph{Springer Series in Comp. Math.}
\newblock Springer, 2015.
\newblock ISBN 978-3-662-47323-8.

\bibitem[Hackbusch et~al.(2004)Hackbusch, Khoromskij, and Kriemann]{weak}
W.~Hackbusch, B.~N. Khoromskij, and R.~Kriemann.
\newblock Hierarchical matrices based on a weak admissibility criterion.
\newblock \emph{Computing}, 73\penalty0 (3):\penalty0 207--243, 2004.
\newblock ISSN 0010-485X.

\bibitem[Handcock and Stein(1993)]{Handcock1993a}
M.~S. Handcock and M.~L. Stein.
\newblock A {B}ayesian analysis of kriging.
\newblock \emph{Technometrics}, 35:\penalty0 403--410, 1993.

\bibitem[Harbrecht et~al.(2015)Harbrecht, Peters, and
  Siebenmorgen]{harbrecht2015efficient}
H.~Harbrecht, M.~Peters, and M.~Siebenmorgen.
\newblock Efficient approximation of random fields for numerical applications.
\newblock \emph{Numerical Linear Algebra with Applications}, 22\penalty0
  (4):\penalty0 596--617, 2015.

\bibitem[Ho and Ying(2015)]{ho2015hierarchical}
K.~L. Ho and L.~Ying.
\newblock Hierarchical interpolative factorization for elliptic operators:
  differential equations.
\newblock \emph{Communications on Pure and Applied Mathematics}, 2015.

\bibitem[Hsieh et~al.(2013)Hsieh, Sustik, Dhillon, Ravikumar, and
  Poldrack]{QUIC13}
C.-J. Hsieh, M.~A. Sustik, I.~S. Dhillon, P.~K. Ravikumar, and R.~Poldrack.
\newblock Big \& {QUIC}: Sparse inverse covariance estimation for a million
  variables.
\newblock In C.~Burges, L.~Bottou, M.~Welling, Z.~Ghahramani, and
  K.~Weinberger, editors, \emph{Advances in Neural Information Processing
  Systems 26}, pages 3165--3173. Curran Associates, Inc., 2013.

\bibitem[Huang and Sun(2018)]{Huang2016}
H.~Huang and Y.~Sun.
\newblock Hierarchical low rank approximation of likelihoods for large spatial
  datasets.
\newblock \emph{Journal of Computational and Graphical Statistics}, 27\penalty0
  (1):\penalty0 110--118, 2018.
\newblock \doi{10.1080/10618600.2017.1356324}.
\newblock URL \url{https://doi.org/10.1080/10618600.2017.1356324}.

\bibitem[Ipsen and Lee(2005)]{Ipsen05}
I.~C.~F. Ipsen and D.~J. Lee.
\newblock Determinant approximations.
\newblock \emph{\text{http://arxiv.org/abs/1105.0437v1}}, 2005.

\bibitem[{Katzfuss} and {Guinness}(2017)]{Katzfuss17}
M.~{Katzfuss} and J.~{Guinness}.
\newblock {A general framework for Vecchia approximations of Gaussian
  processes}.
\newblock \emph{ArXiv e-prints}, Aug. 2017.

\bibitem[Kaufman et~al.(2008)Kaufman, Schervish, and Nychka]{Kaufman2008}
C.~G. Kaufman, M.~J. Schervish, and D.~W. Nychka.
\newblock Covariance tapering for likelihood-based estimation in large spatial
  datasets.
\newblock \emph{Journal of the American Statistical Association}, 103\penalty0
  (484):\penalty0 1545--1555, 2008.

\bibitem[Khoromskij and Litvinenko(2008{\natexlab{a}})]{khoromskij2008data}
B.~N. Khoromskij and A.~Litvinenko.
\newblock Data sparse computation of the {K}arhunen-{L}o{\`e}ve expansion.
\newblock \emph{AIP Conference Proceedings}, 1048\penalty0 (1):\penalty0
  310--311, 2008{\natexlab{a}}.

\bibitem[Khoromskij and Litvinenko(2008{\natexlab{b}})]{khoromskij2008domain}
B.~N. Khoromskij and A.~Litvinenko.
\newblock Domain decomposition based {H}-matrix preconditioners for the skin
  problem.
\newblock \emph{Domain Decomposition Methods in Science and Engineering XVII},
  pages 175--182, 2008{\natexlab{b}}.

\bibitem[Khoromskij et~al.(2009)Khoromskij, Litvinenko, and
  Matthies]{khoromskij2009application}
B.~N. Khoromskij, A.~Litvinenko, and H.~G. Matthies.
\newblock Application of hierarchical matrices for computing the
  {K}arhunen--{L}o{\`e}ve expansion.
\newblock \emph{Computing}, 84\penalty0 (1-2):\penalty0 49--67, 2009.

\bibitem[Kleiber and Nychka(2015)]{kleiber2015equivalent}
W.~Kleiber and D.~W. Nychka.
\newblock Equivalent kriging.
\newblock \emph{Spatial Statistics}, 12:\penalty0 31--49, 2015.

\bibitem[Kriemann(2005)]{HLIBPRO}
R.~Kriemann.
\newblock Parallel {H}-matrix arithmetics on shared memory systems.
\newblock \emph{Computing}, 74\penalty0 (3):\penalty0 273--297, 2005.
\newblock ISSN 0010-485x.
\newblock \doi{10.1007/s00607-004-0102-2}.
\newblock URL
  \url{http://www.mis.mpg.de/de/publications/preprints/2004/prepr2004-29.html}.

\bibitem[Li et~al.(2014)Li, Ambikasaran, Darve, and Kitanidis]{Li14}
J.~Y. Li, S.~Ambikasaran, E.~F. Darve, and P.~K. Kitanidis.
\newblock A {K}alman filter powered by {H2}-matrices for quasi-€continuous
  data assimilation problems.
\newblock \emph{Water Resources Research}, 50\penalty0 (5):\penalty0
  3734--3749, 2014.
\newblock \doi{10.1002/2013WR014607}.
\newblock URL
  \url{https://agupubs.onlinelibrary.wiley.com/doi/abs/10.1002/2013WR014607}.

\bibitem[Litvinenko(2006)]{MYPHD}
A.~Litvinenko.
\newblock Application of hierarchical matrices for solving multiscale
  problems., 2006.
\newblock
  \url{https://www.researchgate.net/publication/315607074_Application_of_Hierarchical_Matrices_for_Solving_Multi-Scale_Problems}.

\bibitem[Litvinenko(2017)]{LitvGitHubHcov}
A.~Litvinenko.
\newblock {Library: Log-likelihood approximation with hierarchical matrices},
  2017.
\newblock URL \url{https://github.com/litvinen/HLIBCov.git}.

\bibitem[{Litvinenko}(2017{\natexlab{a}})]{litvHLIBPro17}
A.~{Litvinenko}.
\newblock {HLIBCov: Parallel Hierarchical Matrix Approximation of Large
  Covariance Matrices and Likelihoods with Applications in Parameter
  Identification}.
\newblock \emph{ArXiv preprint, http://arxiv.org/abs/1709.08625}, Sept.
  2017{\natexlab{a}}.

\bibitem[{Litvinenko}(2017{\natexlab{b}})]{litvinenko2017partial}
A.~{Litvinenko}.
\newblock {Partial inversion of elliptic operator to speed up computation of
  likelihood in Bayesian inference}, 2017{\natexlab{b}}.

\bibitem[Litvinenko and Matthies(2009)]{litvinenko2009sparse}
A.~Litvinenko and H.~G. Matthies.
\newblock Sparse data representation of random fields.
\newblock \emph{PAMM}, 9\penalty0 (1):\penalty0 587--588, 2009.

\bibitem[Litvinenko and Matthies(2013)]{litvinenko2013inverse}
A.~Litvinenko and H.~G. Matthies.
\newblock Inverse problems and uncertainty quantification.
\newblock \emph{arXiv preprint arXiv:1312.5048}, 2013.

\bibitem[{Litvinenko} et~al.(2017){Litvinenko}, {Sun}, {Genton}, and
  {Keyes}]{LitvGentonSunKeyes17}
A.~{Litvinenko}, Y.~{Sun}, M.~G. {Genton}, and D.~{Keyes}.
\newblock {Likelihood Approximation With Hierarchical Matrices For Large
  Spatial Datasets}.
\newblock \emph{ArXiv preprint, http://arxiv.org/abs/1709.04419}, Sept. 2017.

\bibitem[{Litvinenko} et~al.(2018){Litvinenko}, {Keyes}, {Khoromskaia},
  {Khoromskij}, and {Matthies}]{litv17Tensor}
A.~{Litvinenko}, D.~{Keyes}, V.~{Khoromskaia}, B.~N. {Khoromskij}, and H.~G.
  {Matthies}.
\newblock {Tucker Tensor analysis of Matern functions in spatial statistics}.
\newblock \emph{Computational Methods in Applied Mathematics}, Nov. 2018.
\newblock \doi{https://doi.org/10.1515/cmam-2018-0022}.

\bibitem[Martinsson and Rokhlin(2005)]{martinsson2005fast}
P.-G. Martinsson and V.~Rokhlin.
\newblock A fast direct solver for boundary integral equations in two
  dimensions.
\newblock \emph{Journal of Computational Physics}, 205\penalty0 (1):\penalty0
  1--23, 2005.

\bibitem[Mat\'{e}rn(1986)]{Matern1986a}
B.~Mat\'{e}rn.
\newblock \emph{Spatial Variation}, volume~36 of \emph{Lecture Notes in
  Statistics}.
\newblock Springer-Verlag, Berlin; New York, second edition edition, 1986.

\bibitem[Matthies et~al.(2012)Matthies, Litvinenko, Pajonk, Rosi\'c, and
  Zander]{matthies2012parametric}
H.~Matthies, A.~Litvinenko, O.~Pajonk, B.~V. Rosi\'c, and E.~Zander.
\newblock Parametric and uncertainty computations with tensor product
  representations.
\newblock In A.~M. Dienstfrey and R.~F. Boisvert, editors, \emph{Uncertainty
  Quantification in Scientific Computing}, volume 377 of \emph{IFIP Advances in
  Information and Communication Technology}, pages 139--150. Springer Berlin
  Heidelberg, 2012.
\newblock ISBN 978-3-642-32676-9.

\bibitem[Nowak and Litvinenko(2013)]{nowak2013kriging}
W.~Nowak and A.~Litvinenko.
\newblock Kriging and spatial design accelerated by orders of magnitude:
  combining low-rank covariance approximations with {FFT}-techniques.
\newblock \emph{Mathematical Geosciences}, 45\penalty0 (4):\penalty0 411--435,
  2013.

\bibitem[Nychka et~al.(2015)Nychka, Bandyopadhyay, Hammerling, Lindgren, and
  Sain]{nychka2015multiresolution}
D.~Nychka, S.~Bandyopadhyay, D.~Hammerling, F.~Lindgren, and S.~Sain.
\newblock A multiresolution {G}aussian process model for the analysis of large
  spatial datasets.
\newblock \emph{Journal of Computational and Graphical Statistics}, 24\penalty0
  (2):\penalty0 579--599, 2015.

\bibitem[Pajonk et~al.(2012)Pajonk, Rosi{\'c}, Litvinenko, and
  Matthies]{pajonk2012deterministic}
O.~Pajonk, B.~V. Rosi{\'c}, A.~Litvinenko, and H.~G. Matthies.
\newblock A deterministic filter for non-{G}aussian {B}ayesian estimation ---
  applications to dynamical system estimation with noisy measurements.
\newblock \emph{Physica D: Nonlinear Phenomena}, 241\penalty0 (7):\penalty0
  775--788, 2012.

\bibitem[Press et~al.(2007)Press, Teukolsky, Vetterling, and
  Flannery]{BrentMethod07}
W.~H. Press, S.~A. Teukolsky, W.~T. Vetterling, and B.~P. Flannery.
\newblock \emph{Section 9.3. Van Wijngaarden-Dekker-Brent Method. Numerical
  Recipes: The Art of Scientific Computing}, volume 3rd ed.
\newblock New York: Cambridge University Press., 2007.
\newblock ISBN ISBN- 978-0-521-88068-8.

\bibitem[{Q}uinonero{-}{C}andela et~al.(2005){Q}uinonero{-}{C}andela,
  Rasmussen, and Herbrich]{Rasmussen05}
J.~{Q}uinonero{-}{C}andela, C.~E. Rasmussen, and R.~Herbrich.
\newblock A unifying view of sparse approximate {G}aussian process regression.
\newblock \emph{Journal of Machine Learning Research}, 6:\penalty0 1939--1959,
  2005.

\bibitem[Rosi{\'c} et~al.(2012)Rosi{\'c}, Litvinenko, Pajonk, and
  Matthies]{rosic2012sampling}
B.~V. Rosi{\'c}, A.~Litvinenko, O.~Pajonk, and H.~G. Matthies.
\newblock Sampling-free linear {B}ayesian update of polynomial chaos
  representations.
\newblock \emph{Journal of Computational Physics}, 231\penalty0 (17):\penalty0
  5761--5787, 2012.

\bibitem[Rosi{\'c} et~al.(2013)Rosi{\'c}, Ku{\v{c}}erov{\'a}, S{\`y}kora,
  Pajonk, Litvinenko, and Matthies]{rosic2013parameter}
B.~V. Rosi{\'c}, A.~Ku{\v{c}}erov{\'a}, J.~S{\`y}kora, O.~Pajonk,
  A.~Litvinenko, and H.~G. Matthies.
\newblock Parameter identification in a probabilistic setting.
\newblock \emph{Engineering Structures}, 50:\penalty0 179--196, 2013.

\bibitem[Rue and Held(2005)]{rue2005gaussian}
H.~Rue and L.~Held.
\newblock Gaussian {M}arkov random fields: theory and applications, 2005.

\bibitem[Rue and Tjelmeland(2002)]{rue2002fitting}
H.~Rue and H.~Tjelmeland.
\newblock Fitting {G}aussian {M}arkov random fields to {G}aussian fields.
\newblock \emph{Scandinavian {J}ournal of Statistics}, 29\penalty0
  (1):\penalty0 31--49, 2002.

\bibitem[Saibaba and Kitanidis()]{SaibabaKitanidis12}
A.~Saibaba and P.~Kitanidis.
\newblock Efficient methods for large-scale linear inversion using a
  geostatistical approach.
\newblock \emph{Water Resources Research}, 48\penalty0 (5).
\newblock \doi{10.1029/2011WR011778}.
\newblock URL
  \url{https://agupubs.onlinelibrary.wiley.com/doi/abs/10.1029/2011WR011778}.

\bibitem[Saibaba and Kitanidis(2015)]{SaibabaKitanidis15UQ}
A.~Saibaba and P.~Kitanidis.
\newblock Fast computation of uncertainty quantification measures in the
  geostatistical approach to solve inverse problems.
\newblock \emph{Advances in Water Resources}, 82:\penalty0 124 -- 138, 2015.
\newblock ISSN 0309-1708.
\newblock \doi{https://doi.org/10.1016/j.advwatres.2015.04.012}.
\newblock URL
  \url{http://www.sciencedirect.com/science/article/pii/S030917081500086X}.

\bibitem[Saibaba et~al.(2012)Saibaba, Ambikasaran, Yue~Li, Kitanidis, and
  Darve]{saibaba2012application}
A.~Saibaba, S.~Ambikasaran, J.~Yue~Li, P.~K. Kitanidis, and E.~Darve.
\newblock Application of hierarchical matrices to linear inverse problems in
  geostatistics.
\newblock \emph{Oil \& Gas Science and Technology--Rev. €™IFP Energies
  {N}ouvelles}, 67\penalty0 (5):\penalty0 857--875, 2012.

\bibitem[Sang and Huang(2012)]{sang2012full}
H.~Sang and J.~Z. Huang.
\newblock A full scale approximation of covariance functions for large spatial
  data sets.
\newblock \emph{Journal of the Royal Statistical Society: Series B (Statistical
  Methodology)}, 74\penalty0 (1):\penalty0 111--132, 2012.

\bibitem[Si et~al.(2014)Si, Hsieh, and Dhillon]{si2014memoryef}
S.~Si, C.-J. Hsieh, and I.~S. Dhillon.
\newblock Memory efficient kernel approximation.
\newblock In \emph{International Conference on Machine Learning (ICML)}, jun
  2014.

\bibitem[Stein(2013)]{stein2013statistical}
M.~L. Stein.
\newblock Statistical properties of covariance tapers.
\newblock \emph{Journal of Computational and Graphical Statistics}, 22\penalty0
  (4):\penalty0 866--885, 2013.

\bibitem[Stein(2014)]{stein2014limitations}
M.~L. Stein.
\newblock Limitations on low rank approximations for covariance matrices of
  spatial data.
\newblock \emph{Spatial Statistics}, 8:\penalty0 1--19, 2014.

\bibitem[Stein et~al.(2004)Stein, Chi, and Welty]{Stein:Chi:Wetly:2004}
M.~L. Stein, Z.~Chi, and L.~J. Welty.
\newblock Approximating likelihoods for large spatial data sets.
\newblock \emph{Journal of the Royal Statistical Society: Series B (Statistical
  Methodology)}, 66\penalty0 (2):\penalty0 275--296, 2004.
\newblock ISSN 1467-9868.
\newblock \doi{10.1046/j.1369-7412.2003.05512.x}.
\newblock URL \url{http://dx.doi.org/10.1046/j.1369-7412.2003.05512.x}.

\bibitem[Stein et~al.(2013)Stein, Chen, and Anitescu]{Stein2013}
M.~L. Stein, J.~Chen, and M.~Anitescu.
\newblock Stochastic approximation of score functions for gaussian processes.
\newblock \emph{Ann. Appl. Stat.}, 7\penalty0 (2):\penalty0 1162--1191, 06
  2013.
\newblock \doi{10.1214/13-AOAS627}.
\newblock URL \url{https://doi.org/10.1214/13-AOAS627}.

\bibitem[Stroud et~al.(2017)Stroud, Stein, and Lysen]{stroud2017bayesian}
J.~R. Stroud, M.~L. Stein, and S.~Lysen.
\newblock Bayesian and maximum likelihood estimation for gaussian processes on
  an incomplete lattice.
\newblock \emph{Journal of Computational and Graphical Statistics}, 26\penalty0
  (1):\penalty0 108--120, 2017.

\bibitem[Sun and Genton(2011)]{SunGenton11}
Y.~Sun and M.~G. Genton.
\newblock Functional boxplots.
\newblock \emph{Journal of Computational and Graphical Statistics}, 20\penalty0
  (2):\penalty0 316--334, 2011.
\newblock \doi{10.1198/jcgs.2011.09224}.
\newblock URL \url{https://doi.org/10.1198/jcgs.2011.09224}.

\bibitem[Sun and Stein(2016)]{sun2016statistically}
Y.~Sun and M.~L. Stein.
\newblock Statistically and computationally efficient estimating equations for
  large spatial datasets.
\newblock \emph{Journal of Computational and Graphical Statistics}, 25\penalty0
  (1):\penalty0 187--208, 2016.

\bibitem[Sun et~al.(2012)Sun, Li, and Genton]{Sun2012}
Y.~Sun, B.~Li, and M.~G. Genton.
\newblock Geostatistics for large datasets.
\newblock In M.~Porcu, J.~M. Montero, and M.~Schlather, editors,
  \emph{Space-Time Processes and Challenges Related to Environmental Problems},
  pages 55--77. Springer, 2012.

\bibitem[Vecchia(1988)]{Vecchia88}
A.~V. Vecchia.
\newblock Estimation and model identification for continuous spatial processes.
\newblock \emph{Journal of the Royal Statistical Society. Series B
  (Methodological)}, 50\penalty0 (2):\penalty0 297--312, 1988.
\newblock ISSN 00359246.
\newblock URL \url{http://www.jstor.org/stable/2345768}.

\bibitem[Whittle(1954)]{WHITTLE54}
P.~Whittle.
\newblock On stationary processes in the plane.
\newblock \emph{Biometrika}, 41\penalty0 (3-4):\penalty0 434--449, 1954.
\newblock \doi{10.1093/biomet/41.3-4.434}.

\end{thebibliography}

%%%%%%%%%%%%%%%%%%%%%%%%%%%%%%%%%%%%%%%%%%%%%%%%%%%%%%%%%%%%%%%%%%%%%%%%
%\newpage 
\begin{appendices}
\section{Appendix: Error estimates}
\label{appendix:A}

%In \cmmnt{\citet}\cite{Bebendorf11, BallaniKressner} one can find the theorem about existence of an $\H$-matrix inverse. Under certain conditions the following estimation holds
%\begin{equation}
%\Vert \widetilde{\bC}^{-1} - \bC^{-1} \Vert \leq \varepsilon \Vert \bC^{-1} \Vert. 
%\end{equation}

%See proof of the first part of Theorem \ref{thm:MainLL} in \cmmnt{\citep}\cite{BallaniKressner}.
%
%\begin{equation}
%\Vert \widetilde{\bL}^{-1}\bZ - \bL^{-1}\bZ \Vert \leq \Vert \widetilde{\bL}^{-1} - \bL^{-1}\Vert \cdot \Vert \bZ \Vert\leq c_0\cdot 
%\varepsilon_L,
%\end{equation}
%
%The second part can be proved as follows
%\begin{align}
%&\vert \bZ^\top \widetilde{\bC}^{-1}\bZ - \bZ^\top \bC^{-1}\bZ\vert  \leq 
%\Vert \bZ^\top \Vert \cdot \Vert  \widetilde{\bC}^{-1} - \bC^{-1}\Vert  \cdot \Vert \bZ \Vert \\
%&\leq c_0^2\cdot \Vert  \widetilde{\bC}^{-1}\bC - \bC^{-1}\bC\Vert \Vert \bC^{-1}\Vert
%\leq c_0^2\cdot \rho( \widetilde{\bC}^{-1}\bC - \bI) \cdot c_1 \leq c_0^2\cdot c_1 \cdot  \varepsilon
%\end{align}
%%
%
%
The Lemma 3.3. (page 5 in \cite{BallaniKressner}) gives the following result.
Let $\bC\in \mathbb{R}^{n\times n}$, and $\bE:=\bC-\widetilde{\bC}$, $\widetilde{\bC}^{-1}\bE:=\widetilde{\bC}^{-1} \bC- \bI$, and for the spectral radius\\
%\begin{equation}
$\rho(\widetilde{\bC}^{-1} \bE)=\rho(\widetilde{\bC}^{-1}\bC-\bI) <\varepsilon<1$.
%\end{equation}
Then
$$\vert \log \mydet{\bC} - \log \mydet{\widetilde{\bC}}\vert \leq -n \log(1-\varepsilon).$$

%One of the main results of this work is formulated in the following theorem 
%\begin{theorem}
%\label{thm:MainLL3}
%Let $\widetilde{\bC}(\btheta)$ be an $\H$-matrix approximation of matrix $\bC(\btheta)\in \mathbb{R}^{n\times n}$ and $\bz$ be a data vector.
%Assume $\Vert \bz\Vert \leq c_0$, $\Vert \bC^{-1}\Vert \leq c_1$, and the spectral radius $\rho(\widetilde{\bC}^{-1}\bC-\bI) <\varepsilon<1$. 
%Then
%\begin{equation*}
%\vert \widetilde{\LL}(\btheta;k) - \LL(\btheta) \vert \leq  \frac{1}{2}\cdot n \varepsilon + \frac{1}{2}c_0^2\cdot c_1\cdot \varepsilon.
%\end{equation*}
%\end{theorem}

To prove Theorem~\ref{thm:MainLL} we use the result above and
\begin{align*}
\vert \widetilde{\LL}(\btheta;k) - \LL(\btheta) \vert &= \frac{1}{2}\log\frac{\mydet \bC}{\mydet{ \widetilde{\bC}}}-  \frac{1}{2}\vert \bz^\top \left(\bC^{-1} - \widetilde{\bC}^{-1}\right )\bz \vert \\
&\leq -\frac{1}{2}\cdot n \log(1-\varepsilon)-  \frac{1}{2}\vert \bz^\top \left(\bC^{-1}\bC - \widetilde{\bC}^{-1}\bC\right )\bC^{-1}\bz \vert \\
&\leq -\frac{1}{2}\cdot n \log(1-\varepsilon)-  \frac{1}{2}\vert \bz^\top \left(\bI - \widetilde{\bC}^{-1}\bC\right )\bC^{-1}\bz \vert \\
&\leq  \frac{1}{2}\cdot n \varepsilon + \frac{1}{2}\Vert \bz \Vert^2\cdot c_1\cdot \varepsilon,
\end{align*}
where $-\log(1-\varepsilon)\approx \varepsilon$ for small $\varepsilon$.
\begin{rem}
\label{rem:NormZ}
Let $\bZ \sim \mathcal{N}(\mathbf{0},\sigma^2\bI)$.
Theorem 2.2.7 in \cite{Nickl_2015} provides estimates for a norm of a Gaussian vector, $\Vert \bZ\Vert$, for instance for $\sup_{t\in T}\vert Z(t) \vert$,
\begin{equation*}
Pr \left \{ \left \vert \sup_{t\in T}\vert Z(t)  \vert-M \right \vert >u \right \} \leq 2\left (1-\Phi \left (\frac{u}{\sigma} \right) \right)\leq e^{-\frac{u^2}{2\sigma^2}}, \quad M>0.%
\end{equation*}
\end{rem}
\begin{rem}
The assumption $\Vert \bC^{-1} \Vert \leq c_1$ is strong. This estimation depends on the matrix $\bC$, the smoothness parameter $\nu$, the covariance length $\ell$, and the $\H$-matrix rank $k$.  First, we find an appropriate block-decomposition for the covariance matrix $\bC$. Second, we estimate the ranks of $\widetilde{\bC}$. Third, we prove that the inverse/Cholesky can also be approximated in the $\H$-matrix format. Then we estimate the ranks for the inverse $\widetilde{\bC}^{-1}$ and the Cholesky factor $\widetilde{\bL}$. Finally, we estimate the $\H$-matrix approximation accuracies; see \cmmnt{\citep}\cite{HackHMEng}. 
In the worst case, the rank $k$ will be of order $n$. We also note that some covariance matrices are singular, so that $\widetilde{\bC}^{-1}$ and $\widetilde{\bL}$ may not exist. The computation of $\log\mydet{\widetilde{\bC}}$ could be an ill-posed problem, in the sense that small perturbations in $\bC$ result in large changes in
$\log\mydet{\bC}$. 
\end{rem}

Below we estimate the error caused by the usage of $\H$-matrices to generate simulated data.
Let $\widetilde{\bC}=\bC+\varepsilon \bC$ be an $\H$-matrix approximation, and $\bZ=\bL\bW$ be the data generated without $\H$-matrix approximation.
\begin{lemma} 
\label{lem:errZ}
Let  $\widetilde{\bZ}=\widetilde{\bL}\bW$, and $\Vert \widetilde{\bL}-\bL\Vert \leq \varepsilon_L$, where $\varepsilon_L$ tends to zero when the $\H$-matrix ranks are growing. Let $\bZ=\bz$, $\tilde{\bZ}=\tilde{\bz}$, $\bW=\bw$ denote the data realizations. Then  
\begin{equation*}
\Vert \widetilde{\bz}-\bz \Vert \leq \Vert \widetilde{\bL}-\bL\Vert \Vert \bw \Vert \leq \varepsilon_L \Vert \bw \Vert.
\end{equation*}
\end{lemma} 
The error $\varepsilon_L$ depends on the condition number of $\bC$ and can be large, for instance, for ill-conditioned matrices. %Numerical test in this work were run with $\varepsilon_L=\{10^{-6},10^{-5},10^{-4},10^{-3}\}$.
The next lemma estimates the error in the quadratic form by replacing the original data set $\bZ=\bz $ by its $\H$-matrix approximation.
\begin{lemma}
\label{lem:L}
Let $\bL=\widetilde{\bL}+\varepsilon_L \bL$ be an $\H$-matrix approximation of $\bL$, then 
$$\vert \widetilde{\bz}^\top {\bC}^{-1} \widetilde{\bz} - \bz^\top \bC^{-1} \bz\vert \leq 2\varepsilon \vert  \widetilde{\bz}^\top {\bC}^{-1} \widetilde{\bz}  \vert +\mathcal{O}(\varepsilon^2).$$
\end{lemma}
\begin{proof}
Let $\widetilde{\bZ}=\widetilde{\bL}\bW$, $\bW\sim \mathcal{N}(\mathbf{0},\bI)$, and ${\bZ} = {\bL}\bW = (\widetilde{\bL} + \varepsilon_L\widetilde{\bL} )\bW=\widetilde{\bZ}+\varepsilon_L \widetilde{\bZ}$. Let $\bZ=\bz$, and $\tilde{\bZ}=\tilde{\bz}$ denote the data realizations.
Then, after  simple calculations, we obtain
\begin{equation*}
 \vert \widetilde{\bz}^\top {\bC}^{-1} \widetilde{\bz} - (\widetilde{\bz} +\varepsilon_L \widetilde{\bz})^\top \bC^{-1} (\widetilde{\bz} +\varepsilon_L \widetilde{\bz}) \vert \leq  2\varepsilon_L \vert  \widetilde{\bz}^\top {\bC}^{-1} \widetilde{\bz}  \vert + \varepsilon_L^2\vert  \widetilde{\bz}^\top {\bC}^{-1} \widetilde{\bz}  \vert \rightarrow 0,\; \text{if }\; \varepsilon_L \rightarrow 0.
\end{equation*}
\end{proof}
\begin{lemma}
\label{lem:inv}
Let $\bz\in \mathbb{R}^{n}$ and
$\Vert \widetilde{\bC}^{-1} - \bC^{-1} \Vert\leq \varepsilon_c$, then 
$\vert {\bz}^\top \widetilde{\bC}^{-1}{\bz} - \bz^\top \bC^{-1} \bz\vert \leq \Vert \bz\Vert^2 \varepsilon_c$.
\end{lemma}
\begin{proof}
\begin{equation*}
\vert {\bz}^\top \widetilde{\bC}^{-1}{\bz} - \bz^\top \bC^{-1} \bz\vert 
=
\vert {\bz}^\top( \widetilde{\bC}^{-1} - \bC^{-1} )\bz\vert \leq \Vert \bz\Vert^2 \varepsilon_c.
\end{equation*}
\end{proof}
\begin{lemma}
\label{lem:error1}
Let $\bz,\widetilde{\bz} \in \mathbb{R}^{n}$, 
%and $\max\{\Vert \bZ\Vert,\Vert \widetilde{\bZ}\Vert\} \leq c_1\leq \infty$, 
and conditions of Lemmas~\ref{lem:L}-\ref{lem:inv} hold, then
$$\vert \widetilde{\bz}^\top \widetilde{\bC}^{-1} \widetilde{\bz} - \bz^\top \bC^{-1} \bz\vert \leq \mathcal{O}(\varepsilon_L) + \mathcal{O}(\varepsilon_c).$$
\end{lemma}
\begin{proof}
Adding and subtracting  an additional term and applying the previous Lemmas~\ref{lem:L}-\ref{lem:inv}, we obtain 
\begin{align*}
\vert \widetilde{\bz}^\top \widetilde{\bC}^{-1} \widetilde{\bz} - \bz^\top \bC^{-1} \bz\vert &\leq 
\vert \widetilde{\bz}^\top \widetilde{\bC}^{-1} \widetilde{\bz} - \bz^\top \widetilde{\bC}^{-1} \bz\vert + \vert \bz^\top \widetilde{\bC}^{-1} \bz- \bz^\top \bC^{-1} \bz\vert \leq 2\varepsilon_L \vert  \widetilde{\bz}^\top \widetilde{\bC}^{-1} \widetilde{\bz}  \vert  + \Vert \bz \Vert^2 \varepsilon_c .
\end{align*}
\end{proof}
Thus, if the approximations of the matrices $\bC$, $\bC^{-1}$ and $\bL$ are accurate (the error tends to zero relatively fast with increasing the ranks), then the $\H$-matrix approximation error in the quadratic form also tends to zero with increasing $k$ or decreasing $\varepsilon_L$. A more rigorous proof calculating the order of convergence is out of the scope of this paper. Dependence on the condition number is researched in \cite{BebendorfSpEq16, bebendorf2007Why} (only for matrices that origin from elliptic partial differential equations and integral equations).

In Table~\ref{tab:Tnorms} we list the Frobenius and spectral norms of $\widetilde{\bC}$, $\widetilde{\bL}$, and $\widetilde{\bC}^{-1}$ vs. $n$. This table shows that $\Vert \tilde{\bC} \Vert_2$ is growing with $n$ and $\Vert \tilde{\bC}^{-1} \Vert_2$ almost not growing with $n$.
\begin{table}[]
\begin{center}
\caption{Norms of Mat\'ern covariance matrix $\widetilde{\bC}$ and its Cholesky factor $\widetilde{\bL}$, $\ell=0.089$, $\nu=0.22$, 
$\sigma^2=1$, relative $\H$-matrix block-wise accuracy is $\varepsilon=10^{-5}$.}
\begin{tabular}{|c|c|c|c|c|c|c|c|}
\hline
 $n$ & $\Vert \widetilde{\bC} \Vert_F$& $\Vert \widetilde{\bC} \Vert_2$ & $\Vert \widetilde{\bL} \Vert_F$& $\Vert\widetilde{ \bL} \Vert_2$& $\Vert \widetilde{\bC}^{-1} \Vert_2$ \\
 \hline
 $32\cdot 10^3$ & $241$     & $7.70$     & $182$  & $3.2$ & $2.90$\\ \hline
 $64\cdot 10^3$ & $410$     & $13.5$   & $262$  & $4.5$  & $2.97$\\ \hline
 $128\cdot 10^3$ & $739$   & $24.0$      & $379$  & $6.4$ & $3.05$ \\ \hline
 $256\cdot 10^3$ & $1388$   & $46.0$          & $552$        & $9.2$&  $3.08$\\ \hline
\end{tabular}
\label{tab:Tnorms}
\end{center}
\end{table}
\section{Appendix: Admissibility condition and cluster and block-cluster trees}
\label{sec:adm}
The admissibility condition (criteria) is used to divide a given matrix into sub-blocks and to define which sub-blocks can be approximated well by low-rank matrices and which not.  There are many different admissibility criteria (see Fig.~\ref{fig:Hexample_adm}). The typical criteria are: the strong, weak and domain decomposition-based  \cite{HackHMEng}. The user can also develop his own admissibility criteria, which may depend, for example, on the covariance length. It regulates, for instance, block partitioning, the size of the largest block, the depth of the hierarchical block partitioning. Large low-rank blocks are good, but they may require also larger ranks.
Figure~\ref{fig:Hexample_adm} shows three examples of $\H$-matrices with three different admissibility criteria: (left) standard, (middle) domain-decomposition based and (right) weak (HODLR).  Matrices are taken from different applications and illustrate only the diversity of block partitioning.
%The numbers inside sub-blocks show the matrix ranks. The ``stairs'' inside sub-blocks demonstrate the decay of eigenvalues in a log-scale. Dark (or red) blocks indicate fully-populated blocks. The maximal size of dark blocks is regulated by the parameter $\mbox{nmin}$, which usually depends on the available DRAM and cache memory, architecture and the problem size. Parameter $\mbox{nmin}$ regulates how deep the hierarchical subdivision into sub-blocks is and takes typically values from the set $\{32,64,128\}$.

\begin{figure}[htbp!]
\centering
\includegraphics[width=0.28\textwidth]{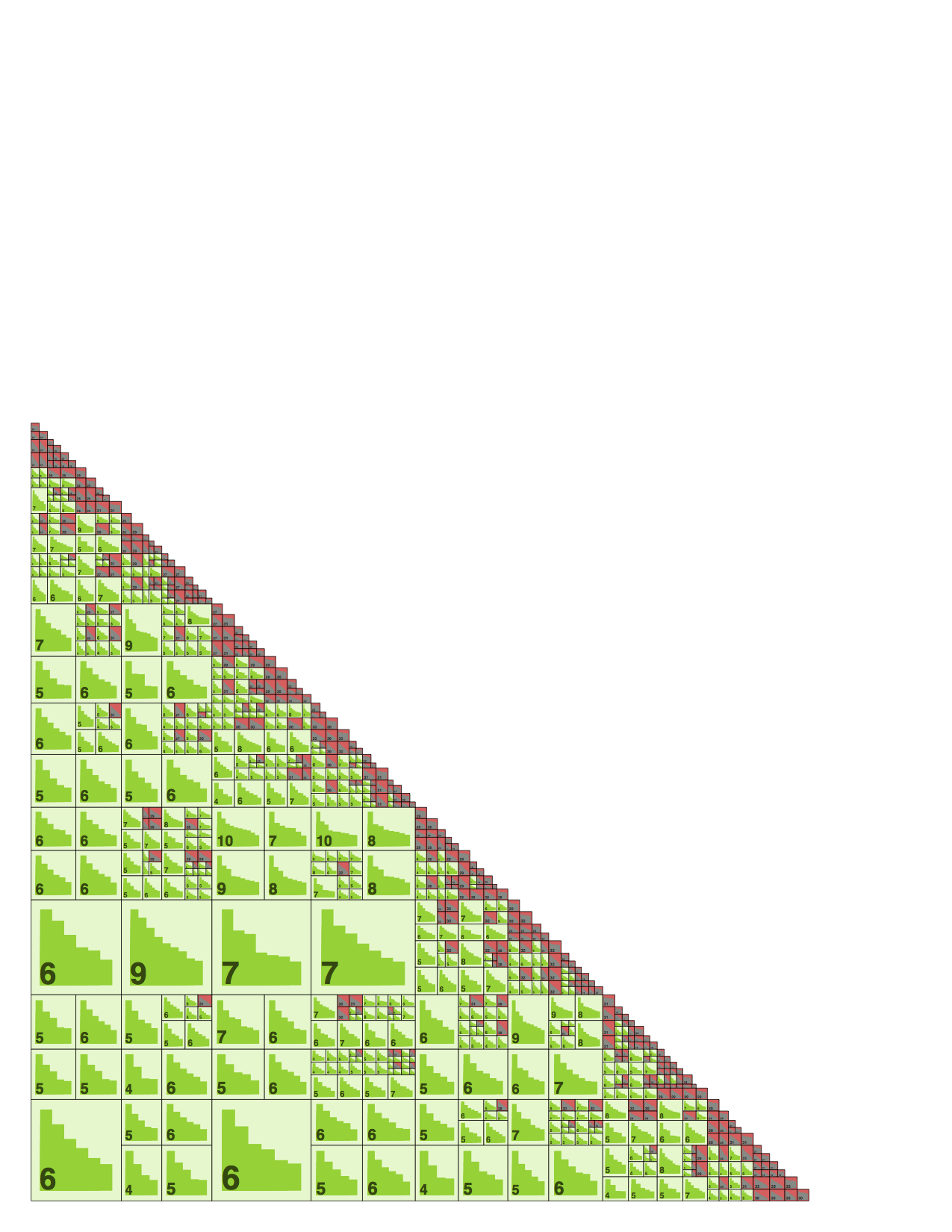}
\includegraphics[width=0.28\textwidth]{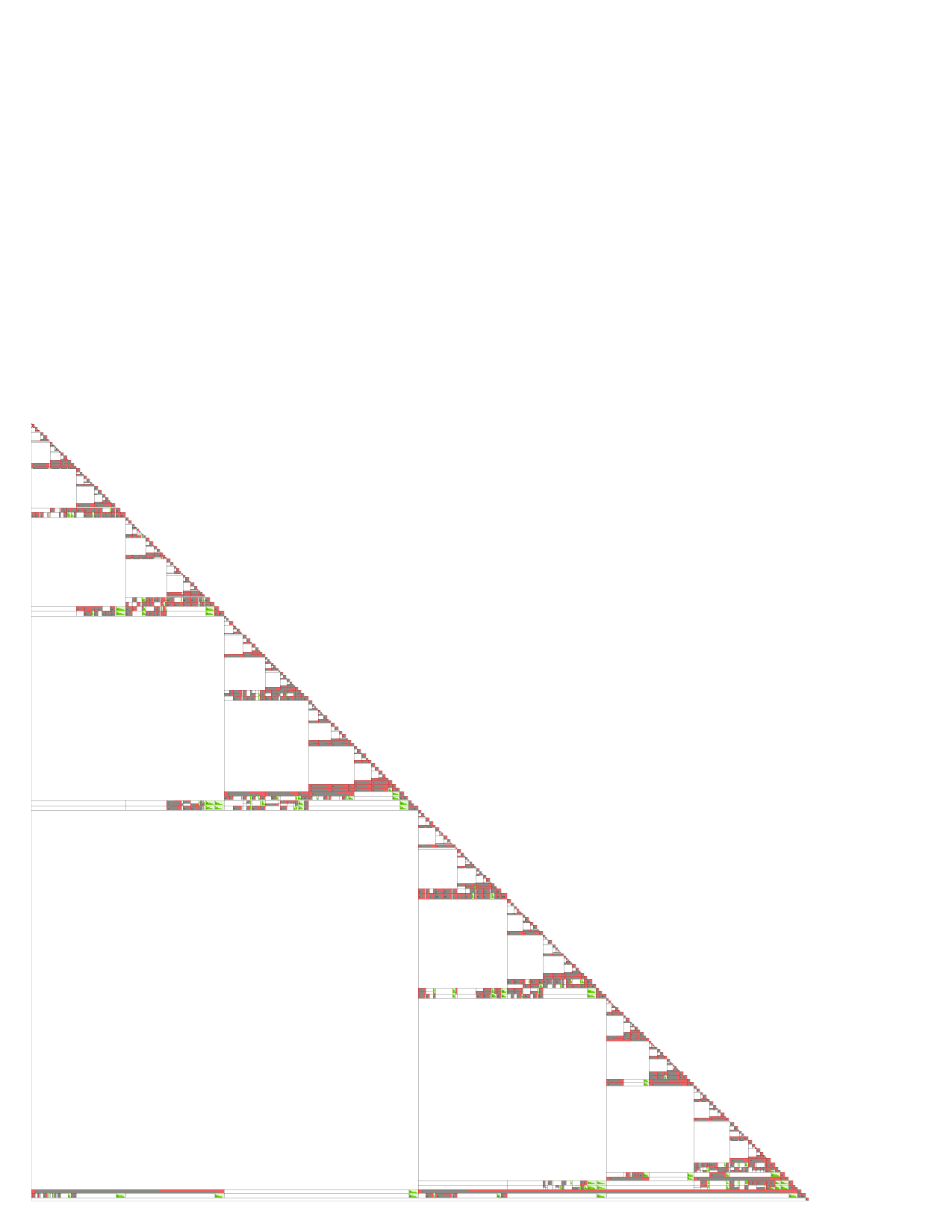}
\includegraphics[width=0.28\textwidth]{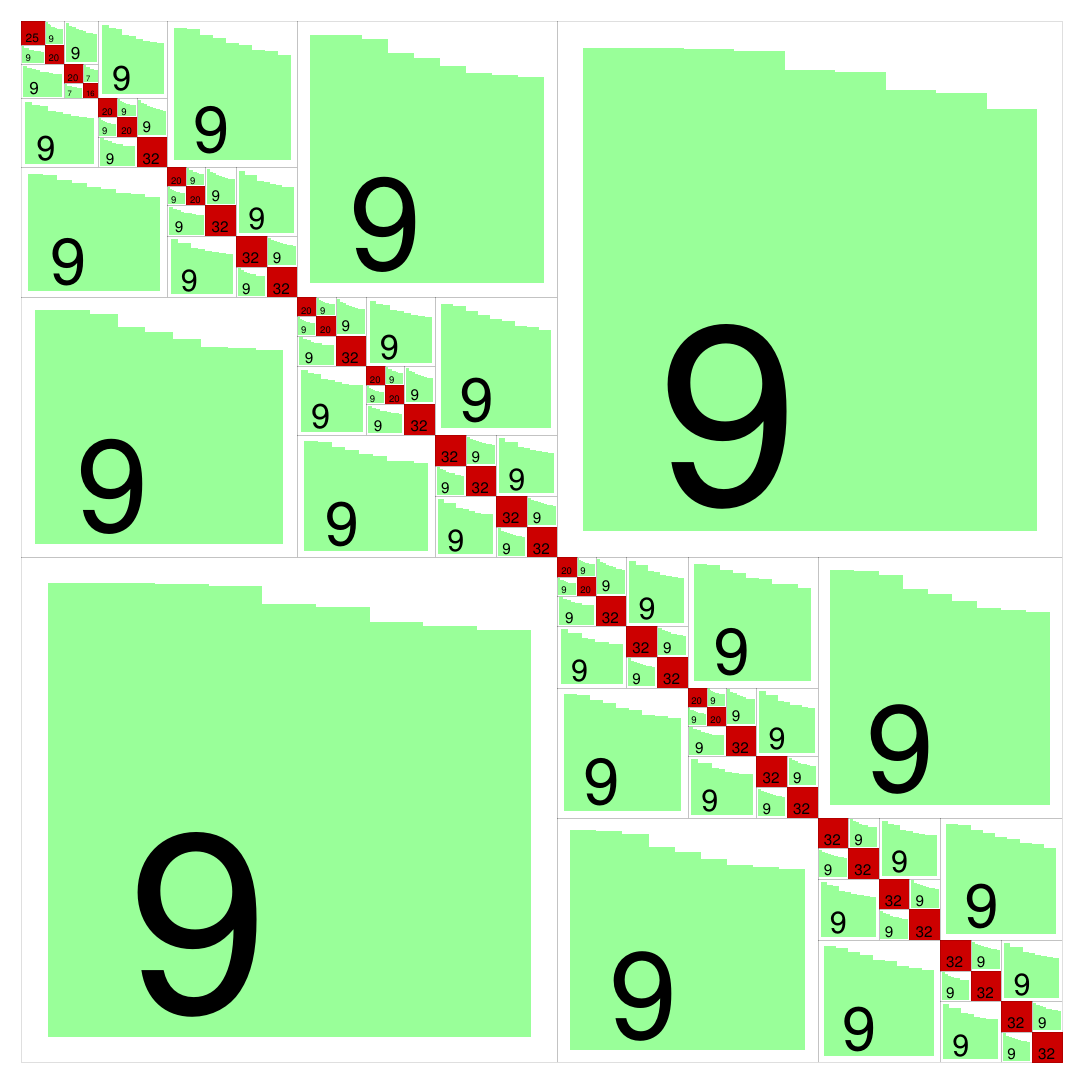}
\caption{Examples of $\H$-matrices with three different admissibility criteria: (left) standard, (middle) domain-decomposition based and (right) weak (HODLR).}
\label{fig:Hexample_adm}
\end{figure}
Figure~\ref{fig:ct_bct} shows examples of cluster trees (1-2) and block cluster trees (3-4). 
\begin{figure}[htbp!]
\centering
\includegraphics[width=0.3\textwidth]{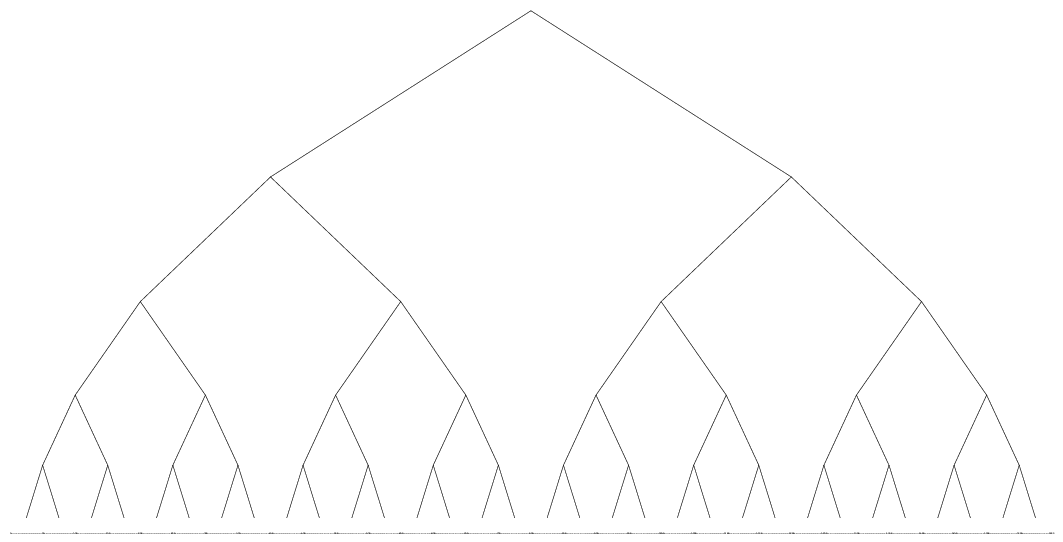}
\includegraphics[width=0.3\textwidth]{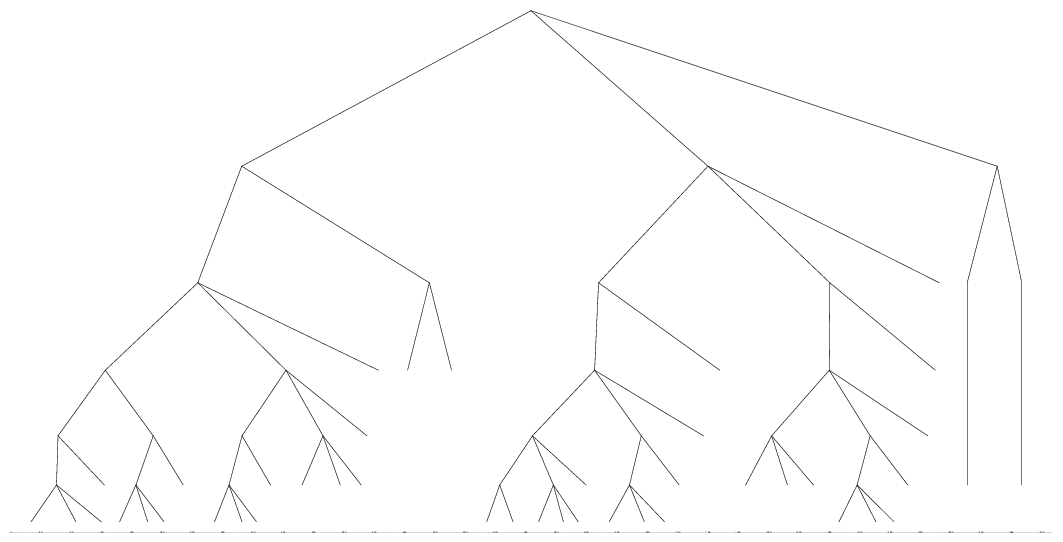}
\includegraphics[width=0.15\textwidth]{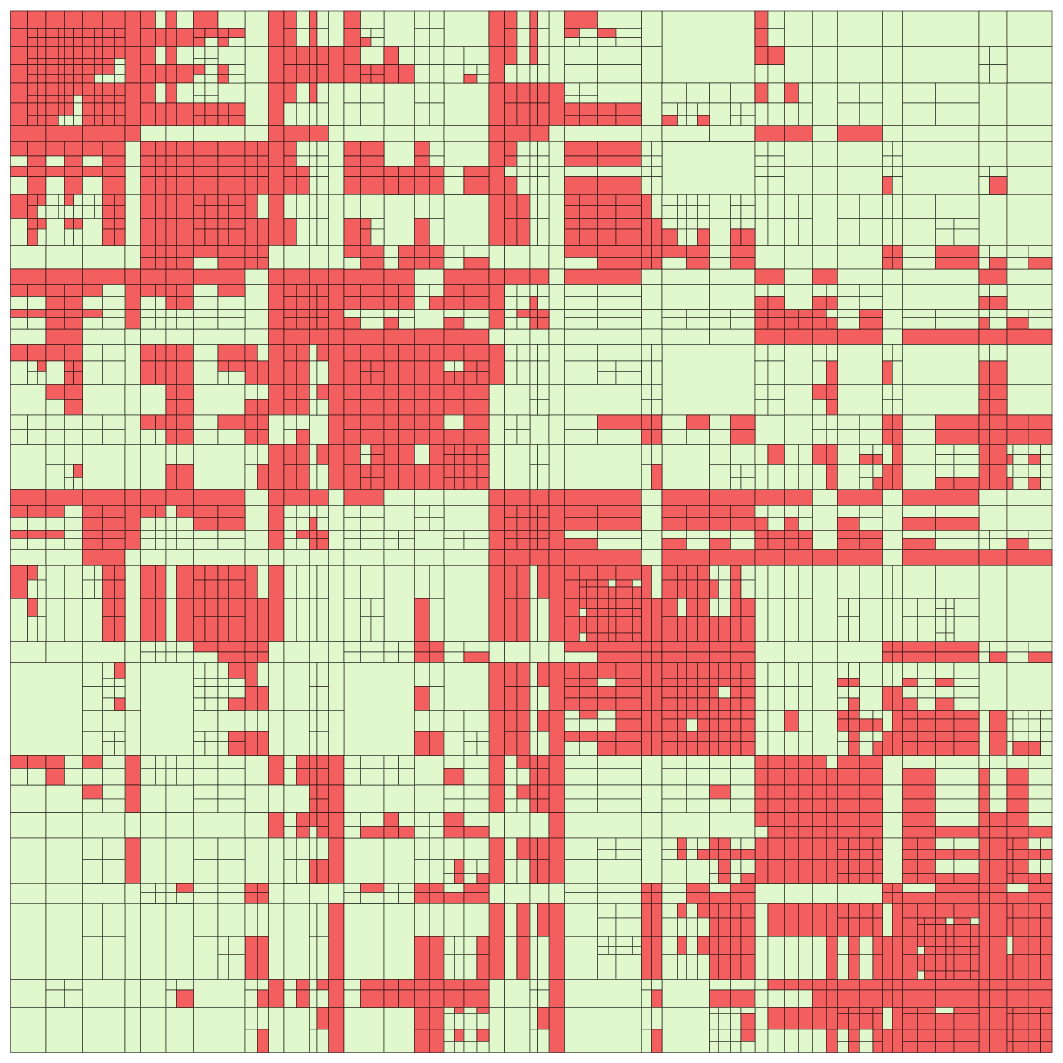}
\includegraphics[width=0.15\textwidth]{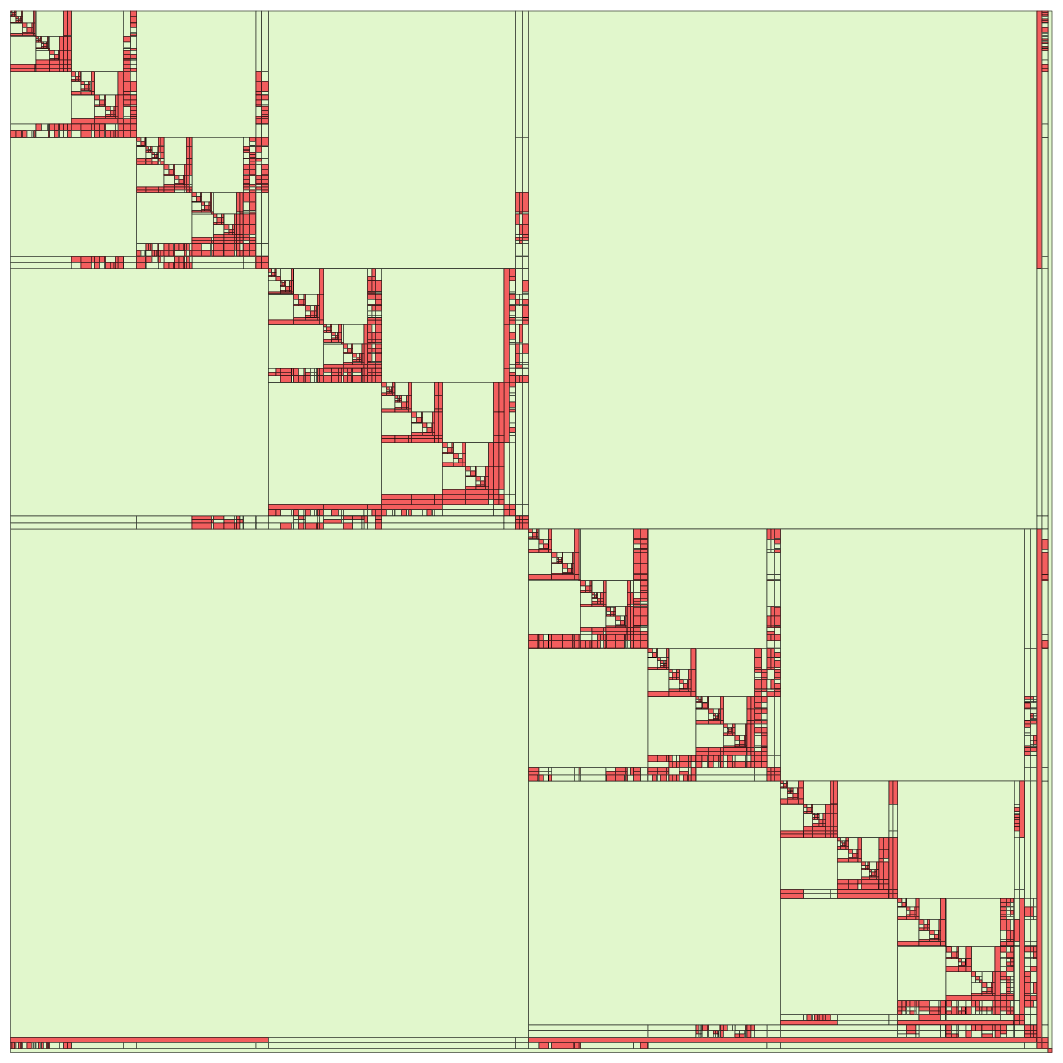}
\caption{(1-2) Two examples of cluster trees: (1st) standard, (2nd) domain-decomposition based; (3) an example of block cluster tree, (4) a block cluster tree, constructed from domain decomposition based cluster tree shown in (2nd).}
\label{fig:ct_bct}
\end{figure}
Let $I$ be an index set of all locations. Denote for each index $i \in I$ corresponding to a basis function $b_i$ (e.g., the ``hat'' function) the support $\mathcal{G}_i := \supp b_i \subset \RR^d$, where $d\in \{1,2,3\}$ is spatial dimension. Now we define two trees which are necessary for the definition of hierarchical matrices. These  trees are labeled trees where the label of a vertex $t$ is denoted by $\hat{t}$.
\begin{defi}(Cluster Tree $T_{I}$)\cite{Part1, GH03}\\
A finite tree $T_I$ is a cluster tree over the index set $I$ if the following conditions hold:
\begin{itemize}
\item $I$ is the root of $T_I$ and a subset $\hat{t} \subseteq I$ holds for all $t \in T_I$. 
\item If $t \in T_I$ is not a leaf, then the set of sons, $\mbox{sons}(t)$, contains disjoint subsets of $I$ and the subset $\hat{t}$ is the disjoint union of its sons, $\hat{t}=\displaystyle{\bigcup_{s \in \mbox{sons}(t)}{\hat{s}}}$.\\
\item If $t \in T_I$ is a leaf, then $\vert \hat{t} \vert \leq n_{min}$ for a fixed number $n_{min}$.
\end{itemize} 
\label{def:ClTr}
\end{defi}
We generalise $\mathcal{G}_i$ to clusters $\tau \in T_I$ by setting $\mathcal{G}_{\tau} := \bigcup_{i\in \tau} \mathcal{G}_i$, 
i.e., $\mathcal{G}_{\tau}$ is the minimal subset of $\RR^d$ that contains the supports of all basis functions $b_i$ with $i \in \tau$.
\begin{defi} (Block Cluster Tree $T_{I\times I}$) \cite{Part1, GH03}\\
\label{def:BLClTree}
Let $T_{I}$ be a cluster tree over the index set $I$. A finite tree $T_{I\times I}$ is a block cluster tree based on $T_I$ if the following conditions hold:
\begin{itemize}
\item $\mbox{root}(T_{I\times I})=I \times I$.
\item Each vertex $b$ of $T_{I\times I}$ has the form $b=(\tau,\sigma)$ with clusters $\tau,\sigma \in T_I$.
\item For each vertex $(\tau,\sigma)$ with $\mbox{sons}(\tau,\sigma)\neq \varnothing$, we have
\begin{equation*}  \label{eq:sons}
\mbox{sons}(\tau,\sigma)=
\left\lbrace
\begin{array}{cl}
{(\tau,\sigma^{'}) : \sigma' \in \mbox{sons}(\sigma)}, &\text{if sons}(\tau)= \varnothing \wedge \text{sons}(\sigma) \neq \varnothing \\ 
{(\tau^{'},\sigma) : \tau^{'} \in \text{sons}(\tau)}, &\text{if sons}(\tau)\neq \varnothing \wedge \text{sons}(\sigma)= \varnothing \\ 
{(\tau',\sigma^{'}) : \tau' \in \text{sons}(\tau), \sigma^{'} \in \text{sons}(\sigma)}, &\text{otherwise}  
\end{array}
\right.
\end{equation*}
\item The label of a vertex $(\tau,\sigma)$ is given by $\widehat{(\tau,\sigma)}=\widehat{\tau} \times \widehat{\sigma} \subseteq I \times I$.
\end{itemize}  
\end{defi}
We can see that $\widehat{root(T_{I \times I})}=I \times I$. This implies that the set of leaves of $T_{I \times I}$ is a partition of $I \times I$.
\begin{defi}
The standard admissibility condition (Adm$_{\eta}$) for two domains $B_{\tau}$ and $B_{\sigma}$ (which actually correspond to two clusters $\tau$ and $\sigma$) is defined as follows
\begin{equation*}
\label{eq:stand_cond}
\min\{\diam(B_{\tau}), \diam(B_{\sigma})\} \leq \eta \dist(B_{\tau} , B_{\sigma}),
\end{equation*}
\end{defi}
where $B_{\tau},\, B_{\sigma} \subset \RR^d$ are axis-parallel bounding boxes of the clusters $\tau$ and $\sigma$ such that $\mathcal{G}_{\tau} \subset B_{\tau}$ and $\mathcal{G}_{\sigma} \subset B_{\sigma}$. $\diam$ and $\dist$ are usual diameter and distance, by default $\eta=  2.0$.
%
%
%We generated additional plots to the numerical experiment from Sect.~\ref{sec:MC}. 
%In Figure~\ref{fig:50eps} we estimated the parameters $\ell$, $\nu$, and $\sigma^2$ from $\{4, 8, 64\}\times 1{,}000 $ observations respectively. 
%In Figures~\ref{fig:ell4}-\ref{fig:sigma4} only $4{,}000$ random observations were used, in \ref{fig:ell8}-\ref{fig:sigma8} another random $8{,}000$ and, finally, %in \ref{fig:ell64}-\ref{fig:sigma64} again another random  $64{,}000$. These data sets are not nested.
%The numerical experiments were done for multiple independent replicates (each green curve corresponds to one replicate) obtained from simulated data. The true values are $(\ell^*,\nu^*,\sigma^*)=(0.7,0.9,1.0)$ and are denoted by a dotted red line. All experiments are for various numbers of observations and for various $\mathcal{H}$-matrix accuracies $\varepsilon\in \{10^{-8},\ldots,10^{-3}\}$ shown along the $x$-axes. 
%Taking a larger $\varepsilon$, for example, $10^{-2}$, may result in a crash of the $\H$-Cholesky procedure.
%One can see that for fixed $n$ the green curves, corresponding to $\varepsilon \in \{4\cdot 10^{-8},\ldots,10^{-3}\}$, are almost parallel to the dotted line. That means that no improvement with increasing the $\H$-matrix accuracy was achieved.
\section{Appendix: Accuracy Stability Plots}
\label{app:B}
\begin{figure}[htbp!]
%    \centering
%    \begin{subfigure}[b]{0.33\textwidth}
%     \centering
%        \caption{}
%       \includegraphics[width=0.99\textwidth]{../../figsApril2018/18Aug2018_graphs_ell_4K_scaled3.pdf}\vspace{-.1cm}
%        \label{fig:ell4}
%    \end{subfigure}
%   \begin{subfigure}[b]{0.32\textwidth}
%     \centering
%        \caption{}
%       \includegraphics[width=0.99\textwidth]{../../figsApril2018/18Aug2018_graphs_nu_4K_scaled3.pdf}\vspace{-.1cm}
%        \label{fig:nu4}
%    \end{subfigure}
%   \begin{subfigure}[b]{0.32\textwidth}
%     \centering
%        \caption{}
%      \includegraphics[width=0.99\textwidth]{../../figsApril2018/18Aug2018_graphs_sigma2_4K_scaled3.pdf}\vspace{-.1cm}
%        \label{fig:sigma4}
%    \end{subfigure}
%    \begin{subfigure}[b]{0.33\textwidth}
%     \centering
%        \caption{}
%       \includegraphics[width=0.99\textwidth]{../../figsApril2018/18Aug2018_graphs_ell_8K_scaled3.pdf}\vspace{-.1cm}
%        \label{fig:ell8}
%    \end{subfigure}
%   \begin{subfigure}[b]{0.32\textwidth}
%     \centering
%        \caption{}
%       \includegraphics[width=0.99\textwidth]{../../figsApril2018/18Aug2018_graphs_nu_8K_scaled3.pdf}\vspace{-.1cm}
%        \label{fig:nu8}
%    \end{subfigure}
%   \begin{subfigure}[b]{0.32\textwidth}
%     \centering
%        \caption{}
%      \includegraphics[width=0.99\textwidth]{../../figsApril2018/18Aug2018_graphs_sigma2_8K_scaled3.pdf}\vspace{-.1cm}
%        \label{fig:sigma8}
%    \end{subfigure}
    \begin{subfigure}[b]{0.33\textwidth}
     \centering
        \caption{}
       \includegraphics[width=0.99\textwidth]{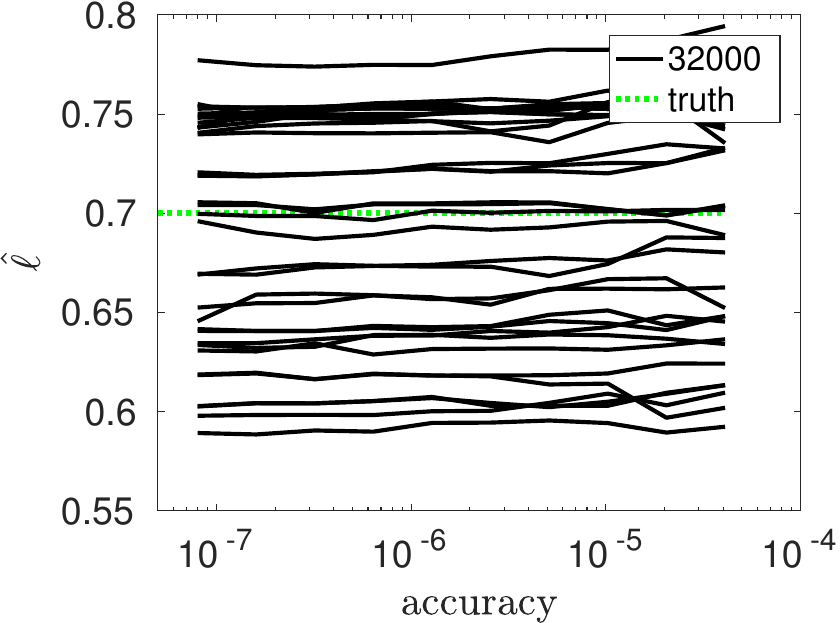}\vspace{-.1cm}
        \label{fig:ell64}
    \end{subfigure}
   \begin{subfigure}[b]{0.32\textwidth}
     \centering
        \caption{}
           \includegraphics[width=0.99\textwidth]{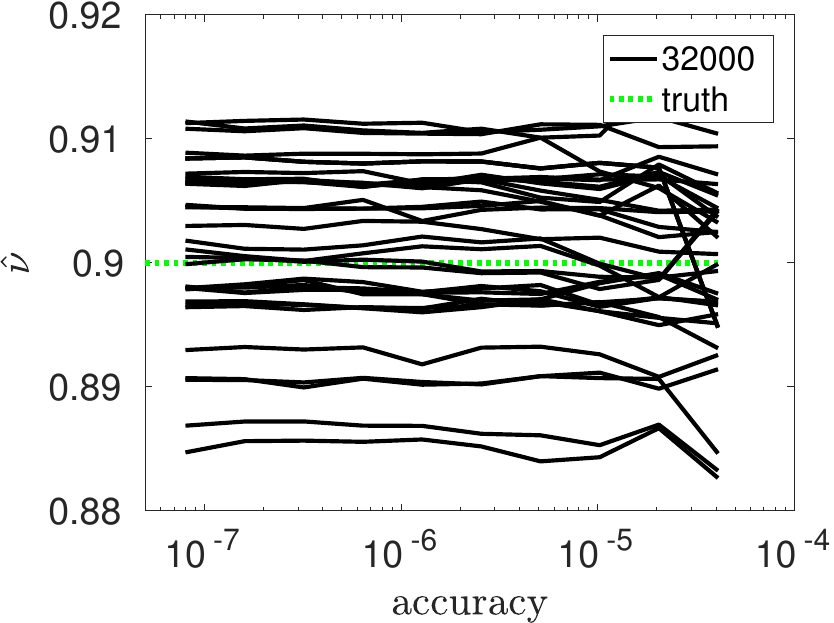}\vspace{-.1cm}
        \label{fig:nu64}
    \end{subfigure}
   \begin{subfigure}[b]{0.32\textwidth}
     \centering
         \caption{}
            \includegraphics[width=0.99\textwidth]{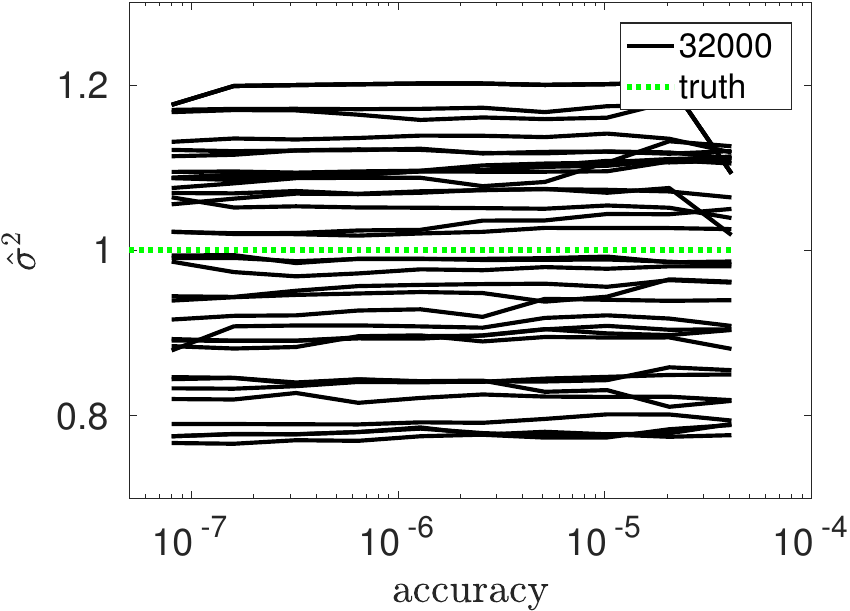}\vspace{-.1cm}
        \label{fig:sigma64}
    \end{subfigure}
\caption{Estimated parameters as a function of the accuracy $\varepsilon$, based on 30 replicates (black solid curves) with $n=64{,}000$ observations. True parameters $\btheta=(\ell,\nu,\sigma^2)=(0.7,0.9.1.0)$ represented by the green doted lines.
%Comparison of 30 replicates (green solid lines), computed with $n=\{4000, 8000, 64000\}$ observations, with the truth (red dotted line). 
Replicates on (a) identify $\hat{\ell}$; on (b) identify $\hat{\nu}$; and on (c) identify $\hat{\sigma}^2$.}
\label{fig:50eps}
%matlab script my_errorbox_6Mai18_vs_eps_50.m
% viz_boxplots_18Aug2018B.m calls my_errorbox_18Aug18_vs_eps_50.m, my_errorbox_18Aug18_vs_eps_50_final.m
\end{figure}
%
%\textcolor{black}{We observe some improvement in estimation of $\nu$ with increasing $n$. For instance, $\hat{\nu}$, for $n=4{,}000$, varies in the interval $(0.86, 0.92)$ (see Fig.~\ref{fig:nu4}).
%Increasing the data set to $8{,}000$ observations results in $\hat{\nu}\in(0.875, 0.92)$ (see Fig.~\ref{fig:ell8}), and to $64{,}000$ observations results in $\hat{\ell}\in(0.881, 0.915)$.}

%
\end{appendices}
\end{document}